\tikzstyle{a} = [rectangle, minimum width=3cm, minimum height=1cm, text centered, draw=black, fill=red!30]
\tikzstyle{b} = [rectangle, minimum width=3cm, minimum height=1cm, text centered, draw=black, fill=blue!30]
\tikzstyle{c} = [rectangle, minimum width=3cm, minimum height=1cm, text centered, draw=black, fill=green!30]
\tikzstyle{d} = [rectangle, minimum width=3cm, minimum height=1cm, text centered, draw=black, fill=orange!30]
\tikzstyle{e} = [rectangle, minimum width=3cm, minimum height=1cm, text centered, draw=black, fill=yellow!30]
\tikzstyle{f} = [rectangle, minimum width=3cm, minimum height=1cm, text centered, draw=black, fill=gray!30]
\tikzstyle{arrow} = [line width=3pt, ->,>=triangle 45]
\DeclareMathOperator{\arccosh}{arcCosh}
\numberwithin{equation}{section}
\newtheorem{Theorem}{Theorem}[section]
\theoremstyle{definition}
\newtheorem{Example}[Theorem]{Example}
\begin{document}

\allowdisplaybreaks

\renewcommand{\thefootnote}{$\star$}

\newcommand{\arXivNumber}{1505.00527}

\renewcommand{\PaperNumber}{088}

\FirstPageHeading

\ShortArticleName{Examples of Complete Solvability of 2D Classical Superintegrable Systems}

\ArticleName{Examples of Complete Solvability \\ of 2D Classical Superintegrable Systems\footnote{This paper is a~contribution to the Special Issue
on Analytical Mechanics and Dif\/ferential Geometry in honour of Sergio Benenti.
The full collection is available at \href{http://www.emis.de/journals/SIGMA/Benenti.html}{http://www.emis.de/journals/SIGMA/Benenti.html}}}

\Author{Yuxuan~{CHEN}~$^\dag$, Ernie G.~{KALNINS}~$^\ddag$, Qiushi {LI}~$^\dag$ and Willard {MILLER~Jr.}~$^\dag$}

\AuthorNameForHeading{Y.~Chen, E.G.~Kalnins, Q.~Li and W.~Miller~Jr.}

\Address{$^\dag$~School of Mathematics, University of Minnesota, Minneapolis, Minnesota, 55455, USA}
\EmailD{\href{mailto:yc397@cam.ac.uk}{yc397@cam.ac.uk}, \href{mailto:lixx0939@umn.edu}{lixx0939@umn.edu}, \href{mailto:miller@ima.umn.edu}{miller@ima.umn.edu}}

\Address{$^\ddag$~Department of Mathematics, University of Waikato, Hamilton, New Zealand}
\EmailD{\href{mailto:math0236@waikato.ac.nz}{math0236@waikato.ac.nz}}

\ArticleDates{Received May 05, 2015, in f\/inal form October 27, 2015; Published online November 03, 2015}

\Abstract{Classical (maximal) superintegrable systems in $n$ dimensions are Hamiltonian systems with $2n-1$ independent constants of the motion,
globally def\/ined, the maximum number possible.
They are very special because they can be solved algebraically.
In this paper we show explicitly, mostly through examples of 2nd order superintegrable systems in 2~dimensions, how the trajectories can be determined in detail
using  rather elementary algebraic, geometric  and analytic methods applied to the closed quadratic algebra of symmetries of the system,
without resorting to separation of variables techniques
or trying to integrate Hamilton's equations. We treat a family of 2nd order  degenerate systems: oscillator analogies on Darboux, nonzero constant curvature, and f\/lat spaces,
related to one another via contractions,  and obeying Kepler's laws. Then we treat two 2nd order nonde\-ge\-ne\-ra\-te systems, an analogy of a caged Coulomb problem on the 2-sphere and
its contraction to a~Euclidean space caged Coulomb problem. In all cases the symmetry algebra structure provides detailed
information about the trajectories, some
of which are rather complicated. An interesting example is the occurrence of ``metronome orbits'', trajectories conf\/ined to an arc rather than a loop,
which are indicated  clearly from the structure equations but might be overlooked using more traditional  methods. We also treat
the Post--Winternitz system, an example of a classical 4th order superintegrable system that cannot be
solved using separation of variables.
 Finally we treat a superintegrable system, related to the addition theorem for elliptic functions,
whose constants of the motion are only rational in the momenta. It is a system of special interest because its constants of the motion generate a
closed polynomial algebra. This paper contains many new results but we have tried to present most of the materials in a fashion that is easily
accessible to nonexperts, in order to provide  entr\'ee to superintegrablity theory.}

\Keywords{superintegrable systems; classical trajectories}

\Classification{20C99; 20C35; 22E70}

\tableofcontents

\renewcommand{\thefootnote}{\arabic{footnote}}
\setcounter{footnote}{0}

\section{Introduction}

Classical and quantum mechanical Hamiltonian systems that can be solved explicitly, both algebraically and
analytically,  and with adjustable parameters, are relatively rare and highly prized. Famous classical examples are the anharmonic oscillator (Lissajous patterns)  and Kepler systems (planetary orbits) and the Hohmann transfer for orbital navigation~\cite{Curtis}.
 Famous quantum examples are the  Coulomb system (energy levels of the hydrogen atom,
leading to the periodic table of the elements), and the quantum isotropic oscillator. The solvability of these systems is related to their
symmetry, not necessarily group symmetry. This higher order symmetry  is captured by the concept of superintegrability.
A natural classical Hamiltonian system (with the Hamiltonian as kinetic energy plus potential energy) on an $n$-dimensional Riemannian space
is said to be (maximally) superintegrable if it admits the maximally possible $2n-1$ functionally independent constants of the motion globally def\/ined (usually required to be polynomial or at least rational
in the momenta). Similarly a quantum Hamiltonian system $H=\Delta_n+V$ (where $\Delta_n$ is a Laplace--Beltrami operator on an $n$-dimensional
Riemannian manifold and $V$ is a potential function) is (maximally) superintegrable if it admits $2n-1$ algebraically independent partial dif\/ferential operators
commuting with $H$. There is a rapidly growing literature concerning these systems, e.g.,
\cite{BEHRR,BEHRR2,ballesteros2013anisotropic,CDR2008,Dask2001,EVA,EvansVerrier2008,FORDY,Zhedanov1992b,
KKM10,VILE,MSVW,Marquette2012,MayrandVinet1,MPW,TempTW,SCQS,TTW1,tsiganov2008maximally}.
In this paper  we consider only classical systems and $n=2$ so all of our systems admit 3 independent constants of the motion.
By taking Poisson brackets  of the classical constants of the motion we generate a  symmetry  algebra, not necessarily a Lie algebra, which is never
abelian.  (This contrasts with integrable Hamiltonian systems which admit $n$ constants of the motion in involution, so that the
symmetry algebra is always abelian.) It is this nonabelian symmetry algebra and its structure that is responsible for
the solvability of superintegrable systems.

To be more explicit, along a specif\/ic classical trajectory each constant of the motion ${\cal L}_j$ takes a f\/ixed value $\ell_j$,
so the trajectory can be characterized as the intersection of
$2n-1$ hypersurfaces ${\cal L}_j=\ell_j$ in the $2n$-dimensional phase space. Thus in principle the path of the trajectory can be
determined algebraically, though not how it is traced out in time. Since the Poisson bracket of two constants of the motion is again a constant of the motion, the nonabelian symmetry
algebra gives us relationships between the symmetries. In the quantum case each bound state eigenspace of the Hamiltonian is invariant under
the action of the symmetry algebra, so that a~knowledge of the irreducible representations of the symmetry algebra gives useful information about the dimensions of the eigenspaces and
of the eigenvalues themselves.

In this paper we illustrate the value of superintegrabilty by studying and solving several families of classical superintegrable systems. Second degree
superintegrable systems
are those whose generating symmetries are all polynomials in the momenta of degree $\le 2$. All of these systems have been classif\/ied for 2 dimensions (as have the systems with nondegenerate potentials in 3 dimensions)~\cite{KC,KKM20061, KKMP}.
They occur on constant curvature spaces
(with 3 Killing vectors, admitting the most symmetries), on the
4 Darboux spaces (admitting 1 Killing vector) and 6~families of Koenigs spaces (admitting no Killing vectors)~\cite{Koenigs, MPW}. Thus,
after the constant curvature spaces, the Darboux spaces admit the
most symmetries.
  The Darboux metrics an be written as
\begin{alignat*}{5}
&D1\colon \quad &&   ds^2=4x\big(dx^2+dy^2\big),\qquad &&  D2\colon \quad && ds^2=\frac{x^2+1}{x^2}\big(dx^2+dy^2\big),&\\
 & D3\colon \quad && ds^2=\frac{e^x+1}{e^{2x}}\big(dx^2+dy^2\big),\qquad && D4(b)\colon \quad && ds^2= \frac{2\cos 2x+b}{\sin^2 2x}\big(dx^2+dy^2\big),&
 \end{alignat*}
An example of a Koenigs space is
\begin{gather*}
 ds^2=\left(\frac{c_1}{x^2+y^2}+\frac{c_2}{x^2}+\frac{c_3}{y^2}+c_4\right)\big(dx^2+dy^2\big).
 \end{gather*}

We  f\/irst treat some analogs of the harmonic oscillator.
(A~treatment of the Kepler system and its analog on the 2-sphere from this point of view
is contained in Chapter~3 of the  article~\cite{MPW}.) We start by studying  a system on the Darboux space $D4(b)$
with a 1-parameter potential. Then by taking limits (contractions~\cite{Wigner, KMP2013}) of this system we
obtain systems on the Darboux space~$D3$, on the Poincar\'e upper half plane, the 2-sphere (the Higgs oscillator~\cite{Higgs}), and
f\/inally the isotropic oscillator on Euclidean space.

The second class of examples are nondegenerate (3-parameter potential) systems. The f\/irst, $S7$ in our listing~\cite{KKMP}, can be regarded as a caged
version of an analog to the
Coulomb potential on the 2-sphere. It contracts to the caged Coulomb system $E16$ in Euclidean space.

The preceding examples can also be studied analytically via separation of variables. However, the  Post--Winternitz system~\cite{PW2011} cannot,
see also~\cite{MPY}.  It is  4th
degree superintegrable but
nonseparable. However, we show that the classical orbits can be found exactly via superintegrabilty.

The last examples are dif\/ferent. They are separable in elliptic coordinates and can be derived via an action-angle construction.
The usual action-angle construction of a superintegrable
and separable system requires the addition theorem for trigonometric or hyperbolic functions and leads to polynomial superintegrability, e.g.,~\cite{KKM10}.
The construction here uses the addition formula for elliptic functions~\cite{DLMF}.
It leads to  classical  systems where some of the constants are nontrivially rational in the canonical momenta.
We present two examples
where, however, the systems
 have  polynomial symmetry
algebras, so we consider them as superintegrable and worthy of study. The new feature here is that the symmetry algebra closes polynomially,
even though the system is only rationally superintegrable. These are the f\/irst examples of such behavior known to us.
Again we can determine the trajectories exactly.

With the exception of the elliptic  systems, all of the superintegrable systems in this paper have been derived and  classif\/ied before;
we take the structure equations as given
and show how superintegrability alone leads to formulas for the trajectories. These formulas and their analysis are new. The elliptic systems have not been found elsewhere to our
knowledge, so we demonstrate the procedure to
derive them.

\looseness=1
This paper is partly pedagogical (the Introduction and the Appendix) but mostly new research (Sections~\ref{section2}--\ref{section5}). In the Appendix we give a brief review of fundamental def\/initions and results from  classical Hamiltonian mechanics, adapted from \cite{MPW}, needed as background for our computations. The point here is to illustrate how, using the structure equations of superintegrable systems alone, we can derive
and classify the trajectories via algebra. Hamilton's equations are used only to determine the
periods of orbits for systems with degenerate potential. Separation of variable techniques are not used, except in the last section; our approach is easy to understand
geometrically. Analogously the spectra of the corresponding quantum systems can be obtained algebraically from the structure relations,
though we do not treat this here.
Some recent papers, e.g.,~\cite{Negro, Ragnisco}, adopt a related but dif\/ferent approach by using ladder operators constructed from the
structure algebra to compute trajectories and spectra for systems with degenerate potential. We know of no prior treatments of the nondegenerate
caged systems~$S7$ and~$E16$, or of the use of the structure algebra to call attention to special orbits, such as those for which~${\cal R}^2=0$.
We point out the
contractions that relate our various superintegrable systems.

\section{Examples of 2D 2nd degree degenerate systems}\label{section2}

Our f\/irst examples are degenerate systems. These  have 1-parameter potentials and always admit a symmetry that is a 1st degree polynomial in the momenta, hence a group symmetry that can be interpreted as  invariance with respect to rotation or translation corresponds to a constant of the motion which leads to an analog of Kepler's 2nd Law.
The only possibilities in 2~dimensions are constant curvature and Darboux spaces~\cite{KKMW}. In~\cite{MPW} there is an example of an analog
of the Kepler problem on the 2-sphere that satisf\/ies Kepler's three laws and then by a limiting process (a contraction)  goes to the Kepler system
in Euclidean space. Here we will treat an analog of the harmonic oscillator on the Darboux space~$D4(b)$ and show that it obeys analogs of Kepler's laws.
Then by taking contractions to superintegrable systems on the Darboux space~$D3$, on the Poincar\'e upper half plane
(equivalent to a system on
a hyperboloid in Euclidean space) and, f\/inally, to the isotropic oscillator system in Euclidean space,
 we will see that using ideas from superintegrability theory alone we can understand the basic properties of these systems:
 conservations laws, explicit trajectories, etc., and how they are related. Fig.~\ref{flowchart} describes the contraction relationships that we
 will exploit.  An important feature of degenerate systems is that they always admit~4 linearly independent symmetries that are 2nd degree in the momenta,
 whereas only 3 can be functionally independent. Thus there must be a relation between these symmetries. This relation emerges from the structure
 algebra obeyed by the symmetries.
 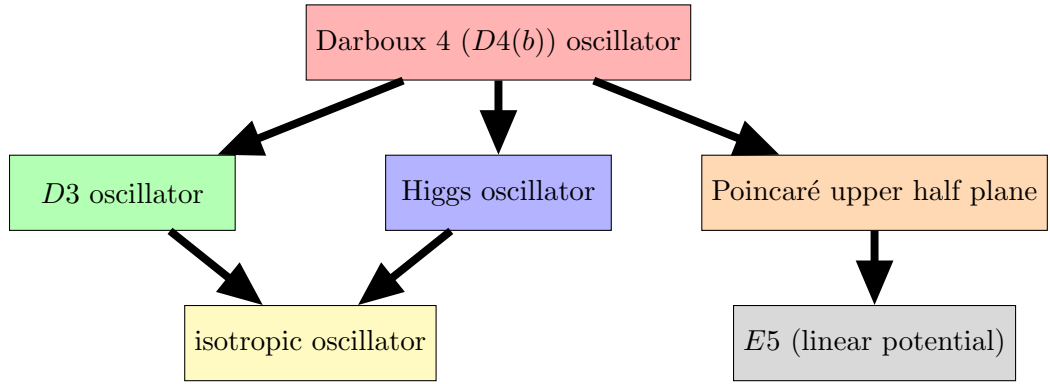
\begin{figure}[t!]
\centering
\begin{tikzpicture}[node distance=2cm]
\node (1) [a] {Darboux 4 ($D4(b)$) oscillator};
\node (3) [b, below of=1] {Higgs oscillator};
\node (2) [c, left of=3, xshift=-3cm] {$D3$ oscillator};
\node(4) [d, right of=3 ,xshift=3cm] {Poincar\'e upper half plane};
\node(5) [e, below of =2, xshift=2.5cm] {isotropic oscillator};
\node(6) [f, below of=4,] {$E5$ (linear potential)};
\draw [arrow] (1) -- (3);
\draw [arrow] (1) -- (2);
\draw [arrow] (1) -- (4);
\draw [arrow] (2) -- (5);
\draw [arrow] (3) -- (5);
\draw [arrow] (4) -- (6);
\end{tikzpicture}
\caption{A diagram showing the contractions of oscillators.}\label{flowchart}
\end{figure}

\subsection[The $D4(b)$ oscillator]{The $\boldsymbol{D4(b)}$ oscillator}\label{sec:D4}

We consider the superintegrable system
\begin{gather}\label{D4bham}
 {\cal H} =\frac{\sinh^2(2x)}{2\cosh(2x)+b}\big(p_x^2+p_y^2\big)+\frac{\alpha}{2\cosh(2x)+b},
 \end{gather}
with a basis of constants of the motion, ${\cal H}$, ${\cal J}=p_y$, and
\begin{gather}
{\cal Y}_1=-\cos(2y)\left(\frac{\sinh^2(2x)}{2\cosh(2x)+b}\big(p_x^2+p_y^2\big)-\cosh(2x)p_y^2\right)\nonumber\\
\hphantom{{\cal Y}_1=}{} -\sin(2y)\sinh(2x)p_xp_y -\frac{\alpha\cos(2y)}{4\cosh^2(x)-2+b}, \label{D4Y1}\\
 {\cal Y}_2 =-\sin(2y)\left(\frac{\sinh^2(2x)}{2\cosh(2x)+b}\big(p_x^2+p_y^2\big)-\cosh(2x)p_y^2\right)\nonumber\\
 \hphantom{{\cal Y}_2 =}{} +\cos(2y)\sinh(2x)p_xp_y -\frac{\alpha\sin(2y)}{4\cosh^2(x)-2+b},\label{D4Y2}
\end{gather}
Here the spaces $D4$ are indexed by a parameter $b>-2$. (In the limit as $b\to -2$ the space becomes a 2-sphere and this system becomes
the Higgs oscillator~\cite{Higgs}.) The variable $y$ can be interpreted as an angle, and the space and potential are periodic in $y$ with period~$\pi$.
The constants of the motion generate an algebra under the Poisson bracket obeying the structure equations
\begin{gather}
\label{D4strusture1}   \{{\cal J},{\cal H}\}=\{{\cal Y}_1,{\cal H}\}=\{{\cal Y}_2,{\cal H}\}=0,\\
\label{D4strusture2}   \{{\cal J},{\cal Y}_1\}=-2{\cal Y}_2,\qquad    \{{\cal J},{\cal Y}_2\}=2{\cal Y}_1,\qquad
\{{\cal Y}_1,{\cal Y}_2\}=4{\cal J}^3+2b{\cal J}{\cal H}-\alpha{\cal J},
\end{gather}
with Casimir
\begin{gather}\label{D4bCasimir} {\cal Y}_1^2+{\cal Y}_2^2={\cal J}^4+{\cal H}^2+b{\cal J}^2{\cal H}-\alpha{\cal J}^2.
\end{gather}
Note that there are 4 linearly independent symmetries ${\cal J}^2$, ${\cal H}$,  ${\cal Y}_1$,  ${\cal Y}_2$ as polynomials in the momenta, but there can be only 3 functionally independent generators. The dependence relation is given by~(\ref{D4bCasimir}).

From the f\/irst two equations (\ref{D4strusture2}) we see that
 $({\cal Y}_1, {\cal Y}_2)$ transforms like a 2-vector under rotations about the 3-axis. The sum  ${\cal Y}_1^2+{\cal Y}_2^2$ is expressed in terms of constants of motion so the sum is constant. We set ${\cal Y}_1^2+{\cal Y}_2^2\equiv \kappa^2$ where $\kappa$ is the length of the 2-vector. Thus we can  choose a preferred coordinate system such that ${\cal Y}_1=\kappa$ and ${\cal Y}_2=0$.

To determine a  trajectory, we need to express $x$ and $y$ in terms of the constants of the motion along the trajectory.
We do this by eliminating the momenta from equations  (\ref{D4bham})--(\ref{D4bCasimir}).  We see that ${\cal Y}_1$ and ${\cal Y}_2$
can be expressed in an alternate form:
\begin{gather}
  {\cal Y}_1=-\cos(2y){\cal H}+\cos(2y)\cosh(2x)p_y^2-\sin(2y)\sinh(2x)p_x p_y, \label{Y1} \\
  {\cal Y}_2=-\sin(2y){\cal H}+\sin(2y)\cosh(2x)p_y^2+\cos(2y)\sinh(2x)p_x p_y.  \label{Y2}
\end{gather}
To eliminate $p_x$, we multiply equation (\ref{Y1}) by $\cos(2y)$, equation~(\ref{Y2}) by~$\sin(2y)$ respectively, and add them together,
which gives us ${\cal Y}_1\cos(2y)+{\cal Y}_2\sin(2y)=-{\cal H}+\cosh(2x)p_y^2$.
Using the facts that ${\cal Y}_1=\kappa$, ${\cal Y}_2=0$ and ${\cal J}=p_y$, we get the orbit equation,
\begin{gather}
\cosh(2x){\cal J}^2-\cos(2y)\kappa={\cal H}.  \label{eq:D4orbit}
\end{gather}

Since $\cosh(2x)$ is an even function in $x$,  there will be two orbits for positive~$x$ and negative~$x$ that are symmetric with respect to $x$-axis.
Therefore,  without loss of generality, we restrict ourselves to positive~$x$.

{\bf Analog of Kepler's second law of planetary motion.} We exploit the fact that there is a 1st degree constant of the motion. We consider
an orbital  motion as taking place in the $s_1-s_2$    plane with polar coordinates $s_1=r\cos\theta$, $s_2=r\sin \theta $ where $r=(2\cosh(2x)+b)/(2\sinh{2x})$ and $\theta=2y$.
In these coordinates the metric is
\begin{gather*}
 ds^2 = 2 \frac{1+2br^2+\sqrt{16r^4+4br^2+1}}{(1+2br^2)^2}dr^2+r^2d\theta^2.
 \end{gather*}
  If $\alpha>0$ the potential is attractive to the origin in the plane;
 if $\alpha<0$ the potential is repulsive and the trajectories are unbounded.
\begin{Theorem}
Trajectories of the $D4(b)$ oscillator sweep out equal areas in equal times with respect to the origin in the $s_1-s_2$ plane.
\end{Theorem}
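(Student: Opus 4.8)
The statement is the classical ``conservation of angular momentum $\Rightarrow$ Kepler's second law'', transplanted from the Euclidean plane to the auxiliary $s_1$--$s_2$ plane, with the first-degree constant of the motion ${\cal J}=p_y$ playing the role of the angular momentum. The plan is to (i) write the areal velocity in the $s_1$--$s_2$ plane in polar form, (ii) identify a conserved quantity equal to it, and (iii) conclude.

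First I would record the areal velocity. In the $s_1$--$s_2$ plane with $s_1=r\cos\theta$, $s_2=r\sin\theta$, the (signed) area swept by the position vector from the origin between times $t_1$ and $t_2$ is $A(t_1,t_2)=\tfrac12\int_{t_1}^{t_2}(s_1\dot s_2-s_2\dot s_1)\,dt=\tfrac12\int_{t_1}^{t_2}r^2\dot\theta\,dt$. Hence it suffices to show that $r^2\dot\theta$ is constant in time along each trajectory.

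Next I would pass to the Lagrangian (equivalently Hamiltonian) description in the coordinates $(r,\theta)$. The Hamiltonian ${\cal H}$ in (\ref{D4bham}) is ``(kinetic energy) $+$ $V$'' with kinetic part $g^{ij}p_ip_j$ for the metric displayed just before the theorem, $ds^2=A(r)\,dr^2+r^2\,d\theta^2$; so the associated Lagrangian is $L=\tfrac14\big(A(r)\dot r^2+r^2\dot\theta^2\big)-V(r)$, in which $\theta$ is a cyclic coordinate. Therefore $p_\theta=\partial L/\partial\dot\theta=\tfrac12 r^2\dot\theta$ is conserved along every trajectory. (In Hamiltonian language: $\partial{\cal H}/\partial\theta=0$ since the system is independent of $y$, hence of $\theta$, so $\dot p_\theta=-\partial{\cal H}/\partial\theta=0$; and since $\theta=2y$ one has $p_\theta=\tfrac12 p_y=\tfrac12{\cal J}$, so this is just the relation $\{{\cal J},{\cal H}\}=0$ from (\ref{D4strusture1}).)

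Combining the two steps, $\dot A=\tfrac12 r^2\dot\theta=p_\theta=\tfrac12{\cal J}$ is fixed along any given trajectory, so the area swept between $t_1$ and $t_2$ is $\tfrac12{\cal J}\,(t_2-t_1)$, depending only on the elapsed time. The argument is short and there is no genuine obstacle; the one place needing care is the change of variables $(x,y)\mapsto(r,\theta)$ used to rewrite (\ref{D4bham}) as the Lagrangian above --- in particular that the angular part of the metric is exactly $r^2\,d\theta^2$, so that $r$ really is the polar radius for which $\tfrac12 r^2\dot\theta$ is the areal velocity, and that the momentum conjugate to $\theta$ differs from ${\cal J}$ only by the constant factor coming from $\theta=2y$. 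Both facts are supplied by the metric display and the definitions $r=(2\cosh 2x+b)/(2\sinh 2x)$, $\theta=2y$ given before the theorem.
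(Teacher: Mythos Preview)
Your proof is correct and follows essentially the same route as the paper: both compute the areal velocity $\dot A=\tfrac12 r^2\dot\theta$ and identify it with $\tfrac12{\cal J}$, using that ${\cal J}=p_y$ is conserved. The only difference is presentational---the paper computes $\dot y=\{{\cal H},y\}=2{\cal J}\sinh^2(2x)/(2\cosh(2x)+b)$ explicitly and watches the factor of $r^2$ cancel, whereas you invoke the cyclic-coordinate structure of the metric $A(r)\,dr^2+r^2\,d\theta^2$ to read off $p_\theta=\tfrac12 r^2\dot\theta=\tfrac12{\cal J}$ directly.
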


\begin{proof}
 In the  interval from some initial time $0$ to time $t$ the area swept out by the segment of the
 straight line connecting the origin and the object is $A(t) =\frac12\int_{\theta(0)}^{\theta(t)} r^2(\theta)d\theta$. Thus the rate at which the area is swept out is
\begin{gather}
\frac{dA}{dt}=\frac{dA}{d\theta}\frac{d\theta}{dt}= \frac12 r^2(\theta(t)) \frac{d\theta}{dt}=r^2\frac{dy}{dt},\label{eq:dA/dt}
\end{gather}
since
$dy/dt=(dy/d\theta)\cdot (d\theta/dt)=(1/2)\cdot (d\theta/dt)$.
To get $(dy/dt)$, we take the Poisson bracket of~${\cal H}$ and~$y$, which is
\begin{gather*}
\{{\cal H}, y\}=\frac{dy}{dt}=\frac{2p_y\sinh^2(2x)}{2\cosh(2x)+b}=\frac{2{\cal J}\sinh^2(2x)}{2\cosh(2x)+b},
\end{gather*}
then plug in this expression for $(dy/dt)$ into equation~(\ref{eq:dA/dt}), to get, $(dA/dt)={\cal J}/2$
which is a~constant.
\end{proof}

{\bf Analog of Kepler's third law of planetary motion.}
\begin{Theorem} For $\alpha>0$ the  period $T$ of an orbit is
 \begin{gather}
 T=\frac{1}{2}\pi\left(\frac{2+b}{\sqrt{(-2{\cal H}-b{\cal H}+\alpha)}}+\frac{2-b}{\sqrt{(2{\cal H}-b{\cal H}+\alpha)}}\right).
 \label{period2}
\end{gather}
\end{Theorem}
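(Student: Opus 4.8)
The plan is to compute the period directly from $T=\oint dt$, taken over one complete traversal of a bounded orbit (these exist only for $\alpha>0$), after trading the time variable for the angle $y$. Along a fixed trajectory $p_y={\cal J}$ is constant, and from the Poisson bracket already computed for Kepler's second law, $\frac{dy}{dt}=\{{\cal H},y\}=\frac{2{\cal J}\sinh^2(2x)}{2\cosh(2x)+b}$, so $dt=\frac{2\cosh(2x)+b}{2{\cal J}\sinh^2(2x)}\,dy$. The first substantive step is to eliminate $x$: in the preferred frame ${\cal Y}_1=\kappa$, ${\cal Y}_2=0$ the orbit equation~(\ref{eq:D4orbit}) gives $\cosh(2x)=({\cal H}+\kappa\cos 2y)/{\cal J}^2$, hence $\sinh^2(2x)=\cosh^2(2x)-1=\big(({\cal H}+\kappa\cos 2y)^2-{\cal J}^4\big)/{\cal J}^4$, so that
\[
dt=\frac{{\cal J}\,\big(2({\cal H}+\kappa\cos 2y)+b{\cal J}^2\big)}{2\,\big(({\cal H}+\kappa\cos 2y)^2-{\cal J}^4\big)}\,dy,
\]
a rational function of $\cos 2y$.

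Next I would split the integrand by partial fractions in $u={\cal H}+\kappa\cos 2y$: since $u^2-{\cal J}^4=(u-{\cal J}^2)(u+{\cal J}^2)$, one gets $\frac{2u+b{\cal J}^2}{u^2-{\cal J}^4}=\frac12\big(\frac{2+b}{u-{\cal J}^2}+\frac{2-b}{u+{\cal J}^2}\big)$, and therefore
\[
dt=\frac{{\cal J}}{4}\left(\frac{2+b}{({\cal H}-{\cal J}^2)+\kappa\cos 2y}+\frac{2-b}{({\cal H}+{\cal J}^2)+\kappa\cos 2y}\right)dy .
\]
Each term is a standard trigonometric integral, $\int_0^{2\pi}\frac{d\phi}{a+\kappa\cos\phi}=\frac{2\pi}{\sqrt{a^2-\kappa^2}}$ for $a>|\kappa|$, applied with $\phi=2y$ over the appropriate range. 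To reach the stated closed form I would then use the Casimir~(\ref{D4bCasimir}), $\kappa^2={\cal Y}_1^2+{\cal Y}_2^2={\cal J}^4+{\cal H}^2+b{\cal J}^2{\cal H}-\alpha{\cal J}^2$, which collapses the two radicals via
\[
({\cal H}-{\cal J}^2)^2-\kappa^2={\cal J}^2\big(-2{\cal H}-b{\cal H}+\alpha\big),\qquad ({\cal H}+{\cal J}^2)^2-\kappa^2={\cal J}^2\big(2{\cal H}-b{\cal H}+\alpha\big);
\]
the overall factor ${\cal J}$ then cancels and formula~(\ref{period2}) emerges.

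The step I expect to be the genuine obstacle is not any of the above algebra but fixing the limits of the $y$-integral correctly, i.e.\ deciding how far $y$ advances — equivalently, how many times the polar angle $\theta=2y$ winds around the origin of the $s_1$--$s_2$ plane — before the trajectory closes up; a wrong count simply rescales $T$ by a rational factor, and one has to recognize here that the motion returns to its initial state only after $y$ has increased by $2\pi$. Alongside this one must check that for $\alpha>0$ in the bounded regime both $-2{\cal H}-b{\cal H}+\alpha$ and $2{\cal H}-b{\cal H}+\alpha$ are positive, so that the orbit really is a closed loop and the two integrals converge; when $\alpha<0$ the potential is repulsive, there is no bounded orbit, and that is exactly why the hypothesis $\alpha>0$ is imposed.
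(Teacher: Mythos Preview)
Your approach is correct and essentially the same as the paper's. The paper computes the swept area $A(T)=\tfrac12\int_0^{2\pi} r^2(\theta)\,d\theta$ and then invokes the second law $A(T)={\cal J}T/2$ to read off $T$; since $dt=\tfrac{2r^2}{\cal J}\,dy=\tfrac{r^2}{\cal J}\,d\theta$, your direct evaluation of $\oint dt$ is literally the same integral, and your partial-fraction decomposition together with the Casimir identity makes explicit the step the paper records only as a stated result.
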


\begin{proof} The total area swept out as the trajectory goes through one period is
\begin{gather}
 A(T)   = \frac12\int_{0}^{2\pi} r^2(\theta)d\theta\label{eq:Area}\\
\hphantom{A(T)}{}  =\frac{1}{16}{\cal J}^2\pi\frac{(2+b)
\sqrt{-\kappa^2+{\cal H}^2+2{\cal J}^2{\cal H}+{\cal J}^4}+(2-b)\sqrt{-\kappa^2+{\cal H}^2-2{\cal J}^2{\cal H}+{\cal J}^4}}{\sqrt{(-\kappa^2+{\cal H}^2-2{\cal J}^2{\cal H}
+{\cal J}^4)(-\kappa^2+{\cal H}^2+2{\cal J}^2{\cal H}+{\cal J}^4)}}.\nonumber
\end{gather}
However, from the second law we see that $A(T)=(T/2)\cdot{\cal J}$ and, using the fact that
$\kappa^2={\cal J}^4+{\cal H}^2+b{\cal J}^2{\cal H}-\alpha{\cal J}^2$, the period can be expressed in the  form~(\ref{eq:Area}).
\end{proof}

 Now we begin an analysis of the trajectories. We f\/irst assume that $\alpha>0$ so that the potential is attractive.

{\bf Restriction on ${\cal H}$.}  Recall there is a restriction when we try to express $r$ in terms of $\theta$ and other constants of the motion. That is, $\cosh(2x)=({\cal H}+\cos\theta\kappa)/({\cal J}^2)$ should always be larger than $1$ for any $\theta$. Then since $\kappa>0$, we have the restriction ${\cal H}-\kappa>{\cal J}^2$
implying that for  a~closed trajectory,  ${\cal H}$ should always be positive.  Here $\kappa$ is the length of the 2-vector $({\cal Y}_1, {\cal Y}_2)$.
 To be explicit,
$\kappa^2={\cal J}^4+{\cal H}^2+b{\cal J}^2{\cal H}-\alpha{\cal J}^2$. Then by squaring both sides of an alternate form of the restriction equation, ${\cal H}-{\cal J}^2>\kappa$, and plugging in the expression for $\kappa^2$, we get
$(\alpha-(b+2){\cal H}){\cal J}^2>0$. Noticing the fact that $b+2$, ${\cal H},$ and ${\cal J}^2$ are all larger than $0$, we reach  the conclusion,
${\cal H}<\alpha/(b+2)$
for a bounded orbit. For larger values of ${\cal H}$ the trajectory is unbounded.

{\bf Case of bounded trajectory.} To plot the trajectories on the $s_1-s_2$ plane it is convenient to write $s_1$ and $s_2$ both in terms of $\theta$,
which is
\begin{gather} s_1=\frac{1}{2}\sqrt{\frac{{\cal J}^2(2{\cal H}+2\cos\theta\cdot\kappa+b{\cal J}^2)}{({\cal H}+\cos\theta\cdot\kappa)^2-{\cal J}^4}}
 \cos\theta,\nonumber\\
  s_2=\frac{1}{2}\sqrt{\frac{{\cal J}^2(2{\cal H}+2\cos\theta\cdot\kappa+b{\cal J}^2)}{({\cal H}+\cos\theta\cdot\kappa)^2-{\cal J}^4}}
  \sin\theta. \label{eq:D4s1}
 \end{gather}
(Note: it is useful to divide the  $r$ part of $s_1$ and $s_2$ by ${\cal J}^4$ and introduce new constants ${\cal H'}={\cal H}/{\cal J}^2$
and $\kappa'=\kappa/{\cal J}^2$ so we can eliminate one constant.)
Since in $s_1$ and $s_2$, ${\cal J}$ always appears in the form of~${\cal J}^2$, so for every positive~${\cal J}$,
there will always be a duplicate case for the correspon\-ding~$-{\cal J}$, and  without loss of generality, we can restrict our discussion to ${\cal J}>0$.

A typical plot of the trajectory on the $s_1$-$s_2$ plane is  Fig.~\ref{6.pdf},
\begin{figure}[t!]\centering
 \includegraphics[width=70mm]{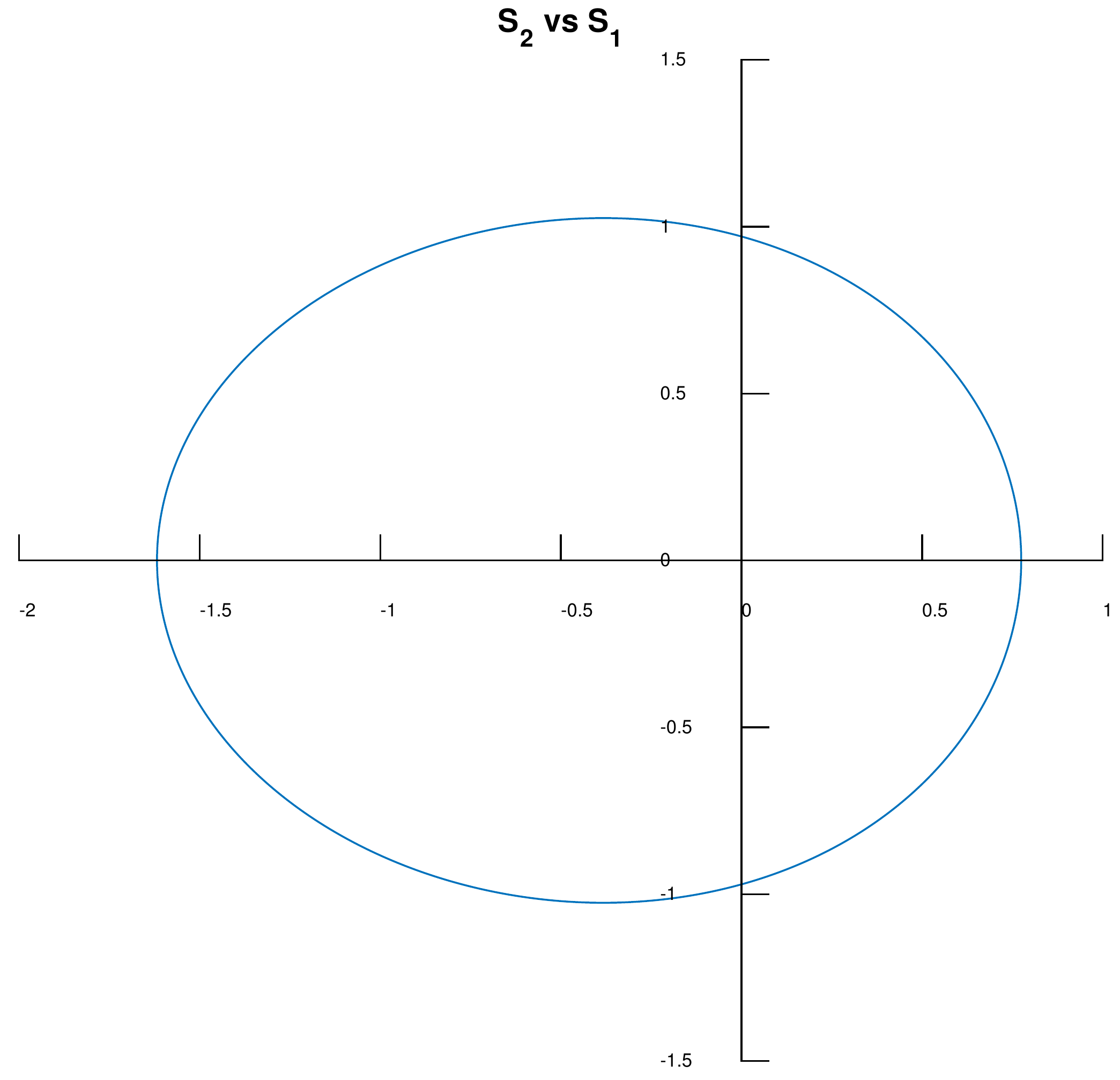}
\caption{Orbit plot for ${\cal J}=10$, ${\cal H}=130$, $\kappa=20$, $b=0$ (equal axes).}\label{6.pdf}
\end{figure}
and as one or more of ${\cal H}$, ${\cal J}$ and~$b$ becomes larger and larger, they will dominate the~$r$ term and make~$r$ less
susceptible to the change of $\cos\theta$. Thus the plots will look more and more circular.
Furthermore, if  $\kappa$ becomes very large, the plots will tend to move towards the negative~$s_1$ direction and appear in an elongated
form as in Fig.~\ref{8.pdf}.
\begin{figure}[t!]\centering
 \includegraphics[width=80mm]{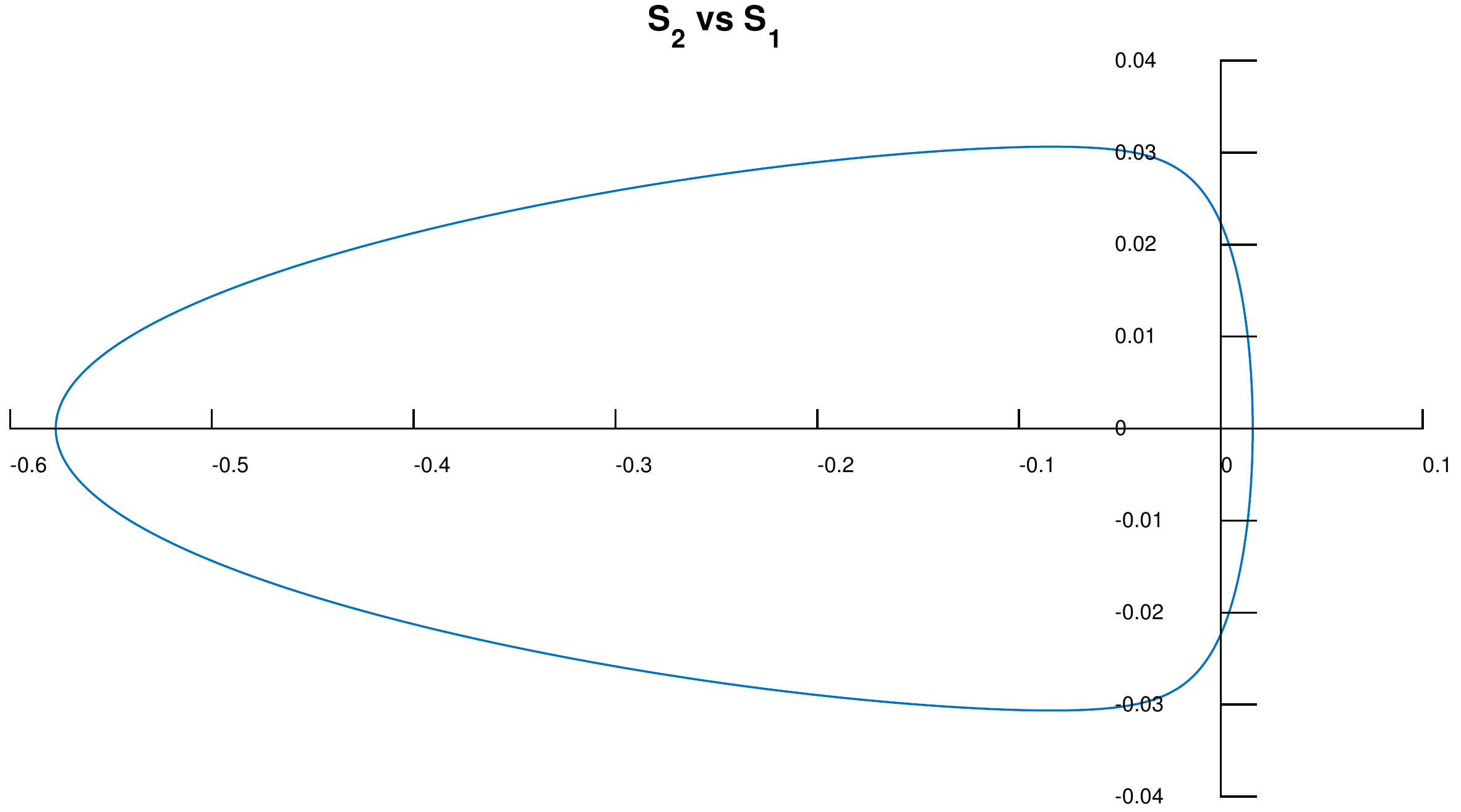}
\caption{Orbit plot for ${\cal J}=1$, ${\cal H}=1002$, $\kappa=1000$, $b=0$ (axes are not set equal).}\label{8.pdf}
\end{figure}
The plot will always be symmetric with respect to $s_1$-axis,  because $s_1(\theta)$ is even  and $s_2(\theta)$ is odd in $\theta$.
Also, since the plot will always lean towards the negative~$s_1$ side, larger~$r$ always occurs at smaller~$s_1$ value.

To trace out an orbit in time, we would need a starting point to integrate Hamilton's equations. Conventionally, we choose a point that is
closest to the origin, which we call perihelion.
From the plots of the orbits and an easy analysis of  equations~(\ref{eq:D4s1}), it is obvious that this
point is the intersection of the trajectory
with the positive $s_1$ axis. The point on the $x$-$y$ plane that corresponds to this  perihelion is $(x_0, 0)$ where
$x_0=\frac{1}{2} \arccosh(\frac{\kappa+{\cal H}}{{\cal J}^2})$.  Plugging $y=0$ into equation (\ref{Y2}) gives $p_xp_y=0$. Realizing $p_y={\cal J}\neq0$, we
know $p_x$ must be equal to $0$.  We see that the perihelion points on the phase space trajectory are uniquely determined by the
constants of the motion as follows:
\begin{gather}
\cosh(2x)=\frac{{\cal H}+\kappa}{{\cal J}^2}, \qquad y=0, \qquad p_x=0, \qquad p_y={\cal J},\label{eq:perihelion}
\end{gather}
and we can see that if we know the perihelion point on the phase space as in equation~(\ref{eq:perihelion}), a~set of
constants of the motion can be uniquely determined.

{\bf Case of escape velocity.} This is the case when ${\cal H}={\cal J}^2+\kappa$,
 a bifurcation point on the momentum map. In polar coordinates, as  $\theta\to \pi$ we have $r\to\infty$, whereas for $s_1$, $s_2$ in
 equations~(\ref{eq:D4s1}), $s_1\to-\infty$  and $s_2\to 0$.
 A plot for  $s_2$ {\it vs.} $s_1$ is given in Fig.~\ref{12.pdf}.
\begin{figure}[t!]\centering
 \includegraphics[width=80mm]{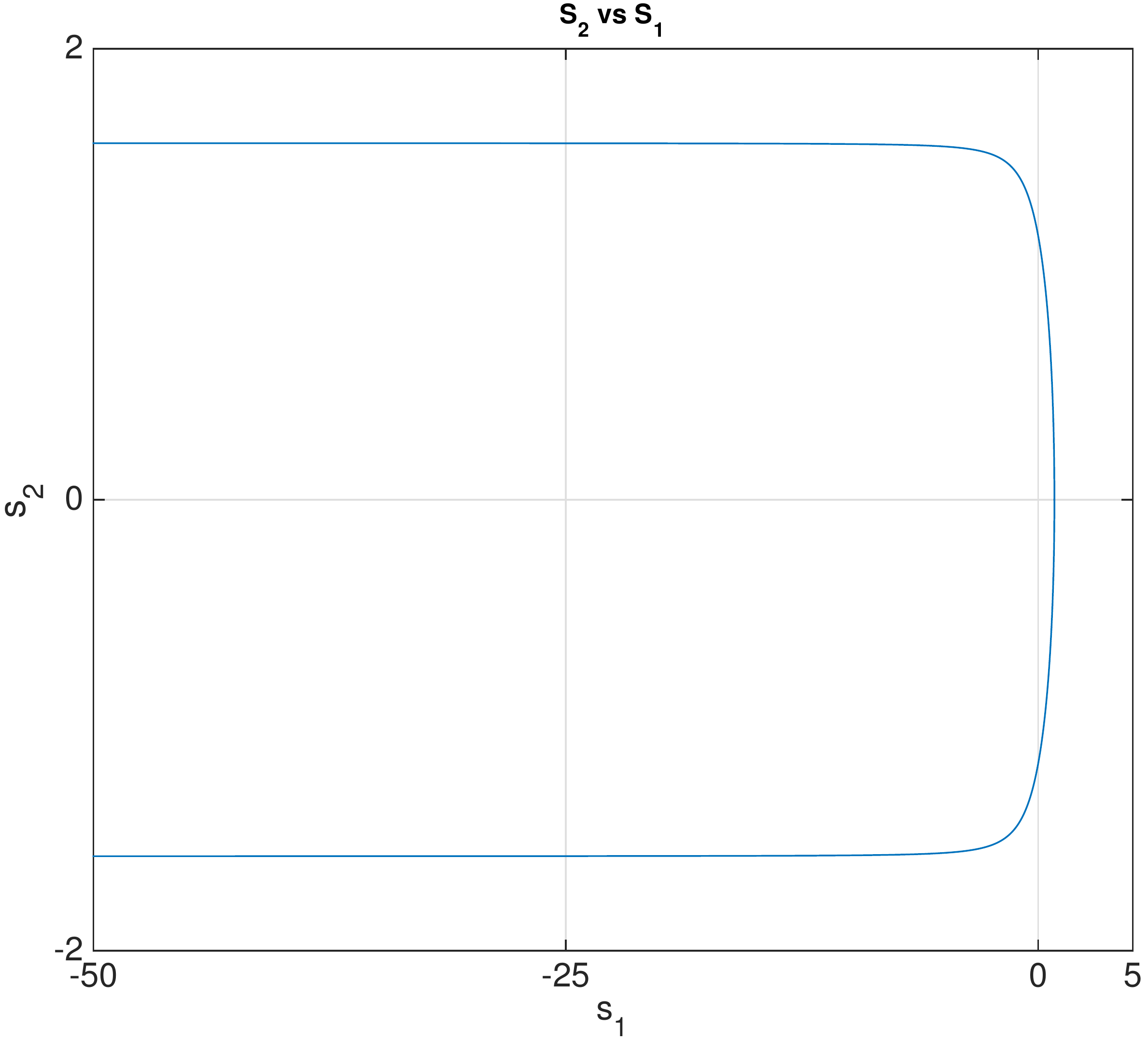}
\caption{$s_2$ \textit{vs} $s_1$ for ${\cal J}=10$, ${\cal H}=120$, $\kappa=20$, $b=0$.}\label{12.pdf}
\end{figure}

In Fig.~\ref{12.pdf}, the two tails will extend to inf\/inity in the direction of negative $s_1$-axis, and they will get closer and closer to the $s_1$-axis for smaller and smaller $s_1$ value but they will never touch the $s_1$-axis.

{\bf Case of unbounded trajectory.} Here, there is more than one value of $y$, i.e., $\theta/2$ that has no corresponding real value of~$x$
in  equation~(\ref{eq:D4orbit}).

\begin{Example}[${\cal H}\le{\cal J}^2$] In this case, since ${\cal H}\le{\cal J}^2$, $r$ will blow up before $s_1$ goes becomes negative,
so the entire trajectory will be bounded on the positive-$s_1$ side of the plane. A boundary case (${\cal H}={\cal J}^2$) is plotted in
Fig.~\ref{14.pdf}.
\begin{figure}[t!]\centering
 \includegraphics[width=80mm]{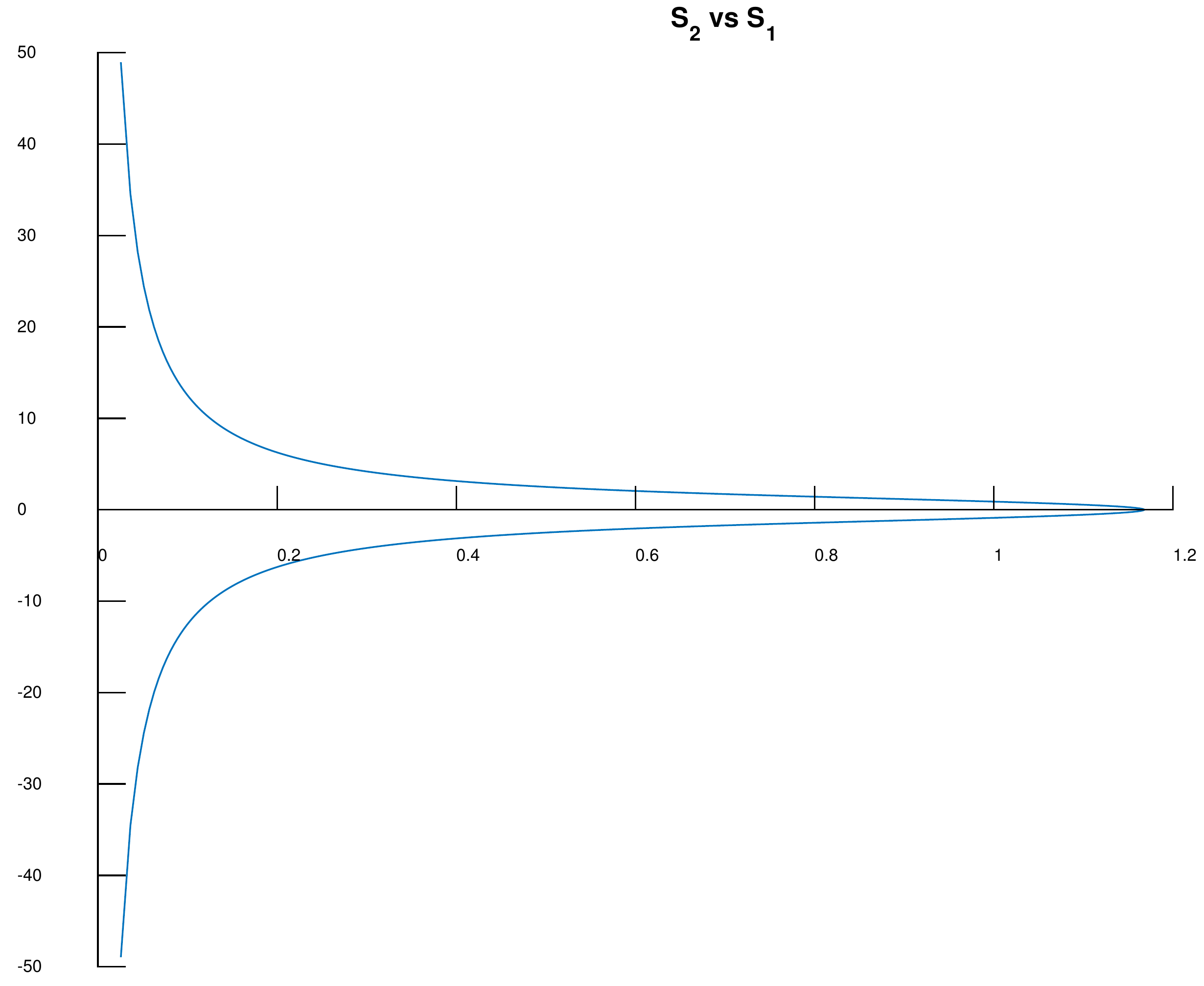}
\caption{$s_2$ \textit{vs} $s_1$ for ${\cal J}=10$, ${\cal H}=100$, $\kappa=20$, $b=0$.}\label{14.pdf}
\end{figure}
For Fig.~\ref{14.pdf} the constants are ${\cal J}=10$, ${\cal H}=100$, $\kappa=20$, $b=0$, so for
$0\le\theta<\pi/2$ and $3\pi/2<\theta\le 2\pi$, corresponding real values of $x$ exist and there is a trajectory.
For $\theta=\pi/2$ and $3\pi/2$, $r$ will goes to inf\/inity and at the same time, $s_2$ will go to inf\/inity and $s_1$  to $0$ which
is represented by the two tails in Fig.~\ref{14.pdf}.
\end{Example}

\begin{Example}[${\cal J}^2 \le {\cal H} <{\cal J}^2+\kappa$] In this case, the trajectory will extend to the negative-$s_1$ side of
the plane. A typical plot for an orbit is given in Fig.~\ref{15.pdf}.
\begin{figure}[t!]\centering
 \includegraphics[width=85mm]{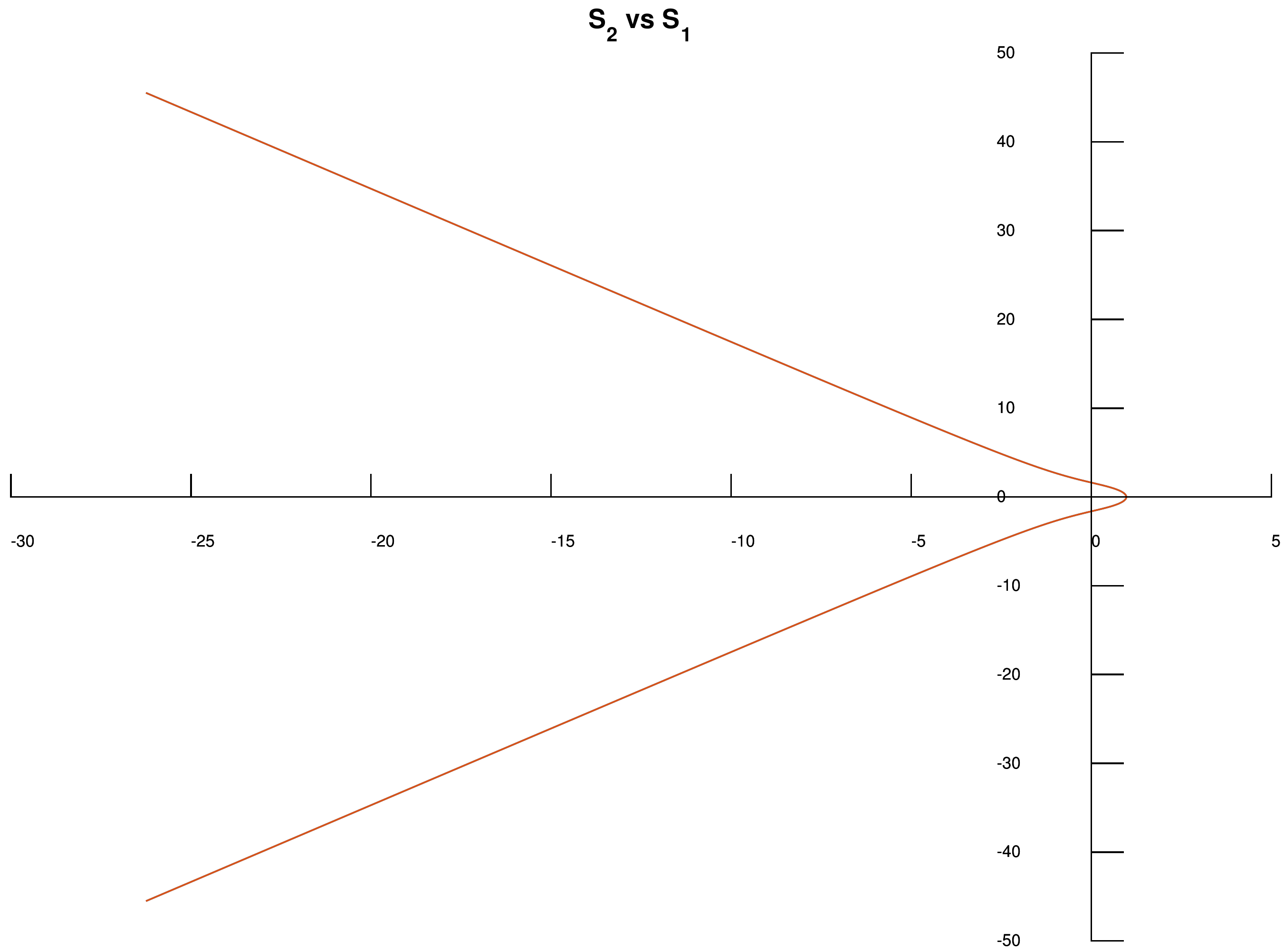}
\caption{$s_2$ \textit{vs} $s_1$ for ${\cal J}=10$, ${\cal H}=110$, $\kappa=20$, $b=0$.}\label{15.pdf}
\end{figure}
\end{Example}

\subsection{Contraction to the Poincar\'e upper half plane}

Using the Hamiltonian ${\cal H}$, (\ref{D4bham}) modif\/ied by a constant as
${\cal H}'={\cal H}- [\alpha/(2+b) ]$,
we let $x=\epsilon Y$, $y=\epsilon X$, $\alpha=- [(2+b)\beta ]/[2\epsilon^2]$,
${\cal J}={\cal K}/\epsilon$, ${\cal Y}_1-(2{\cal J}^2)/(\epsilon^2)+{\cal H}\approx 2{\cal X}_1$, ${\cal Y}_2\approx (2{\cal X}_2)/(\epsilon)$,
and go to the limit as $\epsilon\to 0$.
Then
\begin{gather*}
\begin{aligned}
&{\cal H}'=\frac{4Y^2}{b+2}\big(p_X^2+p_Y^2\big)+\beta Y^2,  &  &{\cal K}=p_X,\\
&{\cal X}_1=\big(X^2-Y^2\big)p_X^2-2XYp_Xp_Y -\frac{\beta X^2}{4}(b+2), &
&{\cal X}_2=Xp_X^2+Yp_Xp_Y+\frac{\beta(b+2)}{4} X.
\end{aligned}
\end{gather*}
The structure equations become
\begin{gather*}
 \{{\cal K},{\cal X}_1\}=-2{\cal X}_2,\qquad \{{\cal K},{\cal X}_2\}={\cal K}^2+\frac{\beta}{4}(b+2),\qquad
\{{\cal X}_1,{\cal X}_2\}=-2{\cal K}{\cal X}_1+\frac{b+2}{2}{\cal K}{\cal H}',
\end{gather*}
and the Casimir is
\begin{gather}\label{PoincareCasimir}
{\cal X}_2+(\frac{\beta(b+2)}{4}+{\cal K}^2){\cal X}_1-\frac{b+2}{4}{\cal K}^2{\cal H}'=0.
\end{gather}
If the potential is turned of\/f this is essentially the Poincar\'e upper half plane model of hyperbolic geometry; for $b=4$ it is exactly that.
The space here consists of all real  points $(X,Y)$ with $Y>0$. In this limit there is no longer any periodicity.
Using these identities to eliminate the momenta we arrive at the equation for the trajectories:
\begin{gather}\label{Poincaretrajectories}
\left(\frac{4{\cal K}^2+\beta(b+2)}{2}X-2{\cal X}_2\right)^2+{\cal K}^2\big(4{\cal K}^2
+\beta(b+2)\big)Y^2-{\cal H}'{\cal K}^2(b+2)=0.
\end{gather}
Notice that the orbit equation is even in $Y$, so the trajectory will always be symmetric with respect to the $X$-axis.
When restricting to the Poincar\'e upper half plane with $Y>0$, the trajectory will be {\it the upper half of the trajectory traced out by the orbit equation}.
If $\beta>0$, so that the potential is attractive to the boundary $Y=0$,  this describes the portion of the  ellipse $(x^2/A^2)+(y^2/B^2)=1$ in the upper half plane, where
\begin{gather*}
 A^2=\frac{4{\cal H}'{\cal K}^2(b+2)}{(4{\cal K}^2+\beta(b+2))^2},\qquad B^2=\frac{{\cal H}'(b+2)}{4{\cal K}^2+\beta(b+2)},\\ x=X-\frac{4{\cal X}_2}{4{\cal K}^2+\beta(b+2)},\qquad  y=Y.
 \end{gather*}

If $\beta\le0$ the potential is repulsive. If however, $4{\cal K}^2+(b+2)\beta>0$ the trajectory~(\ref{Poincaretrajectories}) is again a portion of an ellipse. There is a special case that when $\beta=0$, we will have $A^2=B^2$ such that the trajectory will be upper half of a~circle. A plot of a trajectory when $\beta=0$ is given in Fig.~\ref{27.pdf}. And an example of elliptic trajectory is given in Fig.~\ref{28.pdf}.
\begin{figure}[t!]\centering
 \includegraphics[width=90mm]{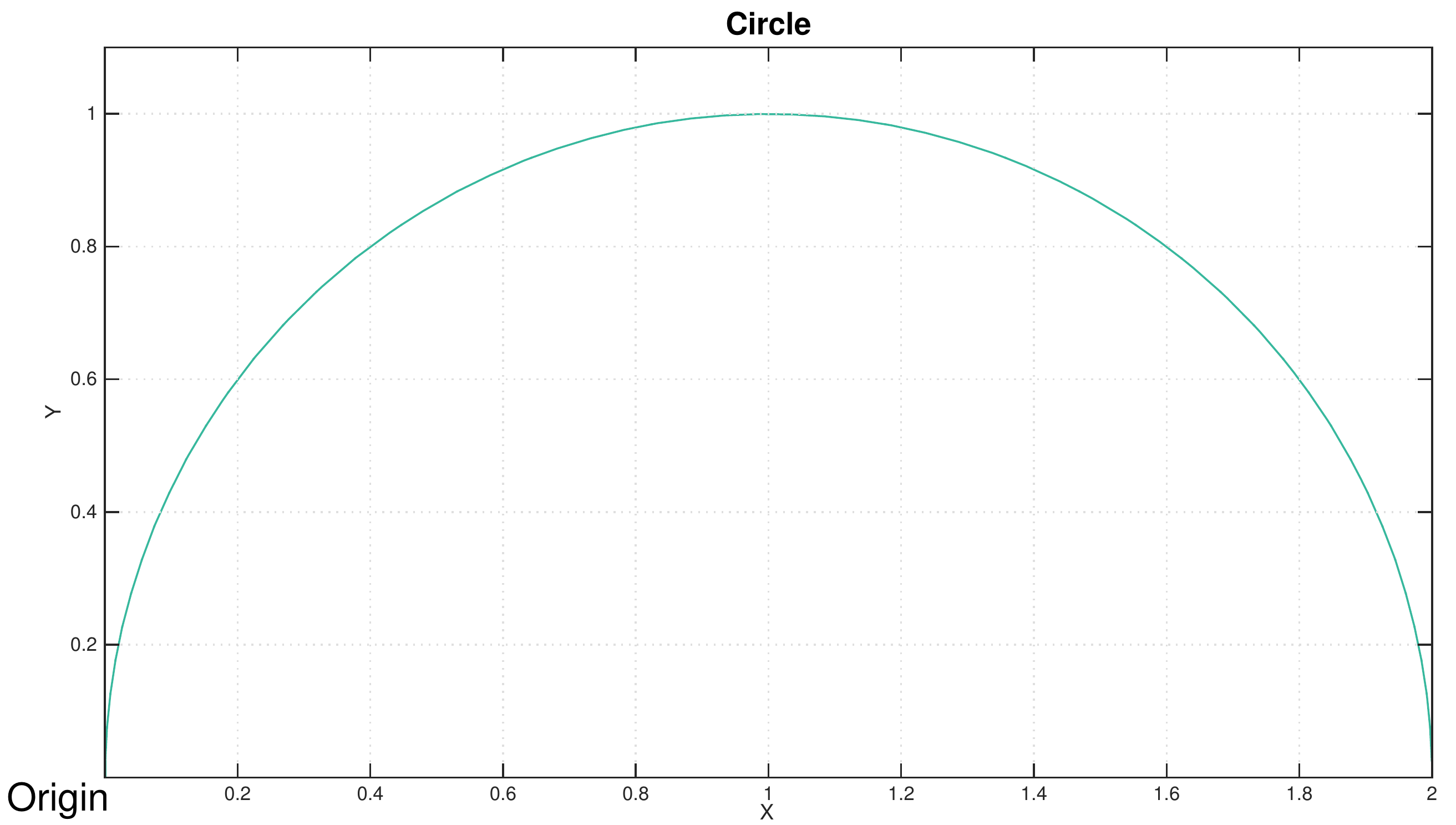}
\caption{Circular trajectory on Poincar\'e upper half plane with $\beta=0$, $b=0$, ${\cal H'}=2$, ${\cal K}=1$, ${\cal X}_2=1$.}\label{27.pdf}
\end{figure}
\begin{figure}[t!]\centering
 \includegraphics[width=80mm]{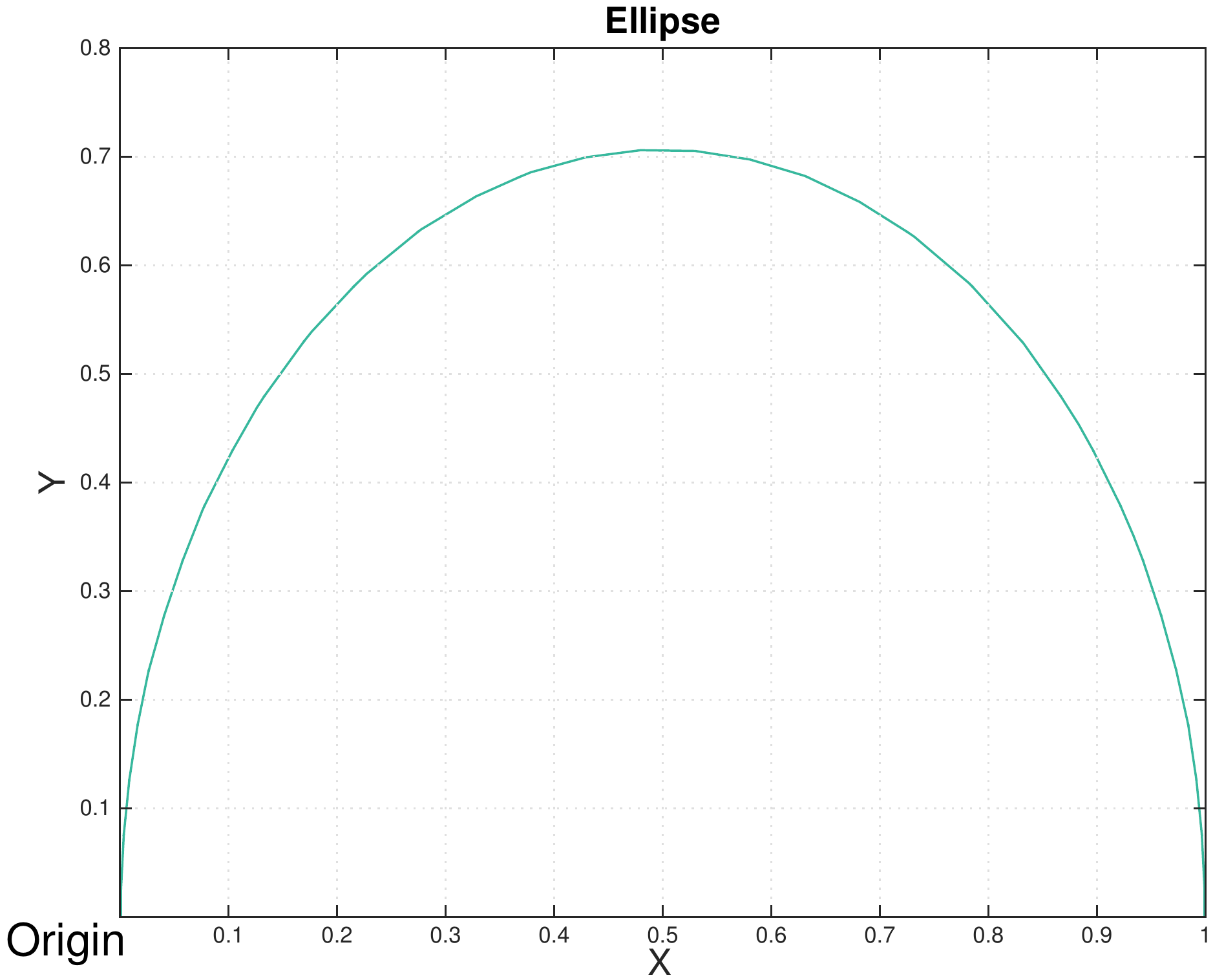}
\caption{Elliptic trajectory on Poincar\'e upper half plane with $\beta=2$, $b=0$, ${\cal H'}=2$, ${\cal K}=1$, ${\cal X}_2=1$.}\label{28.pdf}
\end{figure}

If $4{\cal K}^2+(b+2)\beta<0$   the trajectory is the upper sheet $y>0$ of the hyperbola   $(x^2/A^2)-(y^2/{B'}^2)=1$ where
\begin{gather*}
 A^2=\frac{4{\cal H}'{\cal K}^2(b+2)}{(4{\cal K}^2+\beta(b+2))^2},\qquad { B'}^2=-\frac{{\cal H}'(b+2)}{4{\cal K}^2+(b+2)\beta},\\
x=X-\frac{4{\cal X}_2}{4{\cal K}^2+\beta(b+2)},\qquad
  y=Y.
  \end{gather*}
\begin{figure}[t!]\centering
 \includegraphics[width=85mm]{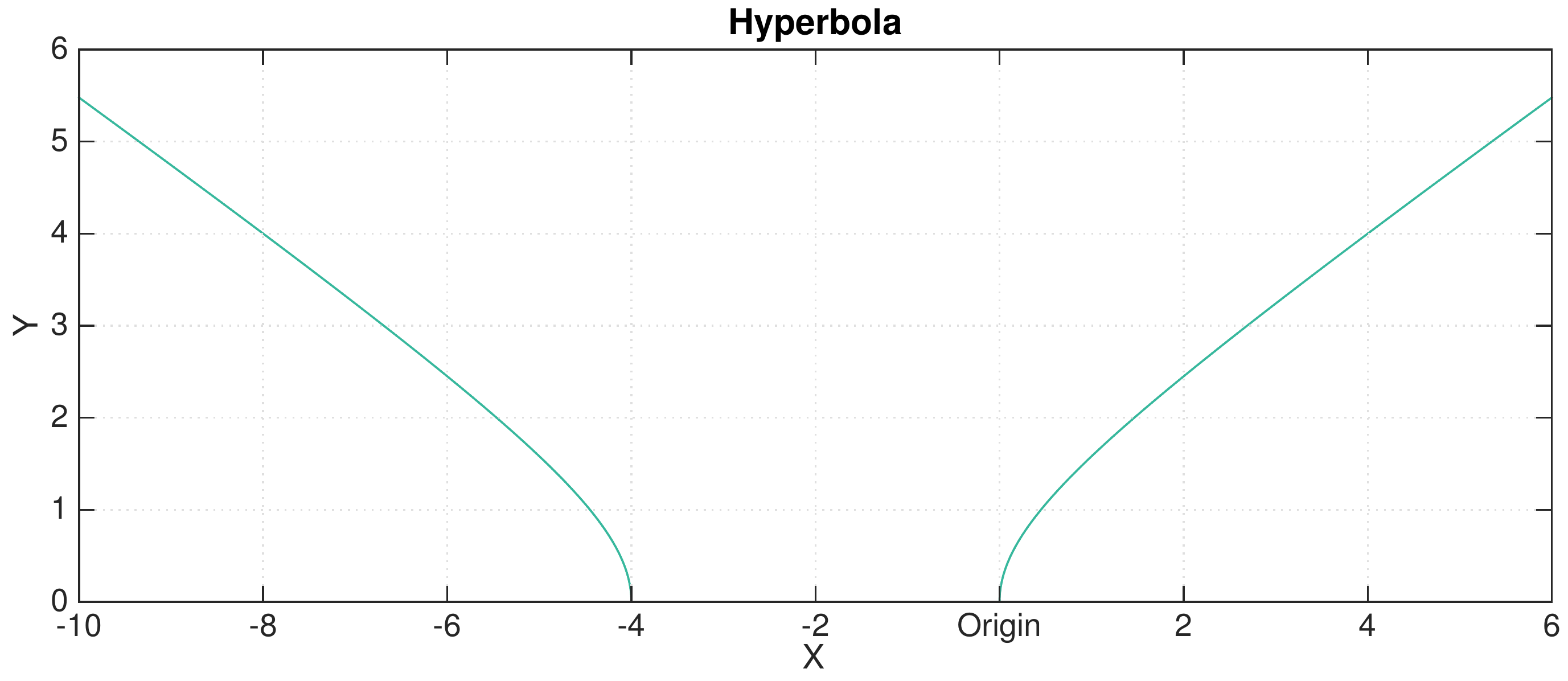}
\caption{Hyperbolic trajectory on Poincar\'e upper half plane with $\beta=-3$, $b=0$, ${\cal H'}=2$, ${\cal K}=1$, ${\cal X}_2=1$.}\label{29.pdf}
\end{figure}

If $4{\cal K}^2+(b+2)\beta=0$  equation~(\ref{Poincaretrajectories}) becomes degenerate. From~(\ref{PoincareCasimir}) we now f\/ind
${\cal X}_1=-Y^2{\cal K}^2-2XYp_Y{\cal K}$ and  have ${\cal H}'= [4Y^2/(b+2) ] \cdot p_Y^2\geq0
$. Eliminating  $p_Y$ from these two expressions we f\/ind the equation
$({\cal K}^2Y^2+{\cal X}_1)^2-{\cal H}'{\cal K}^2(b+2)X^2=0$
for the unbounded trajectories. These are upper halves of parabolas with orbit equation
$ Y^2= -({\cal X}_1/{\cal K})\pm \sqrt{{\cal H}'(b+2)}X$,
and we notice that a certain subset of ${\cal H'}$, ${\cal X}_1$, ${\cal K}$ and $b$ would correspond to two half-parabolas which are symmetric with respect to $y$-axis.
\begin{figure}[t!]\centering
 \includegraphics[width=90mm]{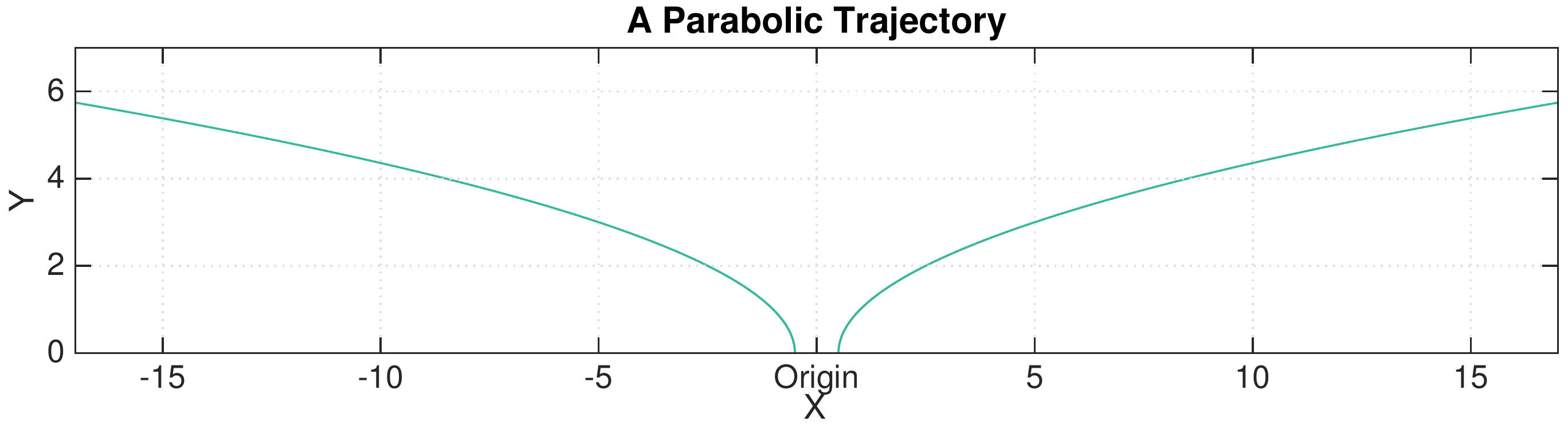}
\caption{Parabolic trajectory on Poincar\'e upper half plane with $\beta=-2$, $b=0$, ${\cal H'}=2$, ${\cal K}=1$, ${\cal X}_1=1$.}\label{30.pdf}
\end{figure}
Interestingly, when ${\cal X}_1/{\cal K}\le0$, the two half-parabolas will cross  as in Fig.~\ref{31.pdf}.
\begin{figure}[t!]\centering
 \includegraphics[width=90mm]{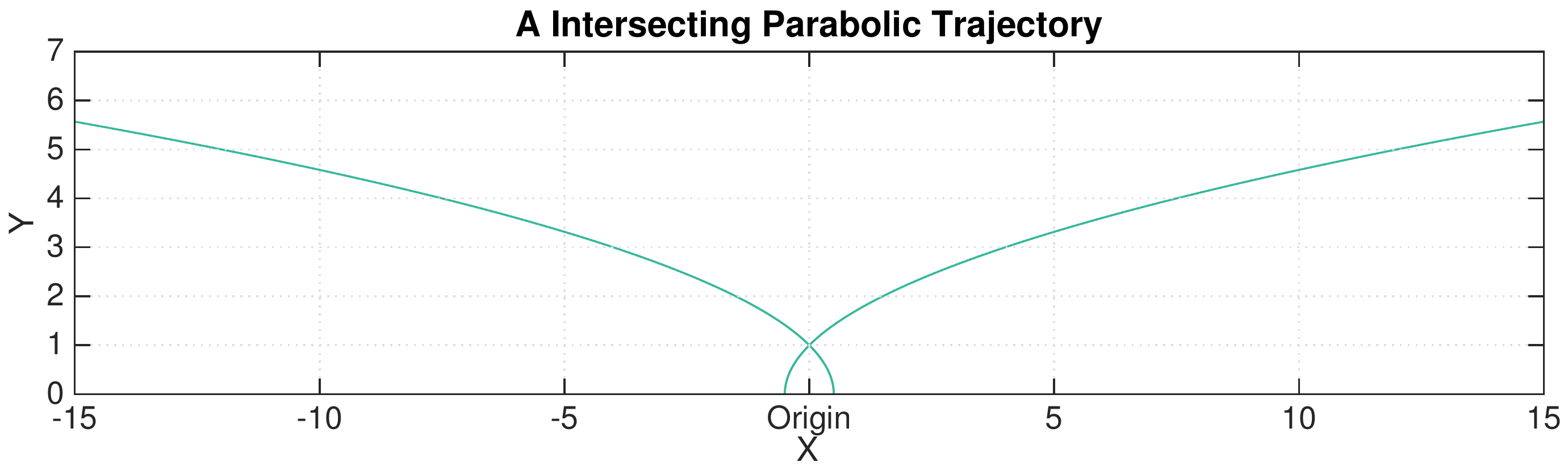}
\caption{Parabolic trajectory on Poincar\'e upper half plane with $\beta=-2$, $b=0$, ${\cal H'}=2$, ${\cal K}=-1$, ${\cal X}_1=1$.}\label{31.pdf}
\end{figure}
A special case occurs when ${\cal H'}=0$ so that the trajectory becomes a straight line.  Since  ${\cal H}'=(4Y^2\cdot p_Y^2)/(b+2)$, we see
this implies that $p_Y=0$ so the trajectory must be parallel to the $x$-axis.
Furthermore, if ${\cal H'}=0$ and ${\cal X}_1/{\cal K}=0$, the trajectory will be
exactly the $y$-axis. Moreover, for ${\cal H'}=0$ and ${\cal X}_1/{\cal K}<0$, there will be no trajectory.

We mention that this  Poincar\'e upper half plane model can be mapped to the unit disk, the Poincar\'e disk, as well as to the upper sheet of the 2-sheet hyperboloid is 3-space.
Further, this model can be contracted to the f\/lat space system ${\cal H}=p_x^2+p_y^2+\alpha x$, $E5$ in our listing~\cite{KKMP}. Indeed, if we set
 \begin{gather*}
 X=y,\qquad Y=x+\frac{\sqrt{b+2}}{2\epsilon},\qquad \beta =\frac{\alpha  \sqrt{b+2}}{\epsilon},\qquad     {\cal H}(\epsilon)=\epsilon^2\left({\cal H}'-\frac{\alpha (b+2)^{3/2}}{4\epsilon}\right),
 \end{gather*}
then $\lim\limits_{\epsilon\to 0} {\cal H}(\epsilon)=p_x^2+p_y^2+\alpha x$. We don't go into detail here but we give another detailed example in the next section.

\subsection[Contraction to the  $D3$ oscillator]{Contraction to the  $\boldsymbol{D3}$ oscillator}
The oscillator in the space Darboux 3 ($D_3$) has  Hamiltonian
\begin{gather}\label{Darboux3a} {\cal H}=\frac12\frac{e^{2x}}{e^x+1}\big(p_x^2+p_y^2\big)+\frac{\beta}{e^x+1}
\end{gather}
with a single Killing vector ${\cal J}=p_y$ and a symmetry algebra basis, $\{{\cal H},{\cal J}^2,{\cal X}_1,{\cal X}_2\}$. Here,
\begin{gather*}
 {\cal X}_1=\frac12 e^x\sin y  p_xp_y+\frac14\frac{e^{2x}}{e^x+1}\cos y  p_x^2-\frac14 \frac{e^x(e^x+2)}{e^x+1}\cos y  p_y^2+\frac12\frac{\beta\cos y}{e^x+1},\\
{\cal X}_2=-\frac12 e^x\cos y  p_xp_y+\frac14\frac{e^{2x}}{e^x+1}\sin y  p_x^2-\frac14 \frac{e^x(e^x+2)}{e^x+1}\sin y  p_y^2+\frac12\frac{\beta\sin y}{e^x+1}.
\end{gather*}
The structure relations are
\begin{gather*}
 \{{\cal J},{\cal X}_1\}=-X_2,\qquad \{{\cal J},{\cal X}_2\}={\cal X}_1,\qquad \{{\cal X}_1,{\cal X}_2\}=\frac12 {\cal J}{\cal H}-\frac{\beta}{2}{\cal J},
 \end{gather*}
 and there is the functional relation
$4{\cal X}_1^2+{\cal X}_2^2-\frac14{\cal H}^2-\frac12 {\cal J}^2{\cal H}+\frac{\beta}{2}{\cal J}^2=0$.
\begin{Theorem} The $D3$ oscillator system is a contraction of the $D4$ oscillator system.
\end{Theorem}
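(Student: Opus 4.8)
The plan is to exhibit an explicit contraction (in the sense of~\cite{Wigner,KMP2013}, and entirely parallel to the Poincar\'e upper half plane contraction carried out above): a one-parameter family of substitutions, depending on a contraction parameter $\epsilon$, consisting of a singular point transformation of the $D4$ coordinates $(x,y)$, the induced canonical rescaling of $(p_x,p_y)$, an $\epsilon$-dependence of the Darboux parameter $b$ and the coupling $\alpha$, and a rescaling together with a linear recombination of the generators ${\cal H}$, ${\cal J}$, ${\cal Y}_1$, ${\cal Y}_2$, so that as $\epsilon\to 0$ the $D4(b)$ Hamiltonian~(\ref{D4bham}) and symmetries~(\ref{D4Y1})--(\ref{D4Y2}) become the Hamiltonian~(\ref{Darboux3a}) and the symmetries ${\cal X}_1$, ${\cal X}_2$ of the $D3$ oscillator, and the structure equations~(\ref{D4strusture2}) and Casimir~(\ref{D4bCasimir}) become the corresponding $D3$ relations.

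First I would fix the coordinate and parameter part. Since $\cosh 2x$ and $\sinh 2x$ must degenerate into the single exponentials $e^{x}$, $e^{2x}$ appearing in~(\ref{Darboux3a}), I would set $2x_{D4}=x_{D3}+c(\epsilon)$ with $c(\epsilon)\to+\infty$, so that $\cosh 2x_{D4}\sim\tfrac12 e^{c}e^{x_{D3}}$ and $\sinh^{2}2x_{D4}\sim\tfrac14 e^{2c}e^{2x_{D3}}$; since ${\cal X}_1,{\cal X}_2$ contain $\cos y,\sin y$ whereas ${\cal Y}_1,{\cal Y}_2$ contain $\cos 2y,\sin 2y$, I would put $y_{D4}=\tfrac12 y_{D3}$, which forces $p_{x_{D4}}=2p_{x_{D3}}$ and $p_{y_{D4}}=2p_{y_{D3}}$. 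Imposing that the kinetic coefficient $\sinh^{2}(2x)/(2\cosh 2x+b)$ converge, up to one overall factor $\lambda(\epsilon)$, to $\tfrac12 e^{2x}/(e^{x}+1)$ pins down $b(\epsilon)$ (which is forced to grow, with $e^{c(\epsilon)}$ comparable to $b(\epsilon)$, in order to produce the ``$+1$'' of $D3$) and $\lambda(\epsilon)$; matching the potential $\alpha/(2\cosh 2x+b)$ to $\beta/(e^{x}+1)$, possibly after subtracting an $\epsilon$-dependent constant from ${\cal H}$ as in the Poincar\'e case, then fixes the $\epsilon$-dependence of $\alpha$ in terms of $\beta$.

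With these scalings in hand, I would substitute them into the explicit formulas~(\ref{D4Y1})--(\ref{D4Y2}) (equivalently~(\ref{Y1})--(\ref{Y2})) and expand in powers of $\epsilon$. As in the Poincar\'e contraction, where ${\cal Y}_1$ had to be combined with $2{\cal J}^{2}/\epsilon^{2}$ and ${\cal H}$ before a finite limit appeared, I expect ${\cal Y}_1$ and ${\cal Y}_2$ individually to diverge, with combinations of the shape $\lambda(\epsilon)\bigl({\cal Y}_i+a(\epsilon){\cal J}^{2}+d_i(\epsilon){\cal H}\bigr)$ converging; identifying the coefficients $a,d_i$ and verifying that the limits reproduce the stated ${\cal X}_1,{\cal X}_2$ is the heart of the argument. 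Then, pushing the same substitutions through~(\ref{D4strusture2}) and~(\ref{D4bCasimir}) and dividing by the appropriate power of $\epsilon$, I would check that they converge term by term to the $D3$ relations $\{{\cal J},{\cal X}_1\}=-{\cal X}_2$, $\{{\cal J},{\cal X}_2\}={\cal X}_1$, $\{{\cal X}_1,{\cal X}_2\}=\tfrac12{\cal J}{\cal H}-\tfrac{\beta}{2}{\cal J}$ and $4{\cal X}_1^{2}+{\cal X}_2^{2}-\tfrac14{\cal H}^{2}-\tfrac12{\cal J}^{2}{\cal H}+\tfrac{\beta}{2}{\cal J}^{2}=0$; because the recombination of generators is linear and the coordinate change is a genuine (if singular) point transformation, this last verification is essentially automatic once the finite limits are known to exist.

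The step I expect to be the main obstacle is choosing the scaling ansatz so that everything converges simultaneously: the single parameter $\epsilon$ --- through $c(\epsilon)$, $\lambda(\epsilon)$, $b(\epsilon)$, $\alpha(\epsilon)$ and the recombination coefficients --- must make the metric, the potential, and all four symmetries ${\cal H}$, ${\cal J}^{2}$, ${\cal Y}_1$, ${\cal Y}_2$ tend to finite, nonzero limits at once. This over-determination is what makes pinning down the correct growth rate of $b$, the correct shift in $x$, and the correct linear combinations of ${\cal Y}_1,{\cal Y}_2$ delicate; once the ansatz is right, the remainder is a routine expansion in $\epsilon$.
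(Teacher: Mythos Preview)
Your plan is correct and coincides with the paper's approach: an explicit singular coordinate/parameter substitution followed by direct verification that ${\cal H}$, ${\cal J}$, ${\cal Y}_1$, ${\cal Y}_2$ converge to the $D3$ generators. The paper's actual substitution is $x=\tfrac{x'}{2}-\tfrac12\ln\epsilon$, $y=\tfrac{y'}{2}$, $b=1/\epsilon$, $\alpha=\beta/(8\epsilon^{2})$, ${\cal H}'=8\epsilon{\cal H}$, which matches your ansatz with $c(\epsilon)=-\ln\epsilon$ and $e^{c(\epsilon)}=b(\epsilon)$.

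The one place you overcomplicate things is in the treatment of ${\cal Y}_1$, ${\cal Y}_2$: unlike the Poincar\'e contraction, here \emph{no} linear recombination with ${\cal J}^{2}$ or ${\cal H}$ is needed, nor any additive shift of ${\cal H}$. A pure rescaling ${\cal Y}'_i=-4\epsilon\,{\cal Y}_i$ already gives finite limits equal to ${\cal X}_1,{\cal X}_2$, so the coefficients you call $a(\epsilon),d_i(\epsilon)$ all turn out to be zero. This is because in~(\ref{Y1})--(\ref{Y2}) the divergent pieces $\cos(2y){\cal H}$ and $\cos(2y)\cosh(2x)p_y^{2}$ are both of order $1/\epsilon$ and combine directly into the $D3$ expressions once multiplied by $\epsilon$; no cancellation against separately added ${\cal J}^{2}$ or ${\cal H}$ terms is required. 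So the ``main obstacle'' you flag is in fact milder here than in the Poincar\'e case.
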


\begin{proof}
We can get (\ref{Darboux3a}) as a limiting case of (\ref{D4bham}) by taking
 \begin{gather}\label{D3Contra}
 x=\frac{x'}{2}-\frac12\ln(\epsilon),\qquad y=\frac{y'}{2},\qquad b=\frac{1}{\epsilon},\qquad {\cal H}'=8\epsilon{\cal H},\qquad
  \alpha=\frac{\beta}{8\epsilon^2}.
  \end{gather}
Then we have ${\cal H}'=\frac12  \frac{e^{2x'}}{e^{x'}+1}(p_{x'}^2+p_{y'}^2)$,
${\cal J}'=2{\cal J}=p_{y'}$, and ${\cal Y}'_1\approx -4\epsilon {\cal Y}_1$, ${\cal Y}'_2\approx -4\epsilon {\cal Y}_2$.

First, it is obvious that ${\cal J}'=2{\cal J}=p_{y'}$. Then as $\epsilon \to 0$, since
\begin{gather*}
\cosh(2x)=\frac{e^{x'-\ln(\epsilon)}+e^{-x'+\ln(\epsilon)}}{2}=\frac{e^{x'}+\epsilon^2e^{-x'}}{2\epsilon},\qquad \sinh(2x)=\frac{e^{x'}-\epsilon^2e^{-x'}}{2\epsilon},
\end{gather*}
plugging these two identities into the Hamiltonian equation (\ref{D4bham}) and also using equation (\ref{D3Contra}), we have
\begin{gather*}
{\cal H}'= 8\epsilon\left(\frac{\frac{e^{2x'}-2\epsilon^2+\epsilon^4e^{-2x'}}{4\epsilon^2}}{\frac{e^{x'}+\epsilon^2e^{-x'}}{\epsilon}+\frac{1}{\epsilon}}\right)(\frac14 p_{x'}^2+\frac14 p_{y'}^2)+8\epsilon\left(\frac{\frac{\beta}{8\epsilon^2}}{\frac{e^{x'}+\epsilon^2e^{-x'}}{\epsilon}+\frac{1}{\epsilon}}\right) \\
\hphantom{{\cal H}'}{}
 =\frac12\frac{e^{2x'}-2\epsilon^2+\epsilon^4e^{-2x'}}{e^{x'}+\epsilon^2e^{-x'}+1}\big(p_{x'}^2+p_{y'}^2\big)+\frac{\beta}{e^{x'}+\epsilon^2e^{-x'}+1}
=\frac12  \frac{e^{2x'}}{e^{x'}+1}\big(p_{x'}^2+p_{y'}^2\big)+\frac{\beta}{e^x+1},
\end{gather*}
because the terms with $\epsilon$ in front of them will all go to $0$ as $\epsilon\to 0$. This ${\cal H}'$ is exactly the Hamiltonian of the Darboux-3 oscillator.

Similarly, by substituting the identities in equation (\ref{D3Contra}) into equations (\ref{Y1}) and (\ref{Y2}), we have,
\begin{gather*}
{\cal Y}'_1=-4\epsilon(-\cos y')\frac{e^{2x'}-2\epsilon^2+\epsilon^4e^{-2x'}}{4\epsilon(e^{x'}+\epsilon^2e^{-x'}+1)}
\left(\frac14 p_{x'}^2+\frac14 p_{y'}^2\right)-\cos y'\frac{\beta}{8\epsilon(e^{x'}+\epsilon^2e^{-x'}+1)}\\
\hphantom{{\cal Y}'_1=}{}
 +\cos y'\left(\frac{e^{x'}+\epsilon^2e^{-x'}}{2\epsilon}\right)\frac14 p_{y'}^2-\sin y'
 \left(\frac{e^{x'}-\epsilon^2e^{-x'}}{2\epsilon}\right)\frac14 p_{x'}p_{y'}
=\frac14 \frac{e^{2x'}}{e^{x'}+1}\cos(y')p_{x'}^2\\
\hphantom{{\cal Y}'_1=}{}
-\frac14 \left(\frac{2e^{x'}(e^{x'}+1)}{e^{x'}+1}-\frac{e^{2x'}}{e^{x'}+1}\right)\cos(y')p_{y'}^2
+\frac12 e^{x'}\sin(y')p_{x'}p_{y'}+\frac12\frac{\beta\cos y'}{e^{x'}+1}\\
\hphantom{{\cal Y}'_1}{}
=\frac14 \frac{e^{2x'}}{e^{x'}+1}\cos(y')p_{x'}^2-\frac14 \frac{e^{x'}(e^{x'}+2)}{e^{x'}+1}\cos(y')\ p_{y'}^2+\frac12 e^{x'}\sin(y')p_{x'}p_{y'}+\frac12\frac{\beta\cos y'}{e^{x'}+1},
\end{gather*}
as $\epsilon\to 0$.
In the same sense,
\begin{gather*}
{\cal Y}'_2=\frac14 \frac{e^{2x'}}{e^{x'}+1}\sin(y')p_{x'}^2-\frac14 \frac{e^{x'}(e^{x'}+2)}{e^{x'}+1}\sin(y')\ p_{y'}^2
 -\frac12 e^{x'}\cos(y')p_{x'}p_{y'}+\frac12\frac{\beta\sin y'}{e^{x'}+1}.
 \end{gather*}
We notice they are exactly the same vectors as ${\cal X}_1$ and ${\cal X}_2$. So we have proved that in this limiting case, Darboux-4 space contracts to Darboux-3.
\end{proof}

Using this transformation, we can get the orbit equation for the $D3$ oscillator from the orbit equation for the $D4(b)$ oscillator. First,
since $\kappa'^2={\cal Y'}_1^2+{\cal Y'}_2^2=16\epsilon^2({\cal Y}_1^2+{\cal Y}_2^2)=16\epsilon^2\kappa^2$ and assuming $\epsilon>0$,
we have $\kappa'=4\epsilon\kappa$. Then putting $\kappa=\frac{1}{4\epsilon} \kappa'$ and the identities~(\ref{D3Contra})
into  (\ref{eq:D4orbit}), we derive
$\big(\frac{e^{x'}+\epsilon^2e^{-x'}}{2\epsilon}\big)\big(\frac{{\cal J'}}{2}\big)^2-\cos(y')(\frac{\kappa'}{4\epsilon})=\frac{{\cal H'}}{8\epsilon}$.
Thus,
$e^{x'}{\cal J'}^2-2\cos(y')\kappa'={\cal H'}$
as $\epsilon\to 0$.

\subsubsection[Embedding $D3$ in 3D Minkowski space]{Embedding $\boldsymbol{D3}$ in 3D Minkowski space}

We can embed $D3$ as a surface in 3D Minkowski space with coordinates $X$, $Y$, $Z$ in such a way as to preserve rotational symmetry. For example, let
\begin{gather*}
 X=2\sqrt{2}e^{-\frac{x}{2}}\sqrt{1+e^{-x}}\cos \frac{y}{2},\qquad Y=2\sqrt{2}e^{-\frac{x}{2}}\sqrt{1+e^{-x}}\sin \frac{y}{2}, \\
 Z=\frac{\sqrt{6}}{12}\ln\left(\frac{\frac{\sqrt{3}}{6}(6+5e^x){\sqrt{3+2e^{2x}+5e^x}+1}}{\frac{\sqrt{3}}{6}(6+5e^x)\sqrt{3+2e^{2x}+5e^x}-1}
\right)-e^{-x}\sqrt{2}\sqrt{3+2e^{2x}+5e^x}.
\end{gather*}
Then
$dX^2+dY^2-dZ^2=\frac{2(e^x+1)}{e^{2x}}(dx^2+dy^2)$. Such embeddings are not unique.

{\bf Discussion of trajectories.} To see how the trajectory behaves on this surface, we impose the orbit equation $e^{x'}{\cal J'}^2-2\cos(y')\kappa'={\cal H'}$ to $X$, $Y$ and $Z$.
From the orbit equation, we have
\begin{gather}
e^x=\frac{{\cal H}+2\cos y \cdot\kappa}{{\cal J}^2}=\frac{{\cal H}}{{\cal J}^2}+\frac{2\kappa}{{\cal J}^2}\cos y=a+b\cos y,\label{eq:D3orbit2}
\end{gather}
where we ignore all the primes on the letters for convenience. Also, we treat $a=\frac{{\cal H}}{{\cal J}^2}$ and $b=\frac{2\kappa}{{\cal J}^2}$ as two new constants, so the number of constants is reduced by~1. Then we can express $X$, $Y$ and $Z$ all in terms of $y$.
We notice that since $e^x>0$, the trajectory will be closed when $a>b$ and unbounded when $a<b$.

\begin{Example} Plots of an``elliptical'' shape trajectory and its overhead 2-D view are given in Figs.~\ref{18.jpg} and~\ref{19.jpg}. (The size of the ``cone" that appears on the graphics is determined by the minimum value we choose for~$e^x$ in MATLAB. The smaller the minimum value,
the larger the ``cone''. For these two plots, the minimum value for~$e^x$ we choose is~$1$ and we adjust to dif\/ferent sizes
of ``cones'' for other examples to make the plots clearer.)
\end{Example}

\begin{figure}[t!]\centering
 \includegraphics[width=85mm]{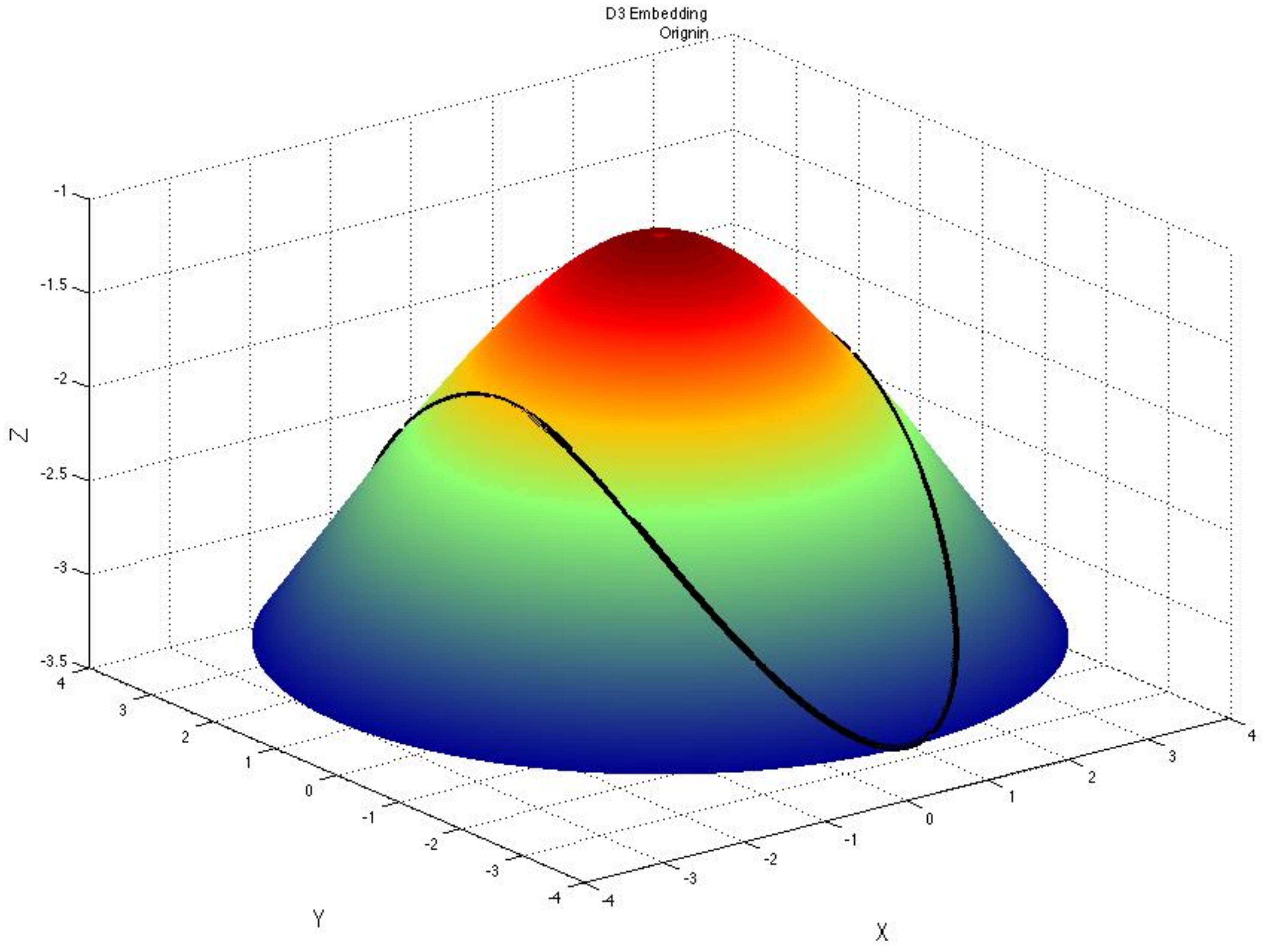}
\caption{3-D View of trajectory with $a=2$, $b=1$.}\label{18.jpg}
\end{figure}
\begin{figure}[t!]\centering
 \includegraphics[width=85mm]{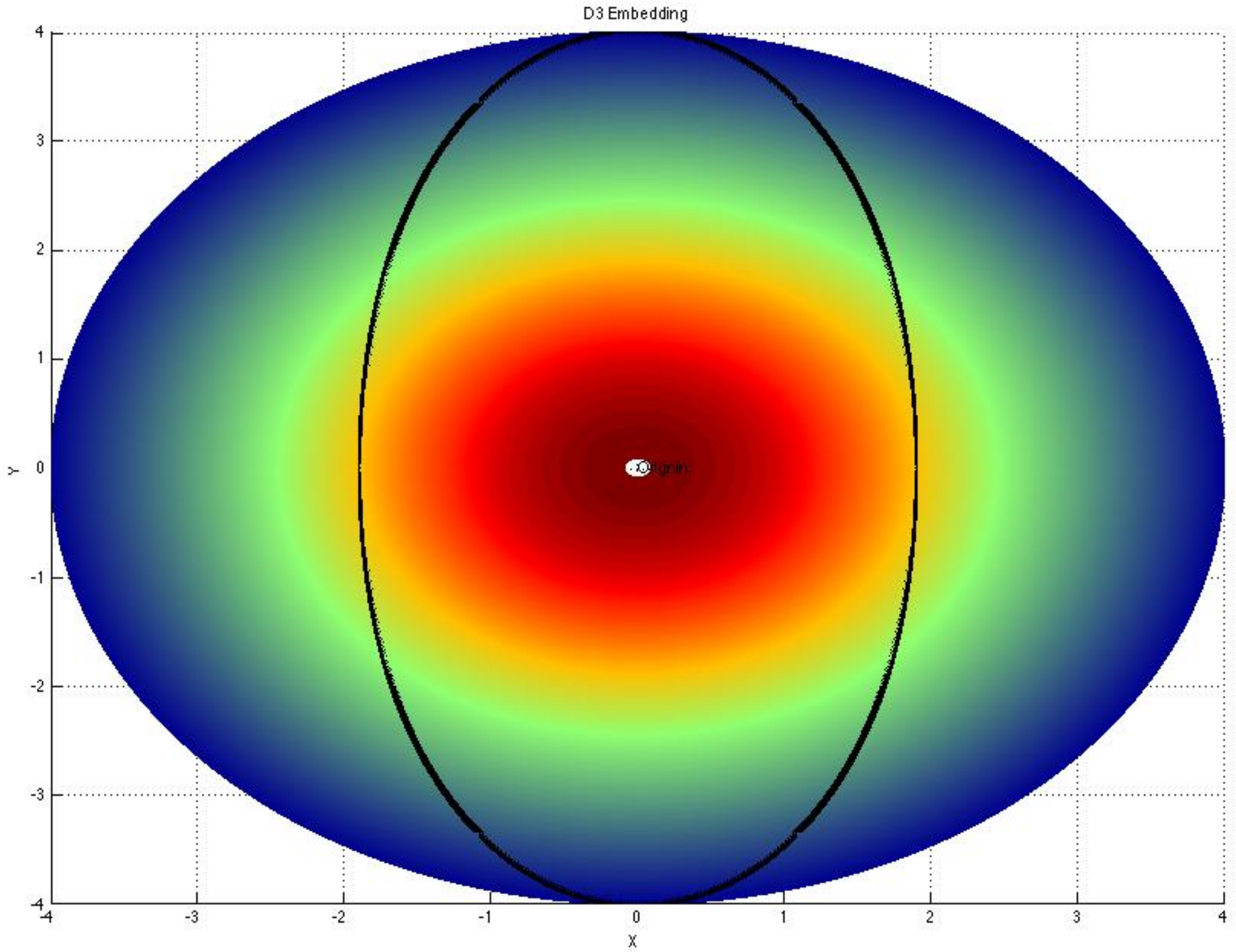}
\caption{Overhead view of trajectory with $a=2$, $b=1$.}\label{19.jpg}
\end{figure}

\begin{Example}
As $a$ becomes smaller, the ``ellipse'' is prolonged, see Figs.~\ref{20.jpg} and~\ref{21.jpg}. Seen from above the shape of the trajectory  two merging ellipses. (The minimum value for~$e^x$ we choose is~$0.1$ for these two plots.)
\end{Example}

\begin{figure}[t!]\centering
\includegraphics[width=85mm]{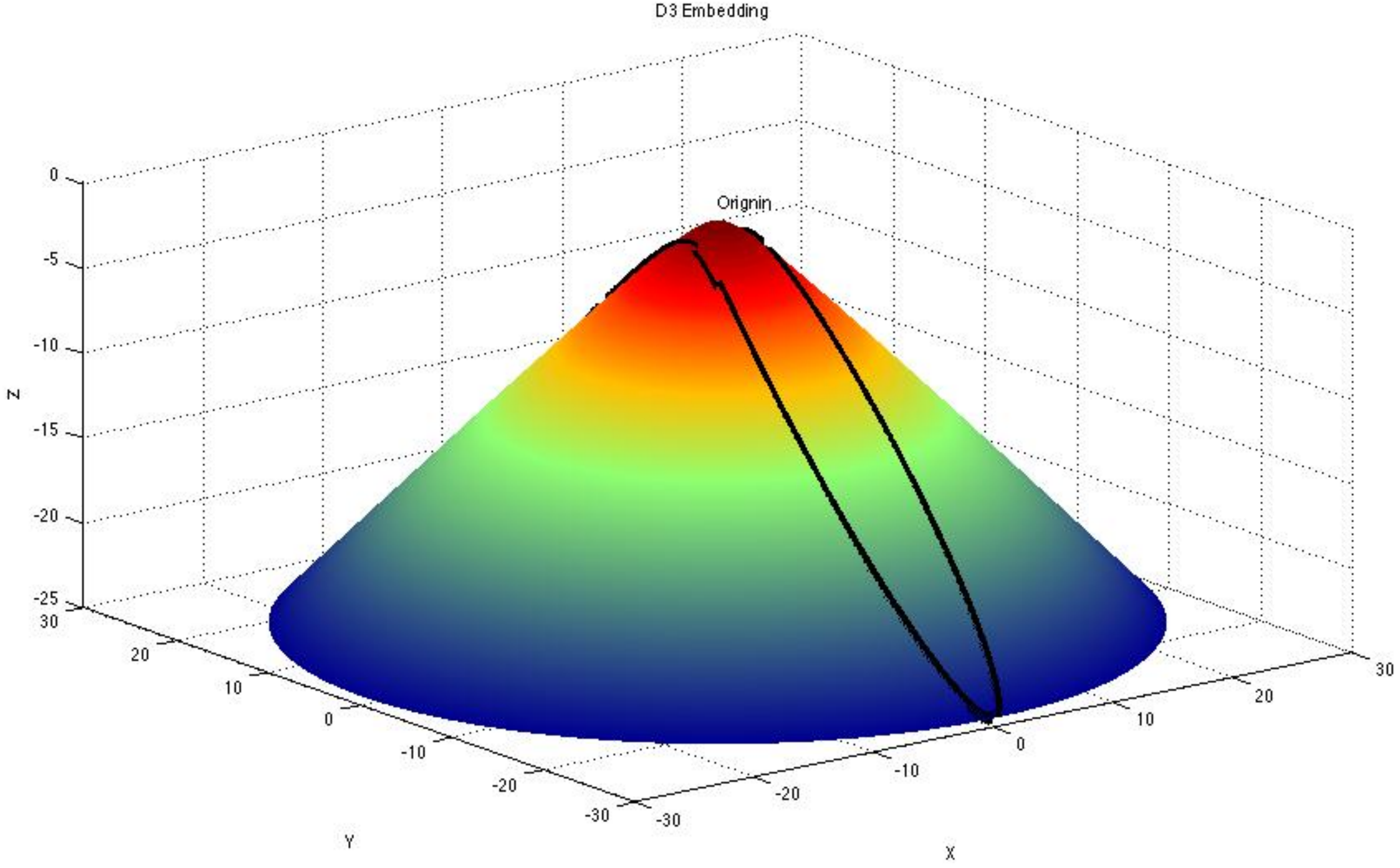}
\caption{3-D View of trajectory with $a=1.1$, $b=1$.}\label{20.jpg}
\end{figure}
\begin{figure}[t!]\centering
 \includegraphics[width=85mm]{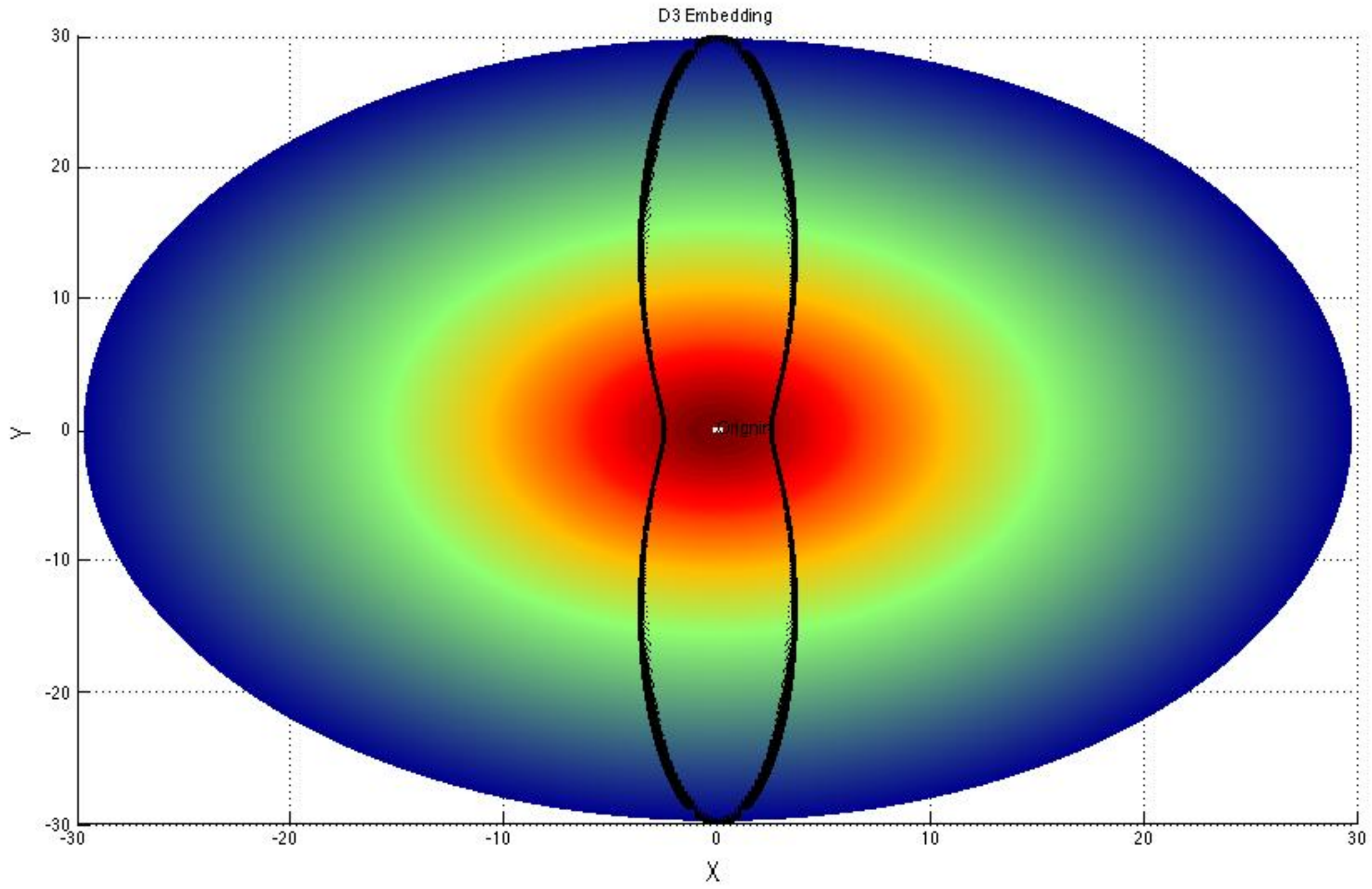}
\caption{Overhead view of trajectory with $a=1.1$, $b=1$.}\label{21.jpg}
\end{figure}

\begin{Example}[escape velocity] For $a=b$, we  encounter the boundary case between bounded trajectory and
unbounded trajectory. A plot of the case is given in Fig.~\ref{22.jpg}.
The shape of the trajectory when looking from above is like two very ``thin'' parabolas meeting at the top of the ``cone''.
(The minimum value for~$e^x$  we choose is $0.001$ for these two plots.)
 \end{Example}

 \begin{figure}[t!]\centering
 \includegraphics[width=85mm]{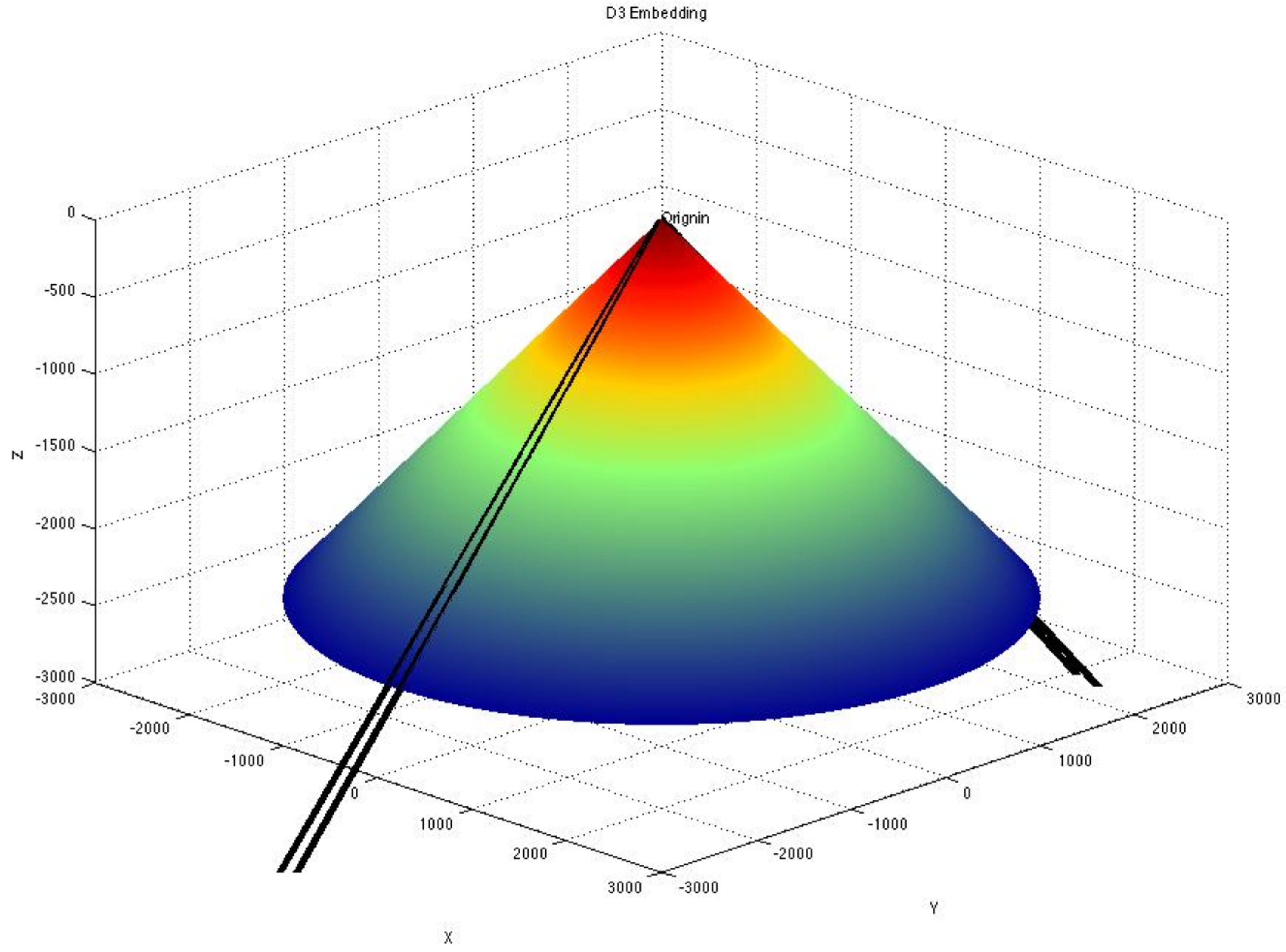}
\caption{3-D View of escape velocity trajectory with $a= b=1$.}\label{22.jpg}
\end{figure}

\begin{Example}[unbounded trajectory] \looseness=-1 When $a<b$, there are  no real $x$ values corresponding to a~range of values of $y$,  clearly seen
from equation (\ref{eq:D3orbit2}), so the trajectory is unbounded. Examples are given in Figs.~\ref{24.jpg} and~\ref{26.jpg}
which are a standard 3-D view and a closeup overhead view.
\end{Example}

\begin{figure}[t!]\centering
\includegraphics[width=85mm]{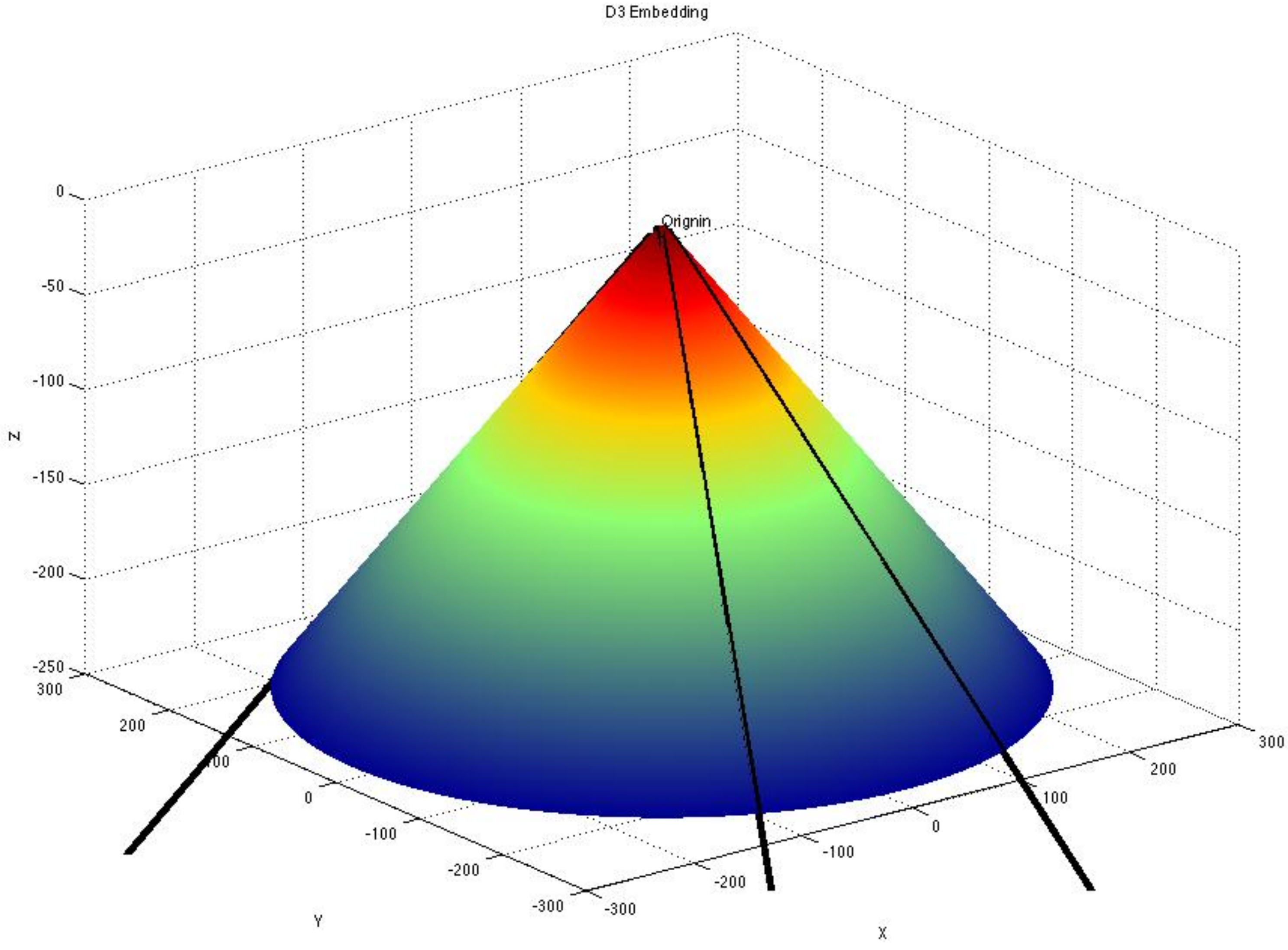}
\caption{3-D View of trajectory with $a=0.7$, $b=1$.}\label{24.jpg}
\end{figure}
\begin{figure}[t!]\centering
 \includegraphics[width=85mm]{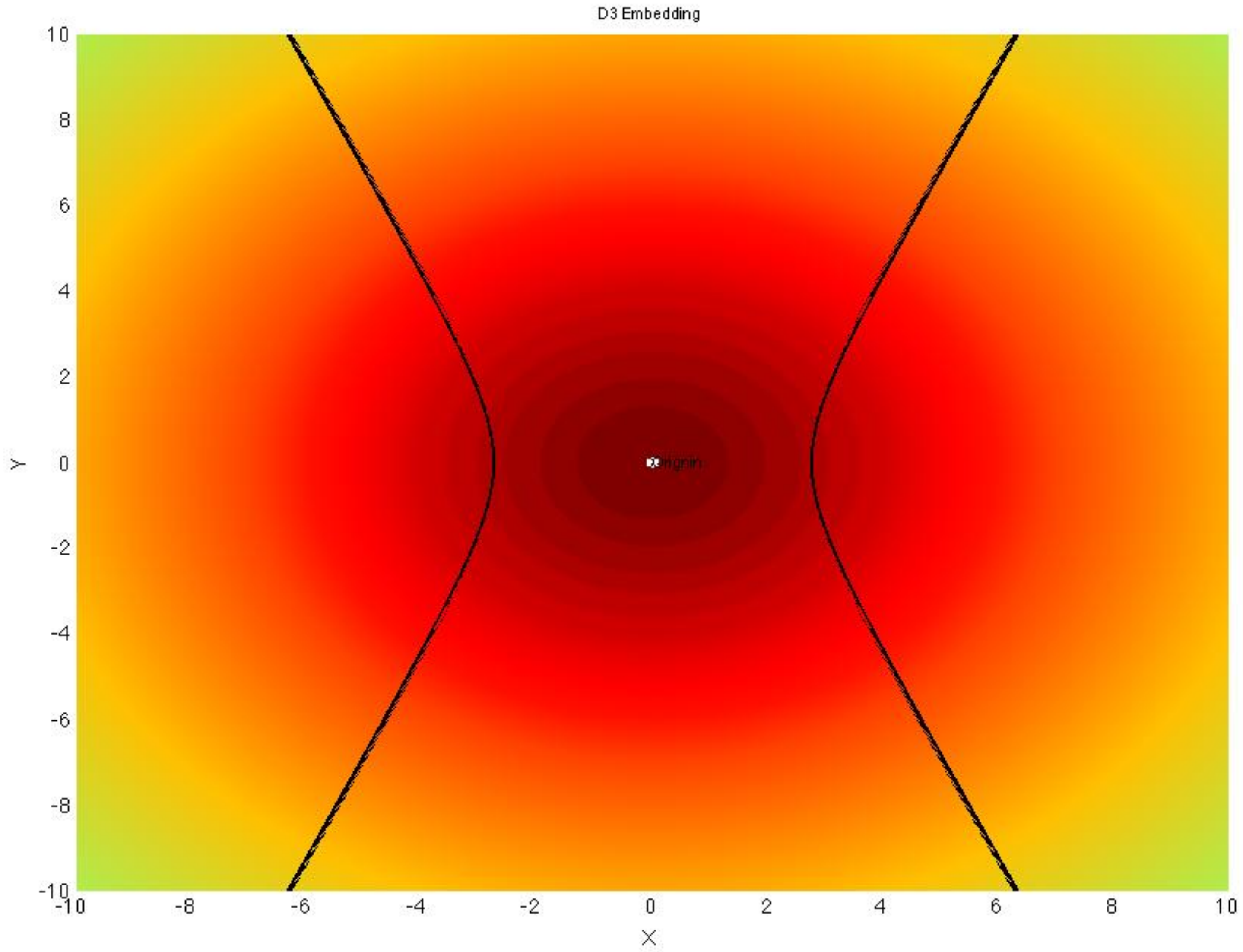}
\caption{Closeup overhead view of trajectory  with $a=0.7$, $b=1$.}\label{26.jpg}
\end{figure}

\subsubsection[Contraction of the $D3$ oscillator to the isotropic oscillator]{Contraction of the $\boldsymbol{D3}$ oscillator to the isotropic oscillator}

The isotropic oscillator in Euclidean space has the Hamiltonian
\begin{gather}\label{isotropicoscillator}
{\cal H}=p_x^2+p_y^2+\omega^2\big(x^2+y^2\big)
\end{gather}
and basis symmetries
\begin{gather*}
 {\cal K}=xp_y-yp_x,\qquad {\cal L}_1=\frac12\big(p_x^2-p_y^2\big)+\frac12\omega^2\big(x^2-y^2\big),\qquad {\cal L}_2=p_xp_y+\omega^2xy,\qquad {\cal K}^2.
 \end{gather*}
The structure equations are
\begin{gather*}
 \{{\cal L}_1,{\cal K}\}=2{\cal L}_2,\qquad \{{\cal L}_2,{\cal K}\}=-2{\cal L}_1,\qquad \{{\cal L}_1,{\cal L}_2\}=-2\omega^2{\cal K}.
 \end{gather*}
The functional relation is
${\cal L}_1^2+{\cal L}_2^2=\frac14{\cal H}^2-\omega^2{\cal K}^2$.
We can obtain this system as a limit of the D3 oscillator as follows: In~(\ref{Darboux3a}) we set
 $x=x'+\ln(\frac{1}{\epsilon})$, $y=y'$, $ {\cal H}'= 2\epsilon{\cal H}$, $\beta=(2\omega^2)/\epsilon^2$.
Then  as $\epsilon \to 0$ we have $K=p_{y'}$ and
\begin{gather*}
{\cal H}'=e^{x'}(p_{x'}^2+p_{y'}^2)+\frac{4\omega^2}{e^{x'}}, \\
X_1' = \epsilon X_1=\frac{e^{x'}}{2}\left(\sin y'  p_{x'}p_{y'}+\frac12\cos y'  \big(p_{x'}^2-p_{y'}^2\big)\right)+\frac{\omega^2\cos y'}{e^{x'}}, \\
X_2' =  \epsilon X_2= \frac{e^{x'}}{2}\left(-\cos  y' p_{x'}p_{y'}+\frac12\sin  y' \big(p_{x'}^2-p_{y'}^2\big)\right)+\frac{\omega^2\sin y'}{e^{x'}}.
\end{gather*}
In terms of f\/lat space Cartesian coordinates $X=r\cos\theta$, $Y=r\sin\theta$ we have
\begin{gather*}
   e^{x'}=\frac{4}{r^2},\qquad y'=2\theta,\qquad {\cal H}'=p_X^2+p_Y^2+\omega^2\big(X^2+Y^2\big), \\
   X_1'= \frac14\big(p_X^2-p_Y^2\big)+\frac14 \omega^2\big(X^2-Y^2\big),\qquad  X_2'= \frac12p_Xp_Y+\frac12\omega^2XY,
\end{gather*}
with ${\cal K}=\frac12(Xp_Y-Yp_X)$.
From the expressions of ${\cal H'}$ and $X'_1$, we get
\begin{gather}
p_x^2=\frac{{\cal H'}+4X'_1-2\omega^2X^2}{2},\qquad p_y^2=\frac{{\cal H'}-4X'_1-2\omega^2Y^2}{2}.\label{eq:pxpy}
\end{gather}
Then from the expression of $X'_2$, we have $p_xp_y=2X'_2-\omega^2XY$. Squaring this equation and equating it to the product of the two equations~\eqref{eq:pxpy}  we obtain the orbit equation,
\begin{gather*}
2\omega^2 \big(X^2+Y^2\big)({\cal H'}-4X'_1)-16\omega^2X'_2XY={\cal H'}-16X'^2_1-16X'^2_2.
\end{gather*}
The shape of the orbit depends only on  $8X'_2/({\cal H'}-4X'_1)$, so we are really investigating the equation $X^2+Y^2-aXY=1$
where $a=8X'_2/({\cal H'}-4X'_1)$. We  take the right side to be $1$ since we now only care about the shape, not the scale.

When $a=0$, i.e., $X'_2=0$, the trajectory is just a circle. When $0<a<2$, i.e., $0<8X'_2/({\cal H'}-4X'_1)<2$, the trajectory is a tilted ellipse
with $y=x$ as the major axis. In Fig.~\ref{33.jpg}, we show three trajectories for dif\/ferent values of  $a$, and also the major axis.
As $a$ increases, the ellipse becomes more elongated in the major axis direction.
\begin{figure}[t!]\centering
  \includegraphics[width=80mm]{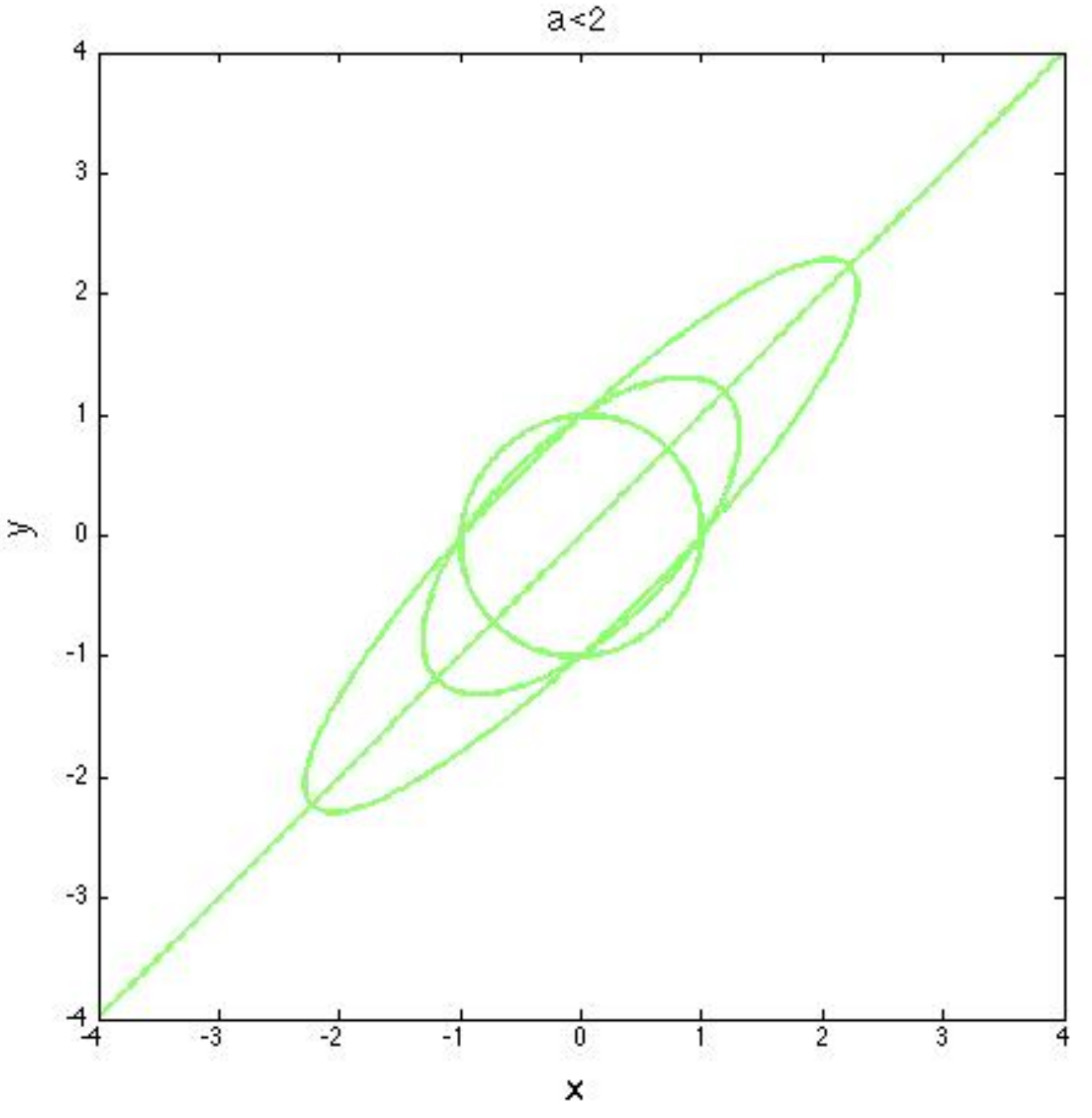}
\caption{Elliptical trajectories.}\label{33.jpg}
\end{figure}

When $a=2$, i.e., ${\cal H'}-4X'_1=4X'_2$, there is a bifurcation point on the momentum map. The trajectory splits into  two parallel straight
lines.
For $a>2$, i.e., $8X'_2/({\cal H'}-4X'_1)>2$, the trajectories are hyperbolas with $y=x$ as the symmetry axis. In Fig.~\ref{35.jpg},
we present three hyperbolic trajectories,  shown in darker colors as $a$ increases.
\begin{figure}[t!]\centering
\includegraphics[width=80mm]{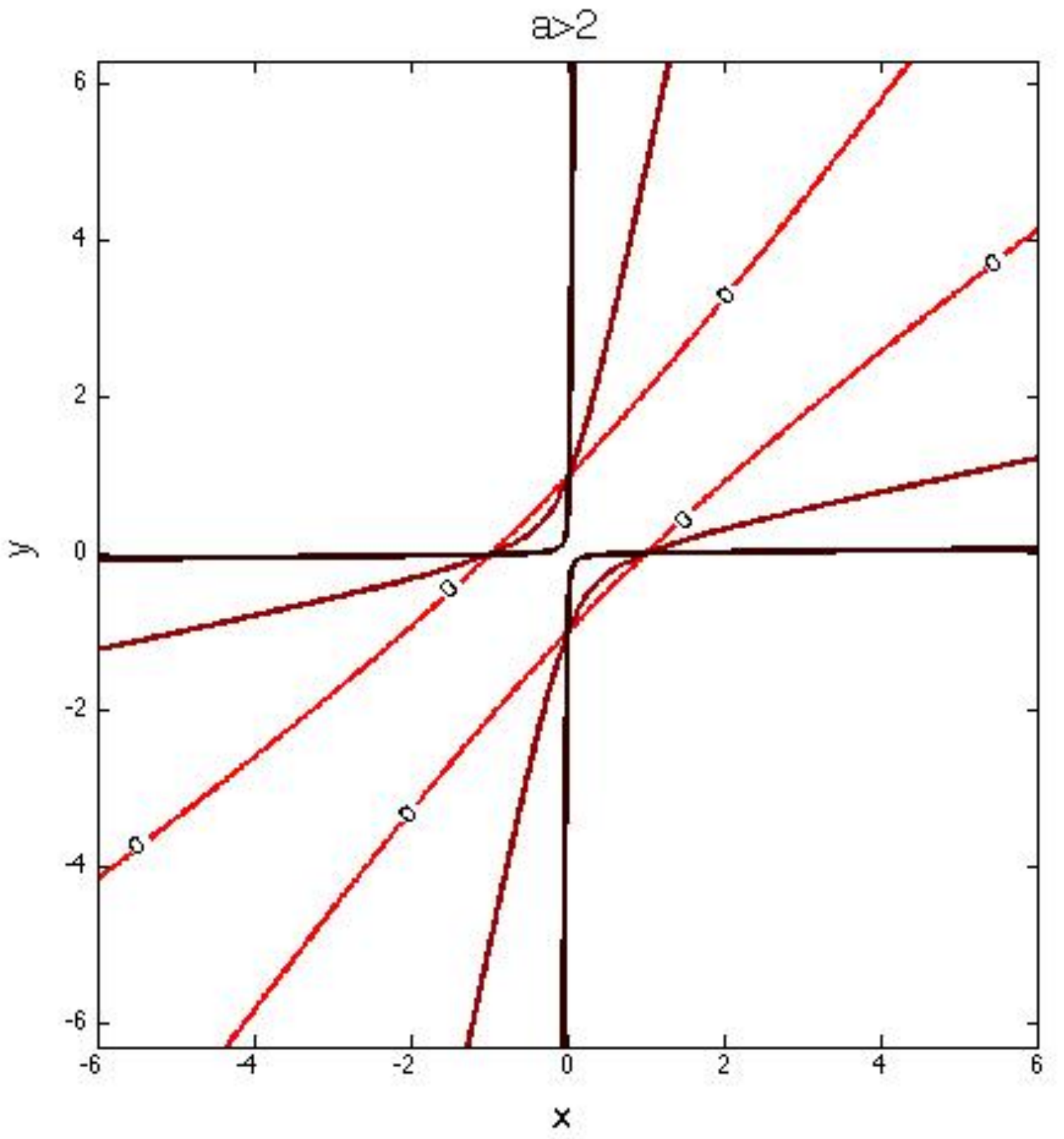}
\caption{Hyperbolic trajectories.}\label{35.jpg}
\end{figure}
It is also possible that $8X'_2/({\cal H'}-4X'_1)<0$, in which case the trajectories will be symmetric about the line $y=-x$.

\subsection[The Higgs oscillator $S3$]{The Higgs oscillator $\boldsymbol{S3}$}

 The classical  system $S3$ on the 2-sphere is determined
by  Hamiltonian \cite{Higgs}
\begin{gather}\label{Hamiltonian}
{\cal H}={\cal J}_1^2+{\cal J}_2^2+{\cal J}_3^2+\frac{\alpha\big(s_1^2+s_2^2+s_3^2\big)}{s_3^2},
\end{gather}
where
${\cal J}_1=s_2p_{3}-s_3p_{2}$ and ${\cal J}_2$, ${\cal J}_3$ are cyclic permutations of this expression. For computational convenience we have embedded the 2-sphere in Euclidean 3-space. Thus we can write
\begin{gather*}
{\cal H}'=p_1^2+p_2^2+p_3^2+\frac{\alpha}{s_3^2}=\frac{{\cal H} +(s_1p_1+s_2p_2+s_3p_3)^2}{s_1^2+s_2^2+s_3^2}
\end{gather*}
 and use the Euclidean space Poisson bracket
$\{{\cal F}, {\cal G}\}=\sum\limits_{i=1}^3(-\partial_{s_i}{\cal F}\partial_{p_i}{\cal G}+\partial_{p_i}{\cal F}\partial_{s_i}{\cal G})$
for our computations, but at the end we restrict to the unit  sphere:  $s_1^2+s_2^2+s_3^2=1$ and $s_1p_1+s_2p_2+s_3p_3=0$. The Hamilton equations for the trajectories $s_j(t)$, $p_j(t)$ in phase space are
\begin{gather*}
\frac{ds_j}{dt}=\{ {\cal H},s_j\},\qquad \frac{dp_j}{dt}=\{ {\cal H},p_j\},\qquad j=1,2,3.
\end{gather*}
 The classical basis for the constants of the motion is
\begin{gather}\label{constants}
{\cal L}_1={\cal J}_1^2+\alpha\frac{s_2^2}{s_3^2},\qquad  {\cal L}_2={\cal J}_1{\cal J}_2-\alpha\frac{s_1s_2}{s_3^2},\qquad
{\cal X}={\cal J}_3.
\end{gather}
The structure relations are
\begin{gather}\label{structure4}
\{{\cal X},{\cal L}_1\}=-2{\cal L}_2,\qquad  \{{\cal X},{\cal L}_2\}=2{\cal L}_1-{\cal H}+{\cal X}^2+\alpha,\qquad
\{{\cal L}_1,{\cal L}_2\}=-2({\cal L}_1+\alpha){\cal X},\!\!\!\!\!
\end{gather}
and the Casimir relation is
\begin{gather*}
{\cal L}_1^2 + {\cal L}_2^2- {\cal L}_1 {\cal H}+ {\cal L}_1 {\cal X}^2+ \alpha{\cal X}^2+ \alpha{\cal L}_1=0.
\end{gather*}
To analyze the classical trajectories it is convenient to replace the basis elements ${\cal L}_1$, ${\cal L}_2$ in the algebra with the new basis set
\begin{gather}\label{Ssymmetries}
{\cal S}_1=\frac12\left({\cal J}_1^2-{\cal J}_2^2-\frac{\alpha\big(s_1^2-s_2^2\big)}{s_3^2}\right),\qquad
{\cal S}_2=\frac12\left(2{\cal J}_1{\cal J}_2-\frac{2\alpha s_1s_2}{s_3^2}\right).
\end{gather}
Thus
${\cal S}_1={\cal L}_1+\frac12({\cal X}^2-{\cal H}+\alpha)$, ${\cal S}_2={\cal L}_2$.
Now the f\/irst two equations~(\ref{structure4}) become
$ \{{\cal J}_3,{\cal S}_1\}=-2{\cal S}_2,\quad  \{{\cal J}_3,{\cal S}_2\}=2{\cal S}_1$, so $({\cal S}_1,{\cal S}_2)$ transforms as a 2-vector with respect to rotations about the 3-axis. Indeed, a rotation through the angle $\beta $ about the 3-axis  rotates the vector by $2\beta$.
The remaining relations  now become
\begin{gather*}
\{{\cal S}_1,{\cal S}_2\}=-{\cal X}\big({\cal H}-{\cal X}^2+\alpha\big),\qquad {\cal S}_1^2+{\cal S}_2^2=\frac14\big({\cal X}^2-{\cal H}+\alpha\big)^2-\alpha {\cal X}^2\equiv \kappa^2.
\end{gather*}
This verif\/ies that the length $\kappa$ of the 2-vector is unchanged under a rotation and shows that $({\cal S}_1,{\cal S}_2)$  is similar to  the Laplace--Runge--Lenz vector for the Kepler problem in Euclidean space. (However, a better superintegrable
2-sphere analog of the Kepler problem is the potential $\alpha s_3/\sqrt{s_1^2+s_2^2}$, called $S6$ in our listing~\cite{KKMP}.)
In analogy with the choice of periaptic coordinates to simplify the Kepler problem, we choose the preferred coordinate system such that the vector  $({\cal S}_1,{\cal S}_2)$ points along the 1-axis, i.e., ${\cal S}_2=0$, ${\cal S}_1=\kappa\ge 0$. Then  equations~(\ref{Hamiltonian}) and~(\ref{constants}) become
\begin{gather}
 {\cal J}_1{\cal J}_2=\frac{\alpha s_1s_2}{s_3^2},\qquad {\cal J}_1^2-{\cal J}_2^2=2\kappa+\frac{\alpha\big(s_1^2-s_2^2\big)}{s_3^2},\nonumber\\
 {\cal J}_3^2={\cal X}^2,\qquad
{\cal J}_1^2+{\cal J}_2^2+{\cal J}_3^2={\cal H}-\frac{\alpha}{s_3^2}.\label{Jrelations1}
\end{gather}
The last 3 equations can be solved to give
\begin{gather*}
{\cal J}_1^2 =  \kappa+\frac12\big({\cal H}-{\cal X}^2-\alpha\big)-\frac{\alpha s_2^2}{s_3^2}, \nonumber\\
 {\cal J}_2^2= -\kappa+\frac12\big({\cal H}-{\cal X}^2-\alpha\big)-\frac{\alpha s_1^2}{s_3^2},\qquad
 {\cal J}_3^2=  {\cal X}^2. 
  \end{gather*}
Substituting these results into the square of the f\/irst equation~(\ref{Jrelations1}) and simplifying, we get the result (assuming $\alpha\ne 0$)
\begin{gather}\label{cone}
\left[\kappa+\frac{{\cal H}-{\cal X}^2-\alpha}{2}\right]s_1^2+\left[-\kappa+\frac{{\cal H}-{\cal X}^2-\alpha}{2}\right]s_2^2-{\cal X}^2s_3^2=0.
\end{gather}
This is the equation of a cone $As_1^2+Bs_2^2+Cs_3^2=0$. The trajectories lie on the intersection of this cone and the unit sphere $s_1^2+s_2^2+s_3^2=1$. Thus we  get conic sections again, as in the Euclidean Kepler problem,  but this time the sections are intersections with the unit sphere, rather than planes. The possible types of trajectory will depend on the signs of~$A$,~$B$,~$C$.

The projection on the $s_1-s_2$ plane is the curve
\begin{gather}\label{S3projection}
\left({\frac{{\cal H}+{\cal X}^2-\alpha}{2}+\kappa}\right)s_1^2
+\left(\frac{{\cal H}+{\cal X}^2-\alpha}{2}-\kappa\right)s_2^2={\cal X}^2.
\end{gather}

The identities
\begin{gather}
\label{kappaident1} \left[\kappa+\frac12({\cal H}-{\cal X}^2-\alpha)\right]\left[-\kappa+\frac12({\cal H}-{\cal X}^2-\alpha)\right]=\alpha{\cal X}^2,\\
\label{kappaident2} \left[\kappa+\frac12({\cal H}+{\cal X}^2-\alpha)\right]\left[-\kappa+\frac12({\cal H}+{\cal X}^2-\alpha)\right]={\cal H}{\cal X}^2,\\
\label{kappaident3} \left[\kappa+\frac12({\cal H}-{\cal X}^2+\alpha)\right]\left[-\kappa+\frac12({\cal H}-{\cal X}^2+\alpha)\right]=\alpha{\cal H}
\end{gather}
will be important in the analysis of trajectories to follow.

{\bf Analog of Kepler's second law of planetary motion.}  We see from equations~(\ref{S3projection}) and~(\ref{kappaident2}) that for the nonzero angular momentum $\cal X$ the projection of the motion on the unit circle in the $s_1-s_2$ plane is a segment of an ellipse, hyperbola or straight line and that none of these curves pass through the center $(s_1,s_2)=(0,0)$ of the circle. Thus as the particle moves along its trajectory $(s_1(t),s_2(t),s_3(t))$, the line segment connecting the projection $(s_1(t),s_2(t))$ to the center of the circle sweeps out an area. Introducing polar coordinates $s_1(t)=r(\phi(t))\cos\phi(t)$, $s_2(t)=r(\phi(t))\sin\phi(t)$ we see that in the  interval from some initial time~$0$ to time~$t$ the area swept out is $A(t) =\frac12\int_{\phi(0)}^{\phi(t)} r^2(\phi)d\phi$. Thus the rate at which the area is swept out is
$ \frac{dA}{dt}=\frac{dA}{d\phi}\frac{d\phi}{dt}= \frac12 r^2(\phi(t)) \frac{d\phi}{dt}$.
Now note from Hamilton's equations that along the trajectory
\begin{gather*}
{\cal X}={\cal J}_3=s_1(t)p_2(t)-s_2(t)p_1(t)= s_1\frac{ds_2}{dt}-s_2\frac{ds_1}{dt}=r^2\frac{d\phi}{dt}.
\end{gather*}
Thus $\frac{dA}{dt}=\frac{\cal X}{2}$, a constant.
\begin{Theorem} If the angular momentum is nonzero, the projection of the trajectory on the unit circle in the $s_1-s_2$ plane sweeps out equal areas in equal times.
 \end{Theorem}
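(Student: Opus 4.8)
The plan is to reduce the assertion to the single fact that $\mathcal{X}=\mathcal{J}_3$ is a constant of the motion, exactly as in the Euclidean Kepler case. First I would introduce polar coordinates $(r,\phi)$ in the $s_1$-$s_2$ plane by $s_1(t)=r(\phi(t))\cos\phi(t)$, $s_2(t)=r(\phi(t))\sin\phi(t)$. This parametrization is legitimate precisely because the angular momentum is assumed nonzero: by the projected orbit equation~(\ref{S3projection}) together with the identity~(\ref{kappaident2}), the projection of the trajectory is a genuine conic (a segment of an ellipse, a hyperbola, or a line) that does not pass through the center $(0,0)$, so $r>0$ along the whole trajectory and $\phi$ is a good parameter. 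The area swept by the segment joining the origin to the projected point between the initial time $0$ and time $t$ is then the sector integral $A(t)=\frac12\int_{\phi(0)}^{\phi(t)}r^2(\phi)\,d\phi$, and the chain rule gives $dA/dt=\frac12 r^2(\phi(t))\,d\phi/dt$.

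Next I would identify $r^2\,d\phi/dt$ with the angular momentum. Differentiating $s_1=r\cos\phi$ and $s_2=r\sin\phi$ in $t$ yields $s_1\dot s_2-s_2\dot s_1=r^2\dot\phi$. On the other hand, Hamilton's equations $\dot s_j=\{\mathcal{H},s_j\}$, applied to the embedded Hamiltonian $\mathcal{H}'=p_1^2+p_2^2+p_3^2+\alpha/s_3^2$ with the stated Poisson bracket, make $\dot s_j$ a fixed multiple of $p_j$; hence $s_1\dot s_2-s_2\dot s_1$ is that same multiple of $s_1p_2-s_2p_1=\mathcal{J}_3=\mathcal{X}$. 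Combining the two displays, $dA/dt$ equals a fixed constant times $\mathcal{X}$ — with the normalization used here, $dA/dt=\mathcal{X}/2$.

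It remains only to observe that $\mathcal{X}=\mathcal{J}_3$ Poisson-commutes with $\mathcal{H}$ — it is one of the basic constants of the motion~(\ref{constants}), consistent with the structure relations~(\ref{structure4}) — so $\mathcal{X}$ is constant along the trajectory and therefore so is $dA/dt$. Integrating, $A(t_2)-A(t_1)=(\mathcal{X}/2)(t_2-t_1)$ depends only on the elapsed time, which is the claimed equal-areas-in-equal-times law. There is no genuine obstacle in this argument: its entire content is the computation $r^2\dot\phi=\mathcal{X}$ together with conservation of $\mathcal{X}$. The one point that deserves explicit mention is the role of the hypothesis $\mathcal{X}\neq0$, which is exactly what guarantees (via~(\ref{S3projection}) and~(\ref{kappaident2})) that the projected curve avoids the center of the circle, so that the polar-coordinate setup and the sector-area formula are meaningful; when $\mathcal{X}=0$ the projection degenerates and the statement is vacuous.
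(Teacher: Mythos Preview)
Your proposal is correct and follows essentially the same route as the paper: introduce polar coordinates in the $s_1$-$s_2$ plane (justified via~(\ref{S3projection}) and~(\ref{kappaident2}) so the projection avoids the center), write the sector-area integral, and identify $r^2\dot\phi$ with $\mathcal{J}_3$ through Hamilton's equations to conclude $dA/dt=\mathcal{X}/2$. If anything, your treatment of the proportionality constant between $\dot s_j$ and $p_j$ is slightly more careful than the paper's one-line identification.
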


Now we begin an analysis of the trajectories. There are two cases, depending on whether the potential is repulsive $(\alpha>0)$ or attractive $(\alpha<0)$.

 {\bf Case 1: $\alpha>0$.}
This is the case of a repulsive potential. The equator of the sphere repels the particle.
Thus motion is conf\/ined to a hemisphere. We start an analysis of the types of trajectories.

If ${\cal X}\ne 0$ and using the fact that all trajectories will be periodic for a repulsive force, we see that  there will necessarily be at least one point on the phase space trajectory for which $p_2=0$. Substituting into the phase space conditions~(\ref{Jrelations1}) it is easy to see that this is possible only for $s_1=0$. Solving all the equations completely we f\/ind that these points on the phase space trajectory are uniquely determined by the constants of the motion as follows:
\begin{gather}
s_1=0,\qquad s_2^2=\frac{\kappa+\frac12\big({\cal H}+{\cal X}^2-\alpha\big)}{{\cal H}},\qquad p_1^2=-\kappa+\frac12\big({\cal H}+{\cal X}^2-\alpha\big),\nonumber\\
 p_2=0,\qquad p_3=0.\label{orbitdef1}
\end{gather}
This prescription gives us a point on each trajectory, comparable to the aphelion for the Kepler system, where we can start to trace out the orbit. It remains to analyze the possible orbits. To see what is the available parameter space for the case ${\cal X}\ne 0$, note that f\/irst we must require $\kappa\ge 0$. Noting the identity~(\ref{kappaident1})
we see that the quantities in brackets must have the same sign. However, if that sign is negative then the cone~(\ref{cone})
will degenerate to a point and not intersect the sphere. Thus to obtain trajectories it is necessary that the constants of the motion satisfy
$-\kappa+\frac12({\cal H}-{\cal X}^2-\alpha)>0$, $ \kappa\ge 0$.
On the other hand, if these conditions are satisf\/ied, we see from equations~(\ref{orbitdef1}) that there exist trajectories for which the corresponding constants of the motion are assumed. Thus the conditions are necessary and suf\/f\/icient.

{\bf An Analog of Kepler`s 3rd law of planetary motion.}   For nonzero angular momentum~$\cal X$ the projection of the motion on the unit circle in the $s_1-s_2$ plane is  an ellipse~(\ref{S3projection}) enclosing the center of the circle. The area of this ellipse is easily seen to be $\pi {\cal X}/{\sqrt{\cal H}}$. Let $T$ be the period of the trajectory. Thus $T$ is the length of time for the projection of the trajectory to trace out the complete ellipse and $A(T)=\pi {\cal X}/{\sqrt{{\cal H}}}$. Since the area of the ellipse is swept out at the constant rate $\frac{dA}{dt}=\frac{\cal X}{2}$  we have
$ A(T)=\frac{dA}{dt}T=\frac{{\cal X}T}{2}$.
Equating these two expressions for $A(t)$ and solving for~$T$ we f\/ind
$T=2\pi/\sqrt{{\cal H}}$.
\begin{Theorem}\label{th:period} For ${\cal X}\ne 0$ the period of an orbit is $T=2\pi/\sqrt{{\cal H}}$.
\end{Theorem}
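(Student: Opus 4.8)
The plan is to obtain $T$ exactly as in the Euclidean Kepler argument: compare the constant areal rate coming from the second-law theorem with the total area enclosed by the planar projection of the orbit.

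First I would verify that, under the hypothesis ${\cal X}\neq 0$ together with the standing assumption $\alpha>0$ of Case~1, the projected curve~(\ref{S3projection}) is a genuine ellipse encircling the origin. Writing it as $As_1^2+Bs_2^2={\cal X}^2$ with $A=\frac12({\cal H}+{\cal X}^2-\alpha)+\kappa$ and $B=\frac12({\cal H}+{\cal X}^2-\alpha)-\kappa$, identity~(\ref{kappaident2}) gives $AB={\cal H}{\cal X}^2>0$, while the necessary conditions $-\kappa+\frac12({\cal H}-{\cal X}^2-\alpha)>0$ and $\kappa\geq 0$ force ${\cal H}+{\cal X}^2-\alpha>0$, hence $A>0$ and then $B={\cal H}{\cal X}^2/A>0$. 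Thus the curve is an ellipse with semi-axes ${\cal X}/\sqrt A$ and ${\cal X}/\sqrt B$ (taking ${\cal X}>0$ without loss of generality, the case ${\cal X}<0$ being identical after $s_2\mapsto-s_2$), and since the right-hand side is positive it strictly encloses $(s_1,s_2)=(0,0)$. Its area is therefore $\pi{\cal X}^2/\sqrt{AB}=\pi{\cal X}^2/\sqrt{{\cal H}{\cal X}^2}=\pi{\cal X}/\sqrt{{\cal H}}$.

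Next I would invoke the second-law theorem, which gives $dA/dt={\cal X}/2$, a constant, along the trajectory. It then remains to observe that over one period the projection sweeps this ellipse exactly once: since ${\cal X}=r^2\,d\phi/dt\neq 0$ the polar angle $\phi$ is strictly monotone, so the projection cannot backtrack, and because the ellipse encircles the origin $\phi$ must change by $\pm2\pi$ per loop; as the repulsive trajectory is periodic (with the reference point~(\ref{orbitdef1}) available), that loop is traced in time $T$. Hence $A(T)=\int_0^T(dA/dt)\,dt=({\cal X}/2)\,T$, and equating with $\pi{\cal X}/\sqrt{{\cal H}}$ yields $T=2\pi/\sqrt{{\cal H}}$.

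The main obstacle is precisely this last identification: showing that one revolution of the planar projection corresponds to exactly one period of the full phase-space curve $(s_1,s_2,s_3,p_1,p_2,p_3)(t)$ rather than a rational multiple of it. I would argue that on the hemisphere to which the repulsive motion is confined, $s_3$ is the positive root of the cone relation~(\ref{cone}) as a function of $s_1,s_2$, and the momenta are then pinned down --- up to signs that vary continuously along the orbit --- by the conserved quantities together with the constraint $s_1p_1+s_2p_2+s_3p_3=0$; so the phase point depends continuously on the projected point, and the phase-space orbit closes up precisely when its projection does.
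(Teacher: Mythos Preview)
Your proposal is correct and follows essentially the same route as the paper: compute the area of the projected ellipse~(\ref{S3projection}) via identity~(\ref{kappaident2}) to get $\pi{\cal X}/\sqrt{{\cal H}}$, then divide by the constant areal rate ${\cal X}/2$ from the second-law theorem. You actually supply more detail than the paper does---in particular the verification that $A,B>0$ and the discussion of why one traversal of the projected ellipse corresponds to exactly one phase-space period, both of which the paper simply asserts.
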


If ${\cal X}=0$ then from (\ref{cone}) and~(\ref{S3projection}) we see that the projection of the motion in the $s_1-s_2$ plane is the straight line segment $s_1=0$. Thus this motion takes place in a plane and the trajectory is a portion of a great circle passing through the poles of the sphere.  Here, $\kappa=\frac12({\cal H}-\alpha)$, ${\cal H}\ge \alpha>0$. The motion is periodic and the points of closest approach to the equator are those such that $s_2^2=({\cal H}-\alpha)/{\cal H}$. The period is again $T=2\pi/\sqrt{{\cal H}}$. There is a special case of equilibrium at a pole when ${\cal H}=\alpha$.

 {\bf Case 2: $\alpha<0$.}
This is the case of an attractive  potential, where the equator of the sphere attracts  the particle. Again,  motion is conf\/ined to a hemisphere. We f\/irst consider the case where ${\cal X}\ne 0$.  Without loss of generality we can assume ${\cal X}>0$. Then the projection of the motion is  counter-clockwise.  Now $\kappa>0$ and from the identity~(\ref{kappaident2})
we see that there will be 3~classes of trajectories, depending on the value of~$\cal H$.

If ${\cal H}>0$ then the coef\/f\/icients of $s_1^2$ and $s_2^2$ in the projection formula (\ref{S3projection}) must have the same sign, necessarily positive for a real trajectory. Thus
\begin{gather*}
{\cal H}>0\colon \quad  {\cal H}+{\cal X}^2-\alpha>2\kappa>{\cal H}-{\cal X}^2-\alpha,\qquad 2\kappa+{\cal H}-{\cal X}^2-\alpha>0,
\end{gather*}
and the projection will be a segment of an ellipse. It is straight-forward to check that the ellipse always intersects the circle at 4 points.
All of these trajectories lead to annihilation at the equator. There will necessarily be a ``perihelion'' point on each trajectory:
$s_2=0$, $p_1=p_3=0$, $s_1^2={\cal X}^2/\left[\kappa +\frac12({\cal H}+{\cal X}^2-\alpha)\right]$, $p_2^2=\kappa+\frac12({\cal H}+{\cal X}^2-\alpha)$.

If ${\cal H}<0$ then the coef\/f\/icient of $s_1^2$ is positive and the coef\/f\/icient of $s_2^2$ is negative in the projection formula~(\ref{S3projection}). Thus
\begin{gather*}
{\cal H}<0\colon \quad  {\cal H}+{\cal X}^2-\alpha-2\kappa<0,\qquad {\cal H}+{\cal X}^2-\alpha+2\kappa>0,
\end{gather*}
and the projections will be segments of hyperbolas. Each branch of the  hyperbola intersects the circle at exactly 2~points. Again,
these trajectories lead to annihilation at the equator. Again there is a ``perihelion'' point on each trajectory:
$s_2=0$, $p_1=p_3=0$, $s_1^2={\cal X}^2/\big[\kappa +\frac12({\cal H}+{\cal X}^2-\alpha)\big]$, $p_2^2=\kappa+\frac12({\cal H}+{\cal X}^2-\alpha)$.

If ${\cal H}=0$ then $\kappa=\frac12({\cal X}^2-\alpha)$ so the coef\/f\/icient of $s_1^2$ is positive and the coef\/f\/icient of $s_2^2$ is 0 in the projection formula~(\ref{S3projection}). Thus
\begin{gather*}
{\cal H}=0\colon \quad  {\cal X}^2-\alpha=2\kappa
\end{gather*}
and the projections will be segments of straight lines. $s_1^2={\cal X}^2/({\cal X}^2-\alpha)$. Here $p_1=0$ and $p_2^2={\cal X}^2-\alpha$,
a constant in accordance with the analog of Kepler's Second Law.  These trajectories lead to annihilation at the equator. There is a ``perihelion'' point
on each trajectory:
$s_2=0$, $p_1=p_3=0$, $ s_1^2={\cal X}^2/({\cal X}^2-\alpha)$, $p_2^2={\cal X}^2-\alpha$.

Finally, we suppose ${\cal X}=0$, so that the motion takes place in a plane through the poles. If ${\cal H}-\alpha>0$ then the motion takes place in the plane $s_1=0$. All trajectories annihilate at the equator. The projected trajectories each pass through the center of the circle, and we have $p_2^2={\cal H}-\alpha-{\cal H}s_2^2$. If ${\cal H}-\alpha<0$ then the motion takes place in the plane~$s_2=0$. All trajectories annihilate at the equator. The projected trajectories do not pass through the center of the circle, and we have $p_1^2={\cal H}-\alpha-{\cal H}s_1^2$. If ${\cal H}=\alpha$ then there are possible trajectories on any plane passing through the poles. If we go to new rotated~$s_1$,~$s_2$ coordinates such that the motion takes place in the plane $s_1=0$, then $p_1=0$ and $p_2^2=-\alpha s_2^2$. All trajectories annihilate at the equator, except for unstable equilibria at the poles.

\subsubsection[Contraction of $D4(b)$ to the Higgs oscillator on the sphere]{Contraction of $\boldsymbol{D4(b)}$ to the Higgs oscillator on the sphere}

By letting $\epsilon\to 0$ in $b=-2+\epsilon^2$ and def\/ining new constant $\alpha=4\beta$, the Hamiltonian equa\-tion~(\ref{D4bham}) becomes
\begin{gather*}
 {\cal H} =\frac{\sinh^2(2x)}{2\cosh(2x)-2}\big(p_x^2+p_y^2\big)+\frac{4\beta}{2\cosh(2x)-2}
=\frac{(2\sinh x\cosh x)^2}{(2\sinh x)^2}\big(p_x^2+p_y^2\big)+\frac{4\beta}{(2\sinh x)^2} \\
\hphantom{{\cal H}}{} =\cosh^2x\big(p_x^2+p_y^2\big)+\frac{\beta(\cosh^2 x-\sinh^2 x)}{\sinh^2x}
=\cosh^2x\big(p_x^2+p_y^2\big)+ \frac{\beta\cosh^2 x}{\sinh^2x}-\beta.
\end{gather*}
We can ignore the constant term $-\beta$ and write{\samepage
\begin{gather*}
{\cal H}'=\cosh^2x\big(p_x^2+p_y^2\big)+ \frac{\beta\cosh^2 x}{\sinh^2x},
\end{gather*}
such that ${\cal H}'={\cal H}+\beta$.  This is the Higgs oscillator on the 2-sphere.}

In terms of the Euclidean embedding of the sphere we have
$ s_1\!=\!{\cos y}/{\cosh x}$, $s_2\!=\!{\sin y}/{\cosh x}$, $s_3={\sinh x}/{\cosh x}$,
so $s_1^2+s_2^2+s_3^2=1$.
If we set $s_1=r'\cos \theta'$,  $s_2=r'\sin \theta'$ as in polar coordinates, then $r'=1/{\cosh x}$ and $\theta'=y$. However,
we notice that in the analog of Kepler's 2nd law part of the Section~\ref{sec:D4},
$r={\sqrt{2\cosh(2x)+b}}/({2\sinh{2x}})$ and $\theta=2y$. As $\epsilon\to 0$, $r=1/({2\cosh x})$. Thus $r'=2r$ and $\theta'=\theta/2$,
so this contraction space is a double covering of the original~$D4(b)$ space. We should be careful of this fact in the further computations.
Expressed in the phase space $(s_1, s_2, s_3, p_1, p_2, p_3)$, we have
${\cal H'}={\cal J}_1^2+{\cal J}_2^2+{\cal J}_3^2+\frac{\beta}{s_3^2}$,
where ${\cal J}_1=s_2p_{3}-s_3p_{2}$ and~${\cal J}_2$,~${\cal J}_3$ are cyclic permutations of this expression.
Considering the potential part of the Hamiltonian, we have a new basis set,
${\cal Y'}_1={\cal J}_1^2-{\cal J}_2^2-\beta(s_1^2-s_2^2)/s_3^2$,
${\cal Y'}_2=2{\cal J}_1{\cal J}_2-(2\beta s_1s_2/s_3^2)$, ${\cal J'}={\cal J}_3$.

 We will show that in this limit, the orbit equation (\ref{eq:D4orbit}) will yield the orbit equation~(\ref{S3projection}).   First, we transform $\cosh(2x)$ and $\cos(2y)$ in equation~(\ref{eq:D4orbit}) to express them in
terms of~$s_1$ and~$s_2$. ($s_3$ can be expressed in terms of~$s_1$ and~$s_2$.)
From the fact that $\cosh^2 x=1/(s^2_1+s^2_2)$, we have
\begin{gather*}
\cosh(2x)=2\cosh^2 x-1=\frac{2}{s^2_1+s^2_2}-1,\qquad
\cos(2y)=\cos^2 y-\sin^2 y=\frac{s_1^2-s_2^2}{s^2_1+s^2_2}.
\end{gather*}  Then equation (\ref{eq:D4orbit}) becomes
$ [2/(s^2_1+s^2_2)-1 ]{\cal J'}^2- [(s_1^2-s_2^2)/(s^2_1+s^2_2) ]\kappa'={\cal H},$
where $\kappa'$ is the length of the two vector $({\cal Y'}_1, {\cal Y'}_2)$ and we set ${\cal Y'}_2=0$ such that $\kappa'={\cal Y'}_1$.
Multiplying the equation by $(s^2_1+s^2_2)$, plugging in the fact that ${\cal H}={\cal H'}-\beta$ and doing some rearrangements, we obtain
\begin{gather}\label{D4-HiggsOrbit}
\left(\frac{{\cal H'}+{\cal J'}^2-\beta+\kappa'}{2}\right)s_1^2+\left(\frac{{\cal H'}+{\cal J'}^2-\beta-
\kappa'}{2}\right)s_2^2={\cal J'}^2,
\end{gather}
where we realize that $\kappa'={\cal Y'}_1$ which is 2 times the basis ${\cal S}_1$ in equation (\ref{Ssymmetries}),
so $\kappa'=2\kappa$ where~$\kappa$ is the one in equation~(\ref{S3projection}). Therefore, we can see that~(\ref{D4-HiggsOrbit}) is
the same orbit equation as~(\ref{S3projection}).

{\bf Analog of Kepler's second law of planetary motion.} Following the same procedure as in Section \ref{sec:D4}, and noticing that
$r'=2r$ and $\theta'=\theta/2$, we have,
$\frac{dA'}{dt}= \frac12 \left(4r^2(\theta(t))\right) \frac{d\theta}{2dt}=r^2\frac{d\theta}{dt}=2\frac{dA}{dt}$.
Therefore,
$\frac{dA'}{dt}= {\cal J'},$
which matches the expression  for the  Higgs oscillator.

{\bf Period of the orbit.} With the same procedure as in Section~\ref{sec:D4} and  $r'=2r$ and $\theta'=\theta/2$, we f\/ind  area swept out as a trajectory goes through one period $T'$ is
$A'=\frac12\int_{0}^{2\pi} r'^2(\theta')d\theta'=2\int_{0}^{2\pi} r^2(\theta)d\theta=4A$
where $A$ is the area in equation (\ref{eq:Area}).
Also, from the second law we see that $A'=\frac12 {\cal J'}T'$. Hence,
\begin{gather*}
T'=\frac{A'}{{\cal J'}}=\frac{4A}{{\cal J'}}=2T=\pi\left(\frac{2+b}{\sqrt{(-2{\cal H}-b{\cal H}+\alpha)}}+\frac{2-b}{\sqrt{(2{\cal H}-b{\cal H}+\alpha)}}\right),
\end{gather*}
where $T$ is the same as  equation~(\ref{period2}).

In the limit as $\epsilon\to 0$, we take $b=-2$ into the above equation of $T'$ and get ,
\begin{gather*}
{\cal T'}={\pi}\frac{4}{\sqrt{4{\cal H}+\alpha}}=\frac{2\pi}{\sqrt{4({\cal H'}-\beta)+4\beta}}=\frac{\pi}{\sqrt{{\cal H'}}},
\end{gather*}
which dif\/fers from the period in Theorem~\ref{th:period} by a factor of $\frac12$, due to the fact that the contraction is a double covering of the original $D4$ space.

\subsubsection{Contraction of the Higgs to the isotropic oscillator}

This contraction has a simple geometric interpretation, the contraction of a 2-sphere to a plane. We can consider the Higgs oscillator as living
in a 2-dimensional bounded ``universe'' of radius $1$ in some set of units. Suppose an observer is situated in this universe ``near''
the attractive north pole. Our observer uses a system of units with unit length $\epsilon$ where $0<\epsilon\ll 1$ and we suppose
that using these units  the universe appears f\/lat to the observer. Thus in the observer's units we have $s_1=\epsilon X$,
$s_2=\epsilon Y$, $s_3=\sqrt{1-\epsilon^2(X^2+Y^2)}=1-(X^2+Y^2)\epsilon^2/2+O(\epsilon^4)$. Here, $\epsilon^2$ is so small that to
 the observer it appears that $s_3=1$. Thus, to the observer, it appears that the universe is the plane $s_3=1$ with local Cartesian coordinates
 $(X,Y)$. We compare the actual system on the 2-sphere with the system as it appears to the observer and we assume that $\epsilon^2$ is so small
 that it can be neglected, unless we are dividing by it. Now we have
\begin{gather*}
 s_1=\epsilon X,\qquad s_2=\epsilon Y,\qquad s_3\approx 1,\qquad p_1=\frac{p_X}{\epsilon},\qquad p_2=\frac{p_Y}{\epsilon},\qquad p_3\approx -(Xp_X+Yp_Y).
 \end{gather*}
We def\/ine new constants $\omega^2$, $h$ by $\alpha=\omega^2/\epsilon^4$, ${E}-{\cal X}^2-\omega^2/\epsilon^4=h/\epsilon^2$, where ${\cal J}_3={\cal X}$.
Then substituting these results into the Higgs Hamiltonian equation ${\cal H}\equiv {\cal J}_1^2+{\cal J}_2^2+{\cal J}_3^2+\alpha/s_3^2=E$, we f\/ind
$ (p_X^2+p_Y^2)/\epsilon^2+\omega^2/\big(\epsilon^2\sqrt{X^2+Y^2}\big)=h/\epsilon^2$.
Multiplying both sides of this equation by~$\epsilon^2$  we obtain the Hamiltonian equation for the Euclidean space isotropic oscillator
\begin{gather*}
{\tilde{\cal H}}\equiv p_X^2+p_Y^2+{\omega^2}\big(X^2+Y^2\big)= {h},
\end{gather*}
in agreement with (\ref{isotropicoscillator}).
Using the same procedure we f\/ind that the constants of the motion become ${\cal K}={\cal X}=xp_y-yp_x$, and
\begin{gather*}
{\tilde{\cal L}}_1=\lim_{\epsilon\to 0}\epsilon^2 {\cal L}_1=p_Y^2+\omega^2Y^2,
\qquad  {\tilde{\cal L}}_2=\lim_{\epsilon\to 0}\epsilon^2 {\cal L}_2=-p_Xp_Y-\omega^2XY,
\end{gather*}
an alternate basis for the symmetries of the isotropic oscillator.
Similarly, the structure equations for the Higgs oscillator go in the limit to the structure equations of the isotropic oscillator.

\section{Examples  of 2D 2nd degree nondegenerate systems}\label{section3}
\subsection[The system $S7$ on the sphere]{The system $\boldsymbol{S7}$ on the sphere}

 The classical  system $S7$ on the 2-sphere \cite{KKMP}  is determined
by the Hamiltonian
\begin{gather*}
{\cal H}={\cal J}_1^2+{\cal J}_2^2+{\cal J}_3^2+\frac{a_1 s_1}{s_2^2\sqrt{s_1^2+s_2^2}}+\frac{a_2}{s_2^2} +\frac{a_3s_3}
{\sqrt{s_1^2+s_2^2}},
\end{gather*}
where
${\cal J}_1=s_2p_{3}-s_3p_{2}$ and ${\cal J}_2$, ${\cal J}_3$ are cyclic permutations of this expression.
We have embedded the 2-sphere in Euclidean 3-space, so
\begin{gather*}
p_1^2+p_2^2+p_3^2={\cal J}_1^2+{\cal J}_2^2+{\cal J}_3^2 +\frac{(s_1p_1+s_2p_2+s_3p_3)^2}{s_1^2+s_2^2+s_3^2}
\end{gather*}
 and we can use the Poisson bracket
$\{{\cal F}, {\cal G}\}=\sum\limits_{i=1}^3(-\partial_{s_i}{\cal F}\partial_{p_i}{\cal G}+\partial_{p_i}{\cal F}\partial_{s_i}{\cal G})$
for our computations, but at the end we restrict to the unit  sphere:  $s_1^2+s_2^2+s_3^2=1$ and $s_1p_1+s_2p_2+s_3p_3=0$. The Hamilton equations for the trajectories $s_j(t)$, $p_j(t)$ in phase space are
\begin{gather*}
\frac{ds_j}{dt}=\{ {\cal H},s_j\},\qquad \frac{dp_j}{dt}=\{ {\cal H},p_j\},\qquad j=1,2,3.
\end{gather*}
 The classical basis for the constants of the motion is
\begin{gather*}
{\cal L}_1={\cal J}_3^2-\frac{a_1 \sqrt{s_1^2+s_2^2} s_1}{s_2^2}+\frac{a_2\big(s_1^2+s_2^2\big)}{s_2^2},\\
 {\cal L}_2=-{\cal J}_1{\cal J}_3+\frac{a_1 s_3\big(s_2^2+2s_1^2\big)}{2s_2^2\sqrt{s_1^2+s_2^2}}
+\frac{a_2s_1s_3}{s_2^2}+\frac{a_3s_1}{2\sqrt{s_1^2+s_2^2}},
\end{gather*}
The structure relations are ${\cal R}=\{ {\cal L}_1,{\cal L}_2\}$ and
\begin{gather*}
\{{\cal R},{\cal L}_1\}=a_1a_3+4{\cal L}_1{\cal L}_2, \\ \{{\cal R},{\cal L}_2\}
=-6{\cal L}_1^2-2{\cal L}_2^2+4{\cal H}{\cal L}_1-2a_2{\cal H}+4a_2{\cal L}_1+\frac{a_3^2-a_1^2}{2},\\
{\cal R}^2+4{\cal L}_1^3-4{\cal L}_1^2{\cal H}+4{\cal L}_2^2{\cal L}_1
-4a_2{\cal L}_1^2+4a_2{\cal H}{\cal L}_1-a_1^2{\cal H}\\
\qquad{} -\big(a_3^2-a_1^2\big){\cal L}_1+2a_1a_3{\cal L}_2+a_2a_3^2=0.
\end{gather*}

\subsubsection[$S7$ in polar coordinates]{$\boldsymbol{S7}$ in polar coordinates}

In terms of polar coordinates $r$, $\theta$ where
$ s_1=r\cos\theta$, $s_2=r\sin\theta$, $s_3=\pm \sqrt{1-r^2}$,
and $0\le r\le 1$, $0\le \theta <\pi$,  we have
\begin{gather}
 {\cal H}=\frac{p_\theta^2}{r^2}+\big(1-r^2\big)p_r^2+\frac{a_1\cos\theta}{r^2\sin^2\theta}+\frac{a_2}{r^2\sin^2\theta}\pm\frac{a_3\sqrt{1-r^2}}{r},\nonumber\\
  {\cal L}_1=p_\theta^2+\frac{2a_1\cos\theta+2a_2}{1-\cos(2\theta)}, \label{L2polar}\\
 {\cal L}_2=\pm \sqrt{1-r^2}\left(\frac{\cos\theta\, p_\theta}{r}+\sin\theta  p_r\right)p_\theta
 \pm\sqrt{1-r^2}\frac{(a_1+a_1\cos^2\theta+2a_2\cos\theta)}{2r(1-\cos^2\theta)}+\frac12 a_3\cos\theta.\nonumber
\end{gather}
Here, $\pm$ is interpreted as $+$ in the northern hemisphere and $-$ in the southern hemisphere.
 Hamilton's equations give
 \begin{gather*}
   {\dot r}=2\big(1-r^2\big)p_r,\qquad {\dot \theta}=\frac{2p_\theta}{r^2},\qquad {\dot p}_\theta=\frac{a_1+2a_2\cos\theta+a_1\cos^2\theta}{r^2\sin^3\theta},\\
   {\dot p}_r=\frac{2p_\theta^2}{r^3}+2rp_r^2+\frac{2a_1\cos\theta}{r^3\sin^2\theta}+\frac{2a_2}{r^3\sin^2\theta}
 \pm\frac{a_3}{\sqrt{1-r^2}}\pm\frac{a_3\sqrt{1-r^2}}{r^2}.
 \end{gather*}

 {\bf Assumptions.}  We require $a_3<0$, and $a_1,a_2>0$.  and restrict our attention to trajectories caged in one
 hemisphere (eastern or western):
 \begin{gather*} -1 <s_3<1, \qquad  0<s_2<1,\qquad -1<s_1<1,\qquad {\rm or}\qquad
  1\ge r>0, \qquad  \pi>\theta> 0.
  \end{gather*}
 Note that the north pole is attractive and the south pole is repulsive.

\subsubsection{Trajectories}

Eliminating $p_\theta$, $p_r$ in the expressions for ${\cal H}$, ${\cal L}_1$, ${\cal L}_2$ we f\/ind the implicit equation for the trajectories:
\begin{gather*}
0= \cos^2\theta a_3^2r^2-4{\cal L}_1^2r^2
\cos^2\theta +4{\cal H}r^2{\cal L}_1\cos^2\theta-4\cos\theta a_3r^2{\cal L}_2-8\cos\theta {\cal L}_1{\cal L}_2r\sqrt{1-r^2}\\
\hphantom{0=}{} -4{\cal L}_1a_1\cos\theta  r^2-2a_1\cos\theta  a_3r\sqrt{1-r^2}+4{\cal H}r^2a_1\cos\theta-r^2a_1^2+4{\cal L}_1^2-4{\cal L}_1a_2\\
\hphantom{0=}{}
-4a_1{\cal L}_2r\sqrt{1-r^2}+a_1^2+4{\cal L}_2^2r^2-4a_2a_3\sqrt{1-r^2}r
+4{\cal H }r^2a_2-4{\cal H}r^2{\cal L}_1\\
\hphantom{0=}{}
+4a_3\sqrt{1-r^2}r{\cal L}_1.
\end{gather*}
This is a quadratic equation for $\cos\theta$ as a function of $r$, with  solutions
\begin{gather}\label{orbiteqnsupper}
\cos\theta=\frac{N(r)\pm 2\sqrt{S(r)}}{D(r)},
\end{gather}
in the upper hemisphere, where
\begin{gather*}
 N= a_1a_3\sqrt{1-r^2}+2a_3r{\cal L}_2-2{\cal H} ra_1+4{\cal L}_1{\cal L}_2\sqrt{1-r^2}+2r{\cal L}_1a_1, \\
 S=\big({\cal H}r^2-\sqrt{1-r^2}  ra_3-{\cal L}_1\big){\cal R}^2,\qquad
D= r\big(4{\cal H}{\cal L}_1-4{\cal L}_1^2+a_3^2\big).
\end{gather*}
We note that
\begin{gather*}
N^2-4S= \big(a_3^2+4{\cal H}{\cal L}_1-4{\cal L}_1^2\big)\big(4{\cal L}_2^2r^2-4a_3\sqrt{1-r^2}ra_2+4{\cal L}_1a_3\sqrt{1-r^2}r+4{\cal H}r^2a_2 \\
\hphantom{N^2-4S=}{} -4{\cal H}r^2{\cal L}_1-4{\cal L}_1a_2-r^2a_1^2-4a_1{\cal L}_2\sqrt{1-r^2}r+4{\cal L}_1^2+a_1^2\big).
\end{gather*}
In the lower hemisphere the trajectories are given by
\begin{gather}
 \cos\theta=\frac{{\tilde N}(r)\pm 2\sqrt{{\tilde S}(r)}}{{\tilde D}(r)},\nonumber\\
 {\tilde  N}= -a_1a_3\sqrt{1-r^2}+2a_3r{\cal L}_2-2{\cal H} ra_1-4{\cal L}_1{\cal L}_2\sqrt{1-r^2}+2r{\cal L}_1a_1,\label{orbiteqnslower} \\
 {\tilde  S}=\big({\cal H}r^2+\sqrt{1-r^2}  ra_3-{\cal L}_1\big){\cal R}^2,\qquad
{\tilde D}= r\big(4{\cal H}{\cal L}_1-4{\cal L}_1^2+a_3^2\big).\nonumber
\end{gather}
Since ${\cal R}^2\ge0$ we see that if ${\cal H}\le {\cal L}_1$ then ${\tilde S}\le 0$. Thus for this case, the trajectory is never in the lower
hemisphere. If
${\cal H}\ge {\cal L}_1$  an elementary analysis shows that ${\tilde S}(r)>0$ exactly in an interval $0<\alpha\le r\le 1$ and is nondecreasing on that interval.

 {\bf Case 1}: $a_2>a_1>0$. All trajectories are closed and periodic. We must have ${\cal L}_1>0$ for trajectories and we initially assume ${\cal R}^2>0$.
The equations for perigee and apogee (distance from the projection of the trajectory in the equatorial plane to the origin) are
\begin{gather*}
 r^2 =\frac{a_3^2+2{\cal H}{\cal L}_1\pm a_3\sqrt{a_3^2+4{\cal H}{\cal L}_1-4{\cal L}_1^2}}{2(a_3^2+{\cal H}^2)}.
 \end{gather*}
Thus, in order to have physical trajectories we must have
\begin{gather*}
a_3^2+4{\cal H}{\cal L}_1-4{\cal L}_1^2\ge 0\qquad {\rm and}\qquad  a_3^2+2{\cal H}{\cal L}_1\ge 0.
\end{gather*}
Suppose the trajectory touches the equatorial plane at $r=1$, $\theta=\theta_0$. Then taking a limit in~(\ref{L2polar}) as the trajectory goes to the
boundary we f\/ind that if the trajectory touches the equatorial plane it does so at an angle $\cos \theta$ which is a solution of the quadratic equation
\begin{gather*}
 \left(\frac14 a_3^2+({\cal H}-{\cal L}_1){\cal L}_1\right)\cos^2\theta   -(a_3{\cal L}_2-a_1({\cal H}-{\cal L}_1))
\cos\theta+{\cal L}_2^2-({\cal H}-{\cal L}_1)({\cal L}_1-a_2)=0.
\end{gather*}
Indeed, $\cos\theta_0={\big[2 (a_3{\cal L}_2+a_1({\cal L}_1-{\cal H}) )\pm 2\sqrt{({\cal H}-{\cal L}_1){\cal R}^2}\big]}/ [{a_3^2
 +4{\cal L}_1({\cal H}-{\cal L}_1)} ]$.
 Thus, a necessary condition for the trajectory to reach the equatorial plane is that
 $ {\cal H}\ge {\cal L}_1$.

\looseness=-1
If the trajectory just touches the equatorial plane but doesn't go into the lower hemisphere then we must have ${\cal H}={\cal L}_1$ and
 $ -\frac{a_3}{2}>{\cal L}_2>\frac{a_3}{2}$.
An example of touching the equatorial plane is Fig.~\ref{Fig21.pdf}.
\begin{figure}[t!]\centering
 \includegraphics[width=61mm]{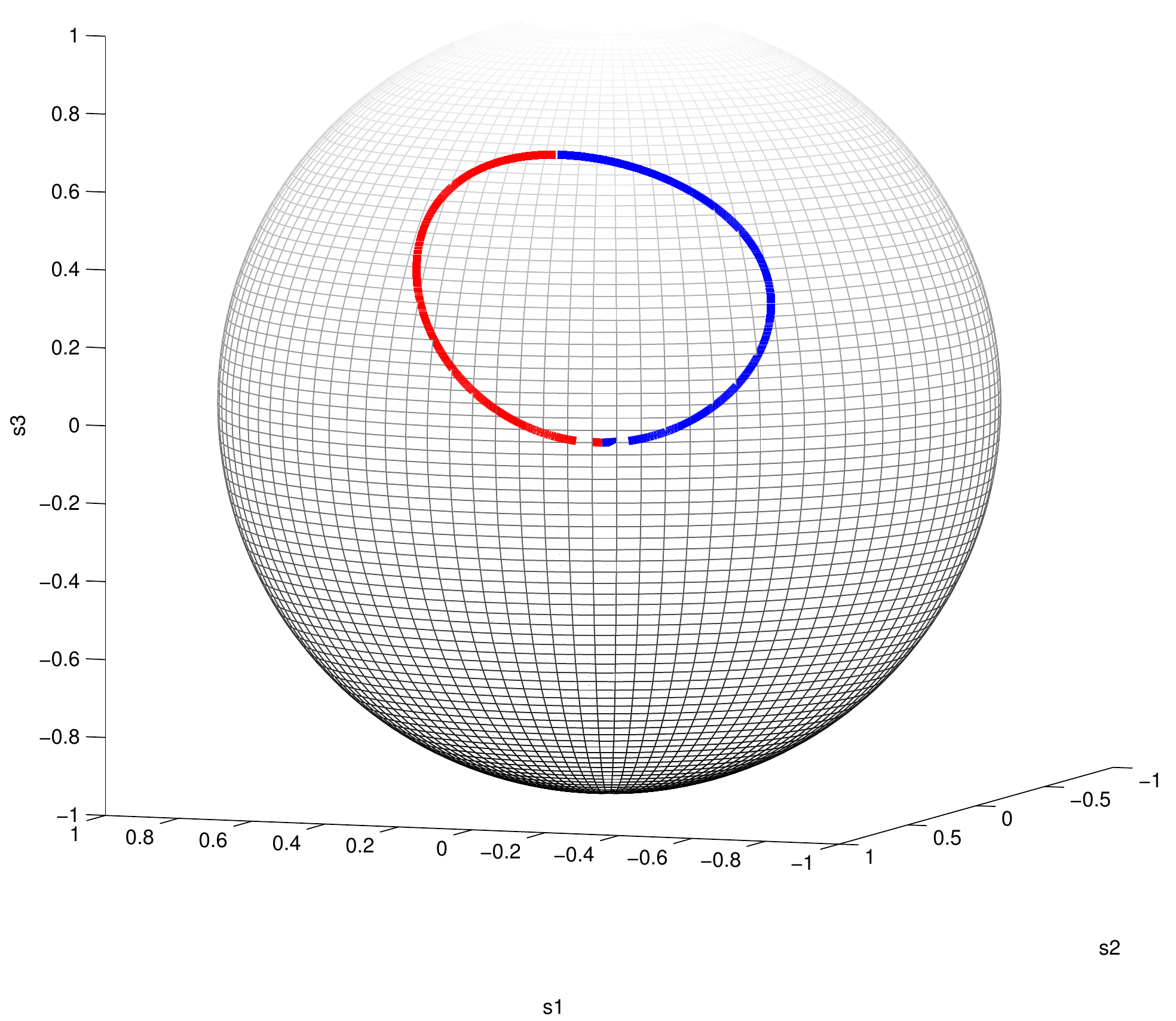}
\caption{Touching the equatorial plane: $a_1=3$, $a_2=5$, $a_3=-6$, $L_1=6$, $H=6$, $L_2=1$.}\label{Fig21.pdf}
\end{figure}
The dif\/ferent colors in the accompanying graphs correspond to the dif\/ferent curves~(\ref{orbiteqnsupper}),~(\ref{orbiteqnslower})
that make up the trajectories.
 If the trajectory passes through the equatorial plane then we must have
 $ {\cal H}>{\cal L}_1$  and $  4{\cal L}_1^2-4a-2{\cal L}_1+a_1^2\ge 0$.
 The angle of crossing lies in the interval
 \begin{gather*} -\frac{a_1}{2{\cal L}_1}- \frac{1}{2{\cal L}_1}\sqrt{4{\cal L}_1^2-4a_2{\cal L}_1+a_1^2}\le \cos\theta_0\le-\frac{a_1}{2{\cal L}_1}
 + \frac{1}{2{\cal L}_1}\sqrt{4{\cal L}_1^2-4a_2{\cal L}_1+a_1^2},
 \end{gather*}
 and for each $\theta_0$ in that interval the possible values of ${\cal L}_2$ are
\begin{gather*}
 {\cal L}_2=\frac{a_3\cos\theta_0}{2}\pm \sqrt{({\cal L}_1-{\cal H})(\cos^2\theta_0\ {\cal L}_1-{\cal L}_1+a_1\cos\theta_0\ +a_2)}.
 \end{gather*}
Note \looseness=-1 that for each choice of the constants of the motion, there are always two crossing angles. Thus if a trajectory crosses into the
lower hemisphere, it must
return to the upper hemisphere: No trajectory remains conf\/ined to the lower hemisphere.
Examples are Figs.~\ref{Fig22.pdf} and~\ref{Fig23.pdf}.
\begin{figure}[t!]\centering
 \includegraphics[width=61mm]{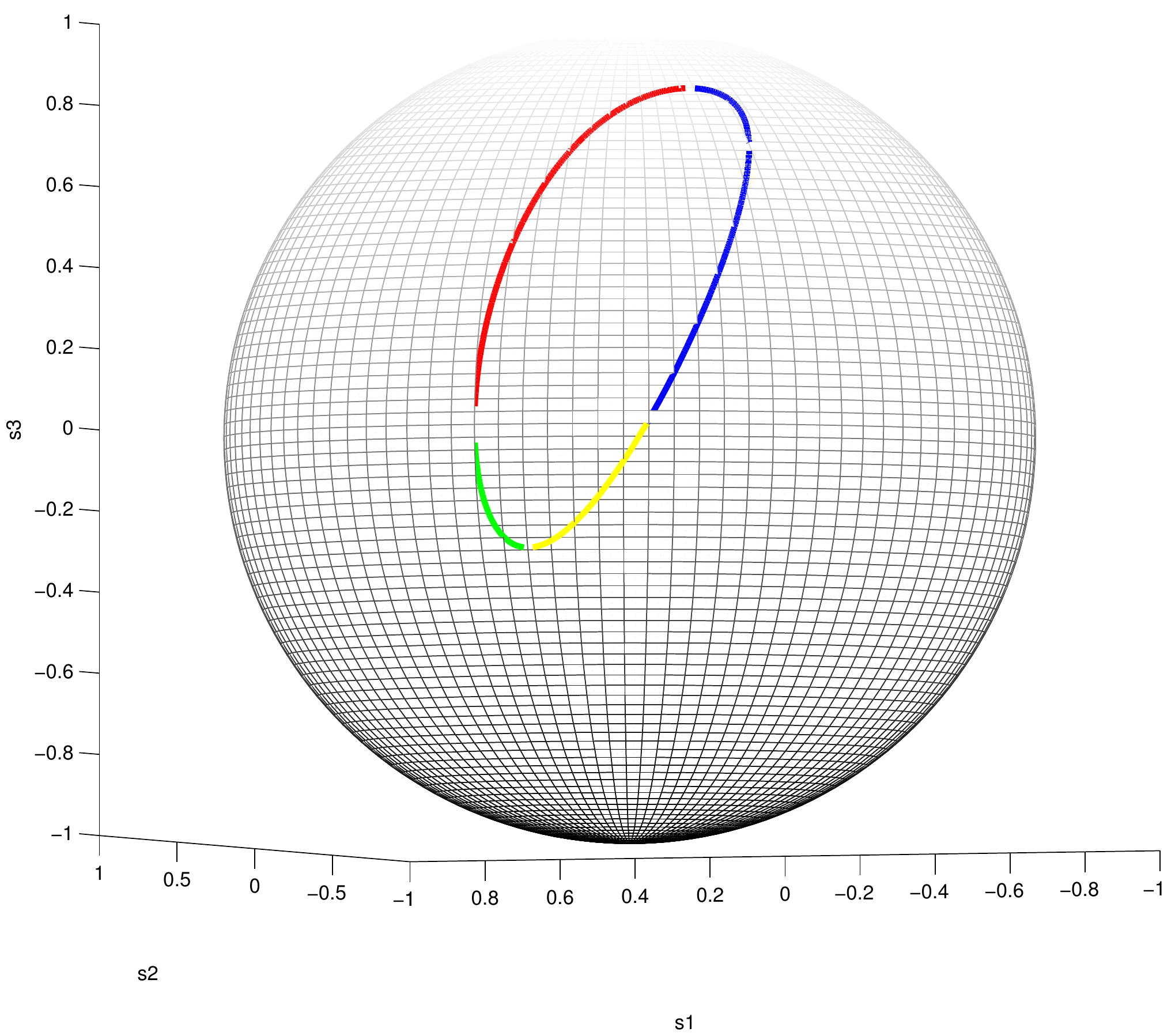}
\caption{Case 1: $a_1=3$, $a_2=4$, $a_3=-5$, $L_1=4$, $L_2=0$, $H=6$.}\label{Fig22.pdf}
\end{figure}
\begin{figure}[t!]\centering
\includegraphics[width=61mm]{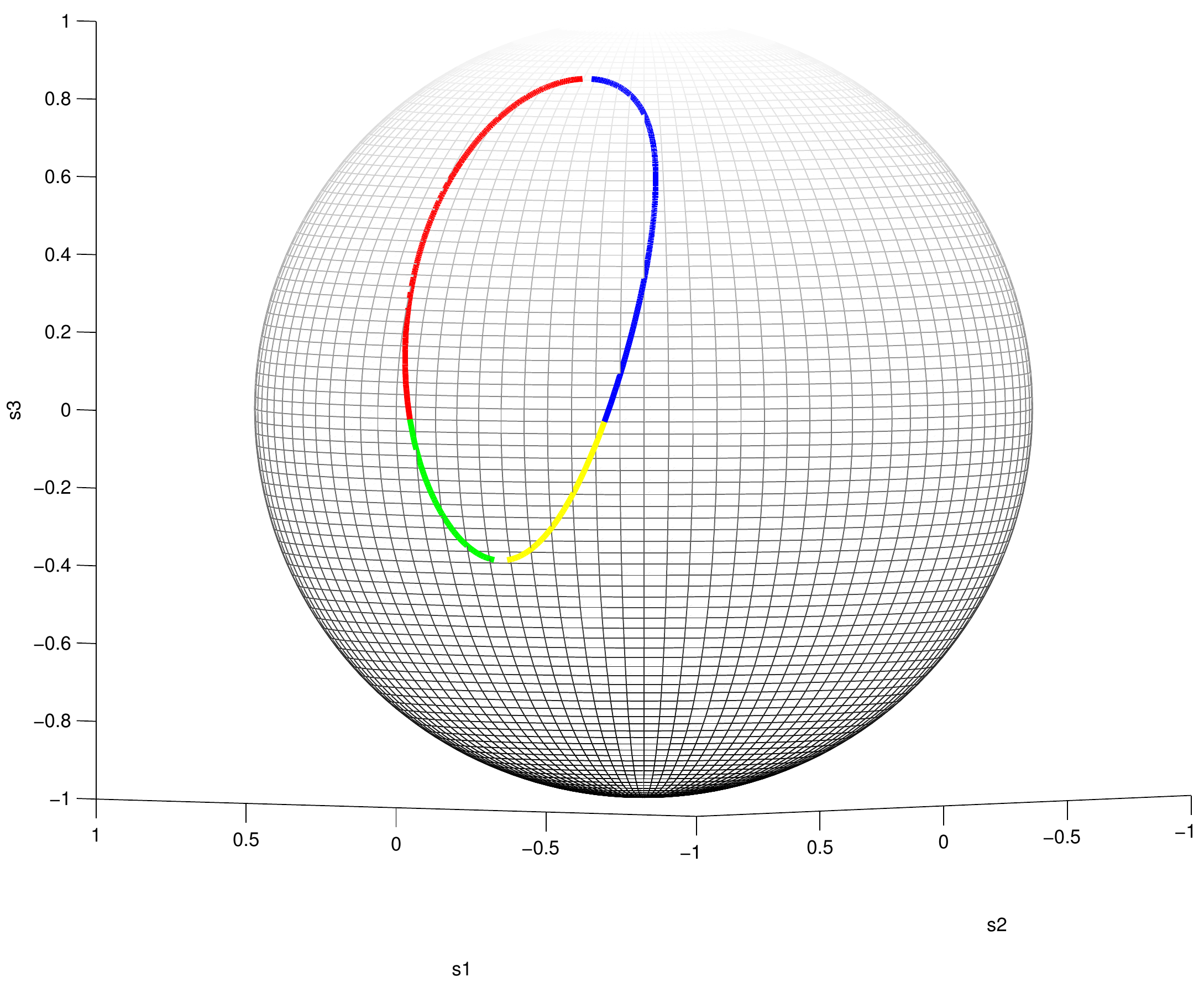}
\caption{Case 1: $a_1=2$, $a_2=3$, $a_3=-4$, $L_1=3$, $L_2=0.5$, $H=5$.}\label{Fig23.pdf}
\end{figure}
If the  angle is f\/ixed, we can use the formula to get ${\cal L}_2$: Examples are Figs.~\ref{Fig24.pdf}
and~\ref{Fig25.pdf}.
\begin{figure}[t!]\centering
\includegraphics[width=62mm]{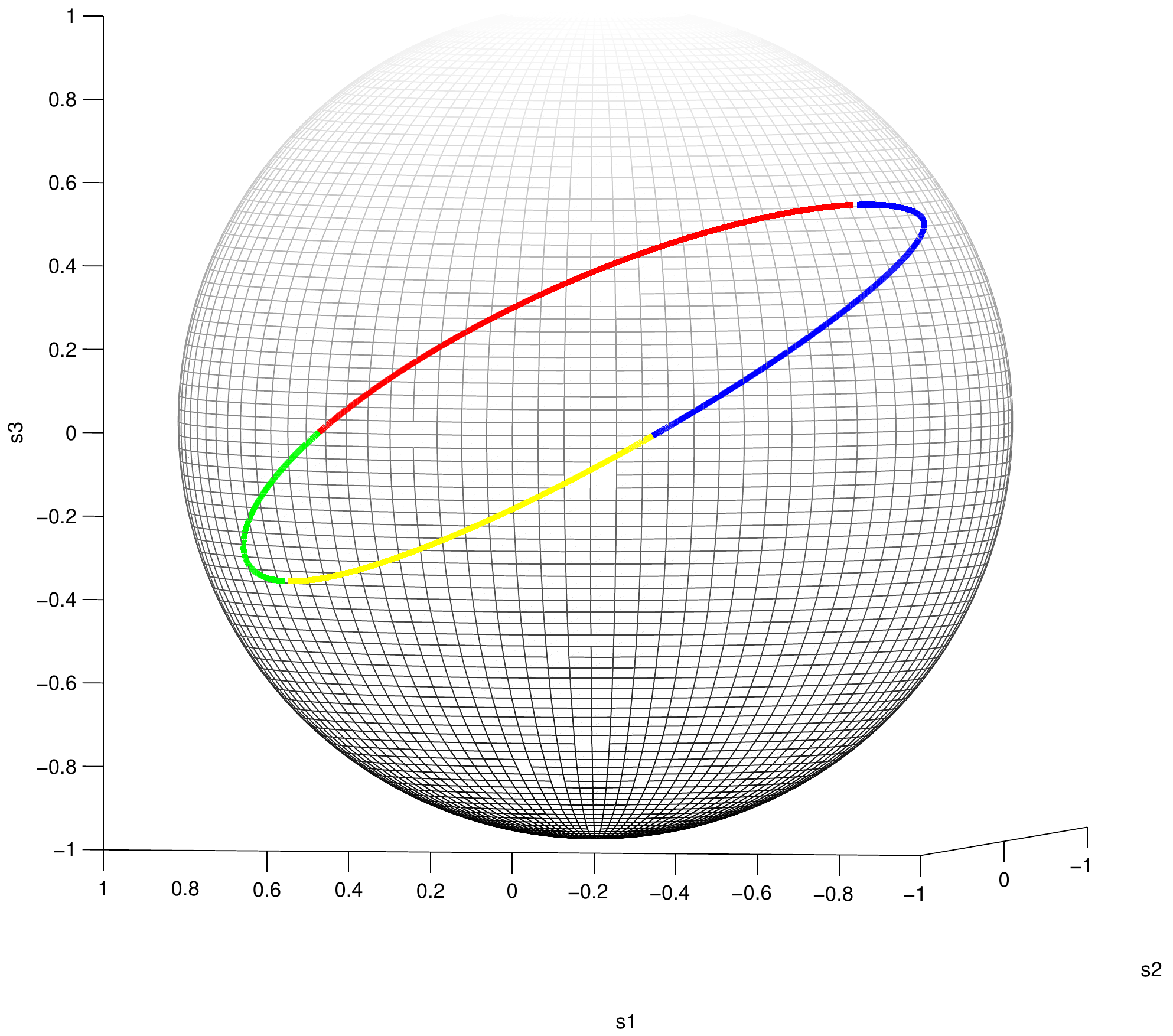}
\vspace{-1.5mm}
\caption{$a_1=8$, $a_2=10$, $a_3=-10$, $L_1=40$, $H=50$,  $L_2=-15.1491$.}\label{Fig24.pdf}
\end{figure}
\begin{figure}[t!]\centering
 \includegraphics[width=62mm]{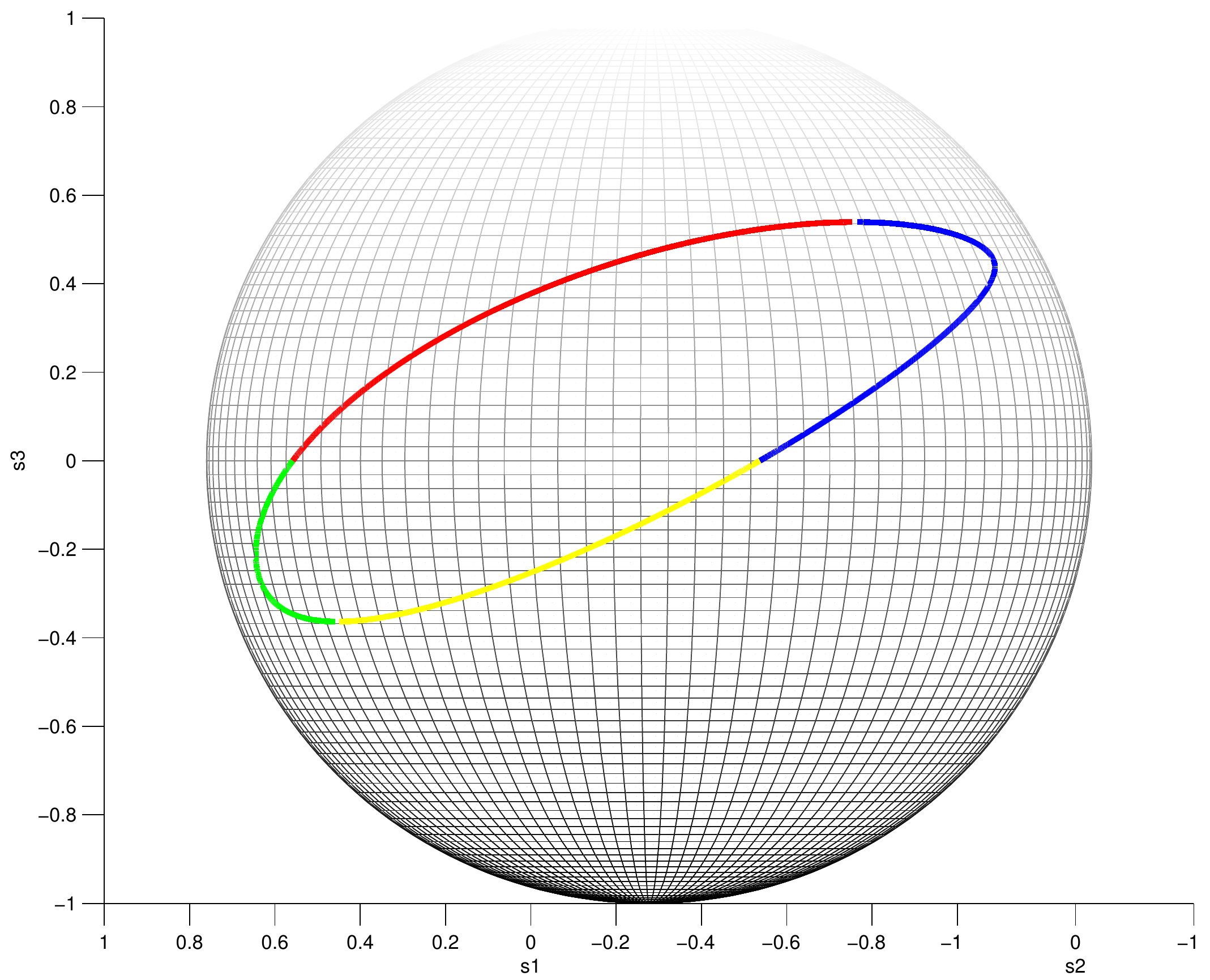}
\caption{$a_1=8$, $a_2=10$, $a_3=-10$, $L_1=40$, $H=50$, $L_2=-12.9919$.}\label{Fig25.pdf}
\end{figure}
If ${\cal R}^2=0$ then $S(r)\equiv 0$. The previous analysis is correct, except the trajectory is a~single arc, rather than a~loop.
We call this a metronome orbit. The particle moves back and  forth along the arc with a~f\/ixed period.
We give no more details here because we will study the analogous systems in our treatment of~$E16$.

{\bf Case 2}: $a_1>a_2$. Then the portion $-1<s_1<-a_2/a_1$ of the $s_1$-axis becomes attractive, the trajectories are not closed; each end of the trajectory impacts this interval, assuming  ${\cal R}^2>0$. If ${\cal H} >{\cal L}_1$ then one impact occurs in the northern hemisphere and one impact in the southern hemisphere. If ${\cal H}\le {\cal L}_1$ both impacts occur in the northern hemisphere.

An example for ${\cal H} >{\cal L}_1$ is  Fig.~\ref{Fig26.pdf},  and for case ${\cal H}<{\cal L}_1$ is Fig.~\ref{Fig27.pdf}.
If ${\cal R}^2=0$ then the trajectory is a single arc that impacts the interval $-1<s_1<-a_2/a_1$ of the $s_1$-axis in the northern hemisphere.
\begin{figure}[t!]\centering
 \includegraphics[width=62mm]{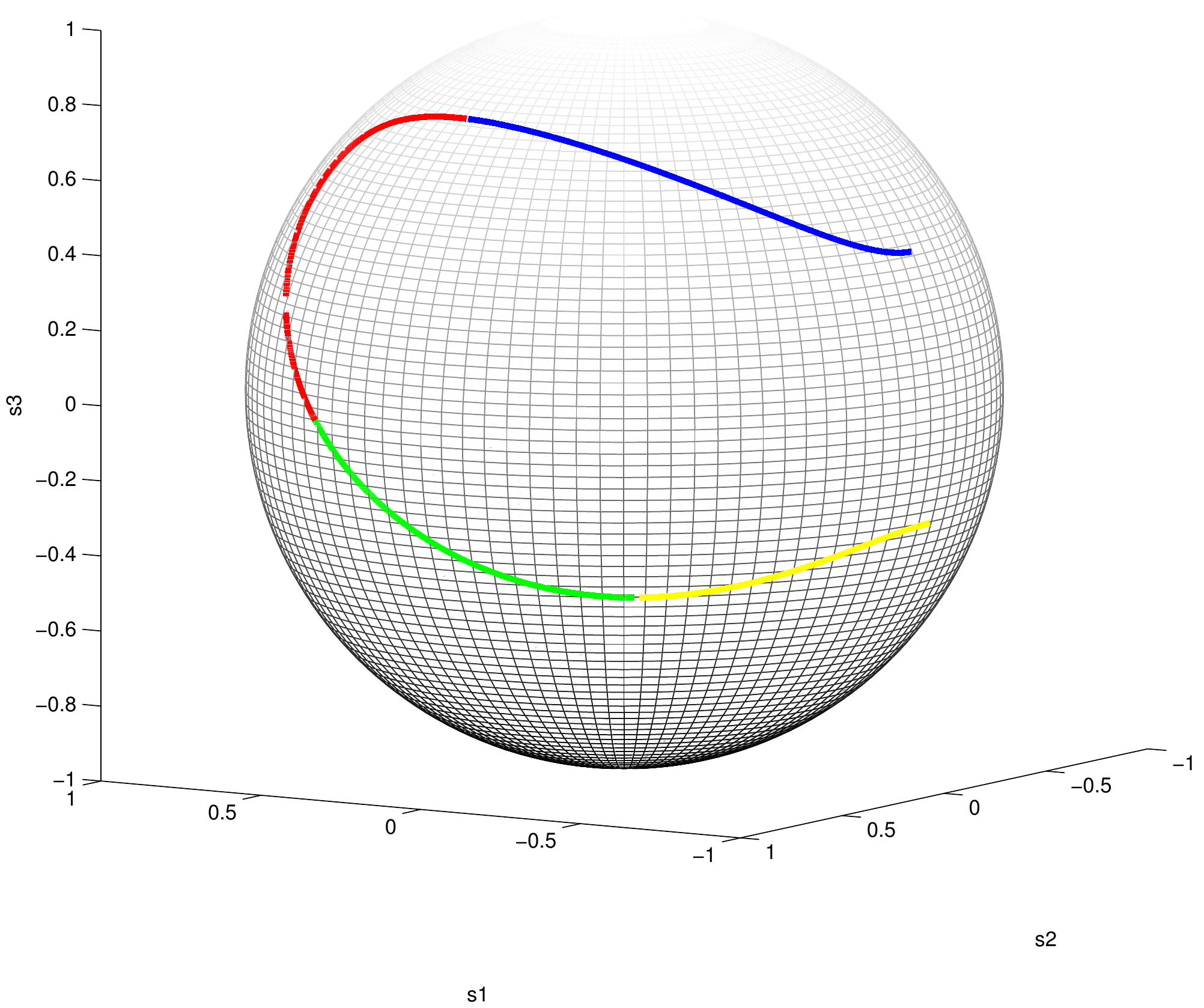}
\caption{$a_1=2$, $a_2=1$, $a_3=-4$, $L_1=5$, $H=8$, $L_2=2$.}\label{Fig26.pdf}
\end{figure}
\begin{figure}[t!]\centering
 \includegraphics[width=62mm]{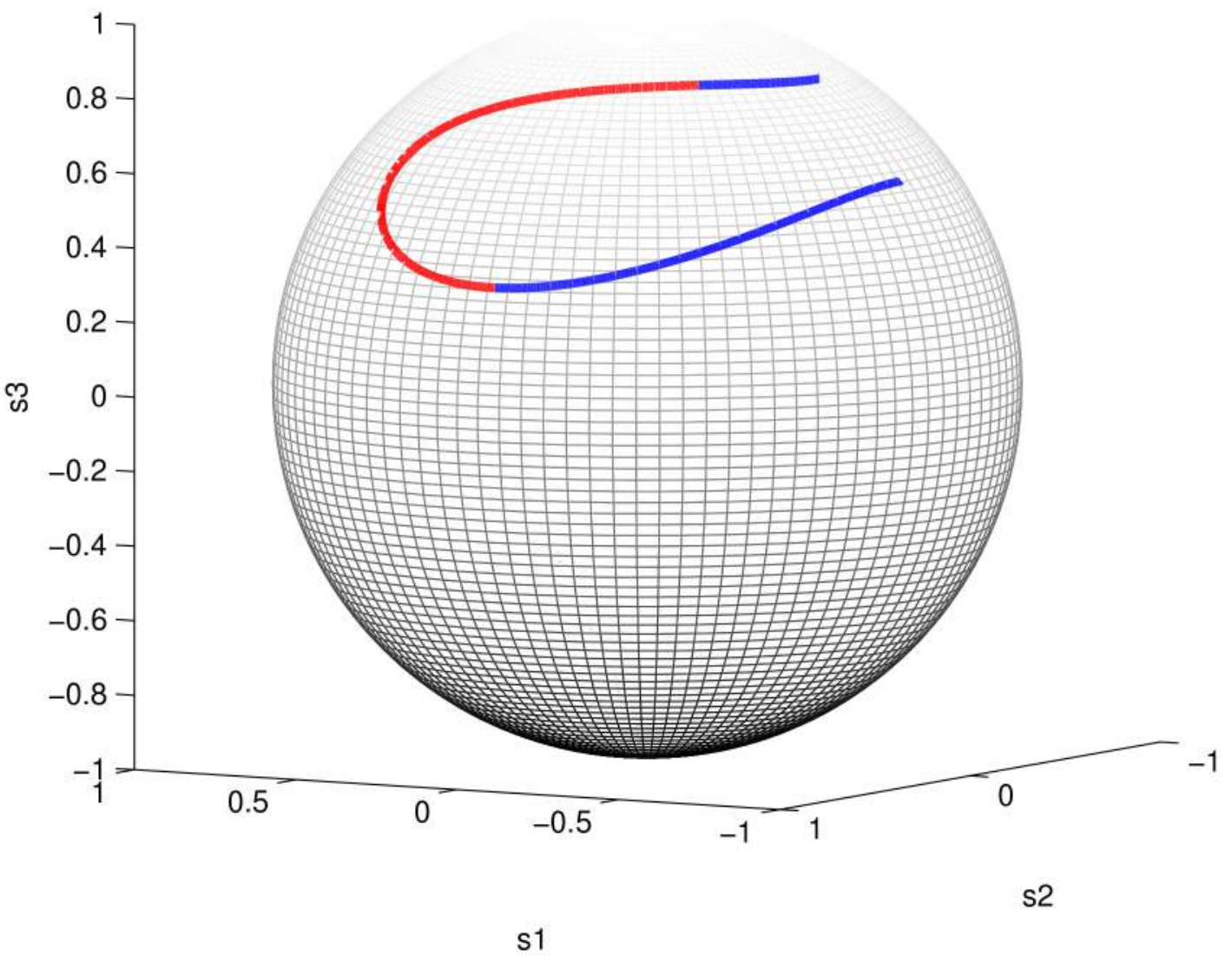}
\caption{$a_1=4$, $a_2=2$, $a_3=-10$, $L_1=5$, $H=2$, $L_2=1$.}\label{Fig27.pdf}
\end{figure}

\subsection[System $E16$ in Euclidean space as a contraction of $S7$]{System $\boldsymbol{E16}$ in Euclidean space as a contraction of $\boldsymbol{S7}$}

 $E16$ can be described as a Kepler--Coulomb system with barrier in two dimensions~\cite{KKMP, MSVW}.
The system has Hamiltonian,
\begin{gather*}
{\cal H} = p_x^2+p_y^2+\frac{a_1}{\sqrt{x^2+y^2}}+\frac{a_2y}{x^2\sqrt{x^2+y^2}}+\frac{a_3}{x^2}.
\end{gather*}
In terms of  the rotation generator ${\cal J}=xp_y-yp_x$ the def\/ining constants of the motion are $\cal H$ and
\begin{gather*}
 {\cal L}_1= {\cal J}^2+\frac{ a_2y\sqrt{x^2+y^2}}{x^2}+a_3\frac{y^2}{x^2},\\
 {\cal L}_2= p_x{\cal J }-\frac{a_1}{2}\frac{y}{\sqrt{x^2+y^2}}-a_2\left(\frac{y^2}{x^2\sqrt{x^2+y^2}}+\frac{1}{2\sqrt{x^2+y^2}}\right)-a_3\frac{y}{x^2}.
\end{gather*}
The time derivative of a function $F$ of the phase space parameters along a trajectory
is given by ${\dot F}\equiv \frac{\partial F}{\partial t}=\{{\cal H}, F\}$, so in Cartesian coordinates we
have the equations of motion
\begin{gather*}
{\dot x}=2p_x, \qquad {\dot p}_x=\frac{1}{x^3(x^2+y^2)^{3/2}}\big(a_1x^4+a_2y\big(3x^2+2y^2\big)+2a_3\big(x^2+y^2\big)^{3/2}\big), \\
{\dot y}=2p_y, \qquad {\dot p}_y=\frac{1}{(x^2+y^2)^{3/2}} ( a_1y-a_2) .
\end{gather*}
In polar coordinates $x=r\cos\theta$, $y=r\sin\theta$ we have
\begin{gather*}
 p_\theta=xp_y-yp_x,\qquad rp_r=xp_x+yp_y,\\
 p_x=-\frac{\sin\theta}{r}p_\theta+\cos\theta  p_r,\qquad
p_y=\frac{\cos\theta}{r}  p_\theta+\sin\theta   p_r,\\
 {\cal H} = p_r^2+\frac{p_\theta^2}{r^2}+\frac{a_1}{r}+\frac{a_2\sin\theta}{r^2\cos^2\theta}+\frac{a_3}{r^2\cos^2\theta}, \qquad
 {\cal L}_1= p_\theta^2+\frac{ a_2\sin\theta}{\cos^2\theta}+\frac{a_3\sin^2\theta}{\cos^2\theta},\\
 {\cal L}_2= \left(\cos\theta p_r-\frac{\sin\theta}{r}p_\theta\right)p_\theta-\frac{a_1\sin\theta}{2}-\frac{a_2}{r}\left(\frac{\sin^2\theta}{\cos^2\theta}+\frac12\right)
-\frac{a_3\sin\theta}{r\cos^2\theta}.
\end{gather*}
Setting ${\cal R}=\{{\cal L}_1,{\cal L}_2\}$ we can verify that
\begin{gather}\label{R^2}
{\cal R}^2=4({\cal L}_1+a_3)\big({\cal L}_1{\cal H}-{\cal L}_2^2\big)+a_1^2{\cal L}_1+2a_1a_2{\cal L}_2+a_2^2
 {\cal H}.
 \end{gather}
Note also that $ {\cal H}=p_r^2+\frac{a_1}{r}+\frac{{\cal L}_1+a_3}{r^2}$.
 In polar coordinates
 the equations of motion are
\begin{gather*}
 {\dot \theta}=2\frac{p_\theta}{r^2},\qquad {\dot r}=2p_r,\qquad {\dot p_r}=2\frac{{\cal L}_1+a_3}{r^3}+\frac{a_1}{r^2},\\
 {\dot p_\theta}=\frac{1}{r^3\cos^3\theta}\big(a_2\big(\cos^2\theta-2\big)-2a_3\sin\theta\big).
\end{gather*}
From these equations we see that points on a trajectory such that ${\dot \theta}=0$ are characterized by $p_\theta=0$. If such a point has coordinates $(\theta_0,r_0,p_{r_0},0)$
we f\/ind that
\begin{gather*}
  \sin\theta_0=\frac{-a_2\pm\sqrt{a_2^2+4{\cal L}_1({\cal L}_1+a_3)}}{2({\cal L}_1+a_3)},\qquad p_{\theta_0}=0 ,\\
  r_0=\frac{a_2(1+\sin^2\theta_0)+2a_3\sin\theta_0}
{(1-\sin^2\theta_0)(a_1\sin\theta_0-2{\cal L}_2)},\qquad p_{r_0}^2={\cal H}-\frac{a_1}{r_0}-\frac{{\cal L}_1+a_3}{r_0^2}.
\end{gather*}
Similarly, if ${\dot r}=0$ then $p_r=0$ and  if such a point has coordinates $(\theta_1,r_1,0,p_{\theta_1})$ we f\/ind in
particular that
$ r_1=({a_1\pm\sqrt{a_1^2+4{\cal H}({\cal L}_1+a_3)}})/({2{\cal H}})$.{\samepage
\begin{gather*}
 \text{\bf Assumption\ 1}\colon \quad  a_1<0,\qquad a_2>0,\qquad a_3>0.
 \end{gather*}
Possible values taken by the constants of the motion depend on the relative sizes of $a_2$ and $a_3$.}

{\bf Case 1}: $a_3>  a_2$.
Then ${\cal L}_1+a_3> 0$ and we must have $a_1^2+4{\cal H}({\cal L}_1+a_3)\ge 0$ for trajectories.
Unbounded trajectories correspond to ${\cal H}\ge 0$, and for ${\cal H}>0$  perigee occurs at
\begin{alignat*}{3}
  & p_{r_1}=0,\qquad && r_1=\frac{a_1+\sqrt{a_1^2+4{\cal H}({\cal L}_1+a_3)}}{2{\cal H}},&  \\
 & p^2_{\theta_1}=\frac{{\cal L}_1-a_2\sin\theta_1-({\cal L}_1+a_3)\sin^2\theta_1}{1-\sin^2\theta_1}, \qquad && \sin\theta_1=-\frac{2r_1{\cal L}_2+a_2}{2{\cal L}_1+2a_3+a_1r_1} .&
\end{alignat*}
For ${\cal H}=0$ perigee occurs at
\begin{alignat*}{3}
&p_{r_1}=0,\qquad & &r_1=-\frac{{\cal L}_1+a_3}{a_1},&  \\
&p^2_{\theta_1}=\frac{{\cal L}_1-a_2\sin\theta_1-({\cal L}_1+a_3)\sin^2\theta_1}{1-\sin^2\theta_1},
\qquad & &\sin\theta_1=-\frac{a_2}{{\cal L}_1+a_3}+2\frac{{\cal L}_2}{a_1}.&
\end{alignat*}
Bounded orbits correspond to ${\cal H}<0$, in which case perigee occurs at
\begin{alignat*}{3}
& p_{r_1}=0,\qquad & &r_1=\frac{a_1+\sqrt{a_1^2+4{\cal H}({\cal L}_1+a_3)}}{2{\cal H}},& \\
& p^2_{\theta_1}=\frac{{\cal L}_1-a_2\sin\theta_1-({\cal L}_1+a_3)\sin^2\theta_1}{1-\sin^2\theta_1},\qquad  & &\sin\theta_1=\frac{2r_1{\cal L}_2+a_2}{2{\cal L}_1+2a_3+a_1r_1}.&
\end{alignat*}
and apogee at
\begin{alignat*}{3}
&p_{r_2}=0,\qquad & &r_2=\frac{a_1-\sqrt{a_1^2+4{\cal H}({\cal L}_1+a_3)}}{2{\cal H}},&\\
&p^2_{\theta_2}=\frac{{\cal L}_1-a_2\sin\theta_2-({\cal L}_1+a_3)\sin^2\theta_2}{1-\sin^2\theta_2}, \qquad & &\sin\theta_2=\frac{2r_2{\cal L}_2+a_2}{2{\cal L}_1+2a_3+a_1r_2}.&
\end{alignat*}
We must have $a_2^2+4{\cal L}_1({\cal L}_1+a_3)\ge 0$  for trajectories. There is an equilibrium point at
\begin{gather*}
 p_{r_0}=0,\qquad p_{\theta_0}=0,\qquad r_0= -2\frac{{\cal L}_1+a_3}{a_1},\qquad \sin\theta_0=\frac{a_3-\sqrt{a_3^2-a_2^2}}{a_2}.
 \end{gather*}

{\bf Period of bounded orbit.}
Since
$\frac{dr}{dt}=2\sqrt{{\cal H}-\frac{a_1}{r}-\frac{{\cal L}_1+a_3}{r^2}}$,
we have
$ \frac{r\ dr}{2\sqrt{{\cal H}r^2-a_1r-({\cal L}_1+a_3)}}=  dt$.
Thus the time needed to go from perigee $r_1$ to apogee $r_2$ is
\begin{gather*}
 \lim_{R_1\to r_1+}  \lim_{R_2\to r_2-}\int_{R_1}^{R_2} \frac{r\ dr}{2\sqrt{{\cal H}r^2-a_1r-({\cal L}_1+a_3)}}=\frac{T}{2}.
 \end{gather*}
Clearly, this is the same time as needed to go from apogee to perigee. Hence the period $T$ is
\begin{gather*}
 T=-\frac{a_1}{2(-{\cal H})^{3/2}}\big[ \lim_{\alpha\to\infty}\arctan\alpha- \lim_{\alpha\to -\infty}\arctan\beta\big]=-\frac{a_1\pi}{2(-{\cal H})^{3/2}}.
 \end{gather*}
Thus the period depends only on the energy.

{\bf The trajectories.}
Eliminating $p_\theta$, $p_r$ in the expressions for ${\cal H}$, ${\cal L}_1$, ${\cal L}_2$ we f\/ind the orbit equations
\begin{gather}
 r=\frac{N(\sin\theta)\pm 2\sqrt{S(\sin\theta)}}{D(\sin\theta)},\qquad
N=(a_1a_2-4a_3{\cal L}_2-4{\cal L}_1{\cal L}_2)\sin\theta-2a_1{\cal L}_1-2a_2{\cal L}_2,\nonumber\\
 S= \big({-}( {\cal L}_1+a_3) \sin^2\theta+{\cal L}_1-a_2\sin\theta\big)\big(4({\cal L}_1+a_3)({\cal H}{\cal L}_1-{\cal L}_2^2)+a_1^2{\cal L}_1
 +a_2^2{\cal H}+2a_1a_2{\cal L}_2\big),\nonumber\\
 D= \big(4{\cal H}({\cal L}_1+a_3\big)+a_1^2)\sin^2\theta +(4a_1{\cal L}_2+4a_2{\cal H}) \sin\theta-4{\cal H}{\cal L}_1+4{\cal L}_2^2.\label{orbiteqns}
\end{gather}
From structure equation (\ref{R^2}) and the fact that ${\cal R}^2>0$ we see that the 2nd factor in $S$ is a nonnegative constant.
Further, from (\ref{R^2}) we see that the denominator does not vanish for ${\cal H}<0$, that it has a single root $\sin\theta=-2{\cal L}_2/a_1$ for ${\cal H}=0$,
and a double root for ${\cal H}>0$.

For the single root case with ${\cal H}=0$, assuming the plus sign in~(\ref{orbiteqns}), we set $z=  \sin\theta+2{\cal L}_2/a_1$ in~(\ref{orbiteqns})
and expand $r$  in a Laurent series  series to f\/ind  the asymptotic expression
\begin{gather*}
r= -4\frac{{\cal R}^2}{a_1^3z^2} +\frac{(2a_1a_2-8{\cal L}_1{\cal L}_2-8a_3{\cal L}_2)}{a_1^2z}  \\
\hphantom{r=}{}+\frac14\frac{\big(4{\cal R}^2({\cal L}_1+a_3)+16{\cal L}_2^2({\cal L}_1+a_3)^2-8a_1a_2{\cal L}_2({\cal L}_1+a_3)+a_1^2a_2^2\big)}{a_1R^2}+O(z).
\end{gather*}
Note that here
${\cal R}^2=-4({\cal L}_1+a_3){\cal L}_2^2+a_1^2{\cal L}_1+2a_1a_2{\cal L}_2$. If the minus sign is assumed in~(\ref{orbiteqns}), then we have the f\/inite limit
$ \lim\limits_{z\to 0}r=-\frac{a_1}{4{\cal R}^2}[4{\cal L}_1({\cal L}_1+a_3)+a_2^2]$.

 {\bf Parametrization of case 1 trajectories.} Recall that $a_1$, $a_2$, $a_3$ are f\/ixed constants with $a_1<0$, $a_3>a_2>0$.
To parametrize the trajectories we  use perigee coordinates $\ell_1$, $t_p$, $r_p$ where
$ \ell_1={\cal L}_1+a_3>0$, $-1<t_p<1$, $r_p>0$.
Then
\begin{gather*}
 {\cal H}=\frac{\ell_1}{r_p^2}+\frac{a_1}{r_p},\qquad {\cal L}_2=-\frac{t_p}{r_p}\ell_1-\frac12a_1t_p-\frac{a_2}{2r_p}, \\
 {\cal H}>0\leftrightarrow\frac{ \ell_1}{r_p}>-a_1,\qquad
{\cal H}=0\leftrightarrow\frac{\ell_1}{r_p}=-a_1,\qquad {\cal H}<0\leftrightarrow \frac{\ell_1}{r_p}<-a_1.
\end{gather*}
Note that \begin{gather}\label{NDrel}
\big(N+2\sqrt{S}\big)\big(N-2\sqrt{S}\big)=(a_2^2+4(\ell_1-a_3)\ell_1)D.
\end{gather}
For ${\cal H}<0$, apogee occurs for radial distance
$ r_a=\frac{a_1r_p^2-r_p |2\ell_1+a_1r_p |}{2(\ell_1+a_1r_p)}$.
For wedge boundaries~$S_1$,~$S_2$ (radial lines containing a point on the trajectory where~$S$ vanishes) to exist at all, we must have
$a_2^2+4(\ell_1-a_3)\ell_1\ge 0$.
Then the righthand boundary will always exist, but the left hand boundary will exist only if $2\ell_1\ge a_2$.
(However, from the ${\cal R}^2>0$ inequality below, the left hand boundary must always exist.)
The location of the wedge boundaries is def\/ined by $t=\sin\theta$, where
\begin{gather*}  S_1\colon \quad   t_1=\frac{-a_2-\sqrt{a_2^2+4\ell_1(\ell_1-a_3)}}{2\ell_1},\qquad
  S_2\colon \quad   t_2=\frac{-a_2+\sqrt{a_2^2+4\ell_1(\ell_1-a_3)}}{2\ell_1},\nonumber\\
  {\cal R}^2=\left(\frac{2\ell_1+a_1r_p}{r_p}\right)^2\big(\ell_1-a_3-a_2t_p-\ell_1t_p^2\big).
 \end{gather*}
In order for the system to correspond to a trajectory, we must have ${\cal R}^2\ge 0$. We will f\/irst require that ${\cal R}>0$, the generic case.  We see that this is violated if either
1)~$2\ell_1+a_1r_p=0$, or  2)~$t_p$ lies outside the open interval $(t_1,t_2)$.  Thus we have the conditions
\begin{gather*}
- \frac{\sqrt{a_2^2+4\ell_1(\ell_1-a_3)}}{2\ell_1}<t_p+\frac{a_2}{2\ell_1}<\frac{\sqrt{a_2^2+4\ell_1(\ell_1-a_3)}}{2\ell_1},\qquad 2\ell_1+a_1r_p\ne 0.
\end{gather*}

For ${\cal H}>0$ the zeros of the denominator occur at the values $t_\pm$ of $\sin\theta$ such that
\begin{gather*}
 t_\pm=
\frac{a_1r_pt_p-a_2\pm
2\sqrt{(\ell_1+a_1r_p)(\ell_1-a_3-a_2t_p-\ell_1t_p^2)}}{2\ell_1+a_1r_p}.
\end{gather*}
For ${\cal H}>0$, if
the numerator of the trajectory equation~(\ref{orbiteqns}) vanishes at $\sin(t)=t_n$ then $t_n=t_+$ or $t_-$ and $t_1<t_n<t_2$.
Since every zero of the denominator is a zero of one of the numerators, we must have $t_1<t_-<t_+<t_2$.  Further, if $2\ell_1<a_2$ then $D$ has at most one zero. Thus, if this case occurs then there is a single asymptote and a single wedge boundary.

{\bf Summary of necessary Case 1 conditions for trajectories with ${\cal R}^2>0$.}
\begin{gather*}
  {\rm 1)} \quad \ell_1={\cal L}_1+a_3>0,\qquad -1<t_p<1, \qquad r_p>0,\\
  {\rm 2)} \quad a_2^2+4(\ell_1-a_3)  \ell_1\ge 0,\\
  {\rm 3)} \quad 2\ell_1+a_1r_p\ne 0,\\
  {\rm 4)} \quad  -\frac{a_2+\sqrt{a_2^2+4\ell_1(\ell_1-a_3)}}{2\ell_1}<t_p<\frac{-a_2+\sqrt{a_2^2+4\ell_1(\ell_1-a_3)}}{2\ell_1}.
\end{gather*}
\begin{figure}[t!]\centering
 \includegraphics[width=58mm]{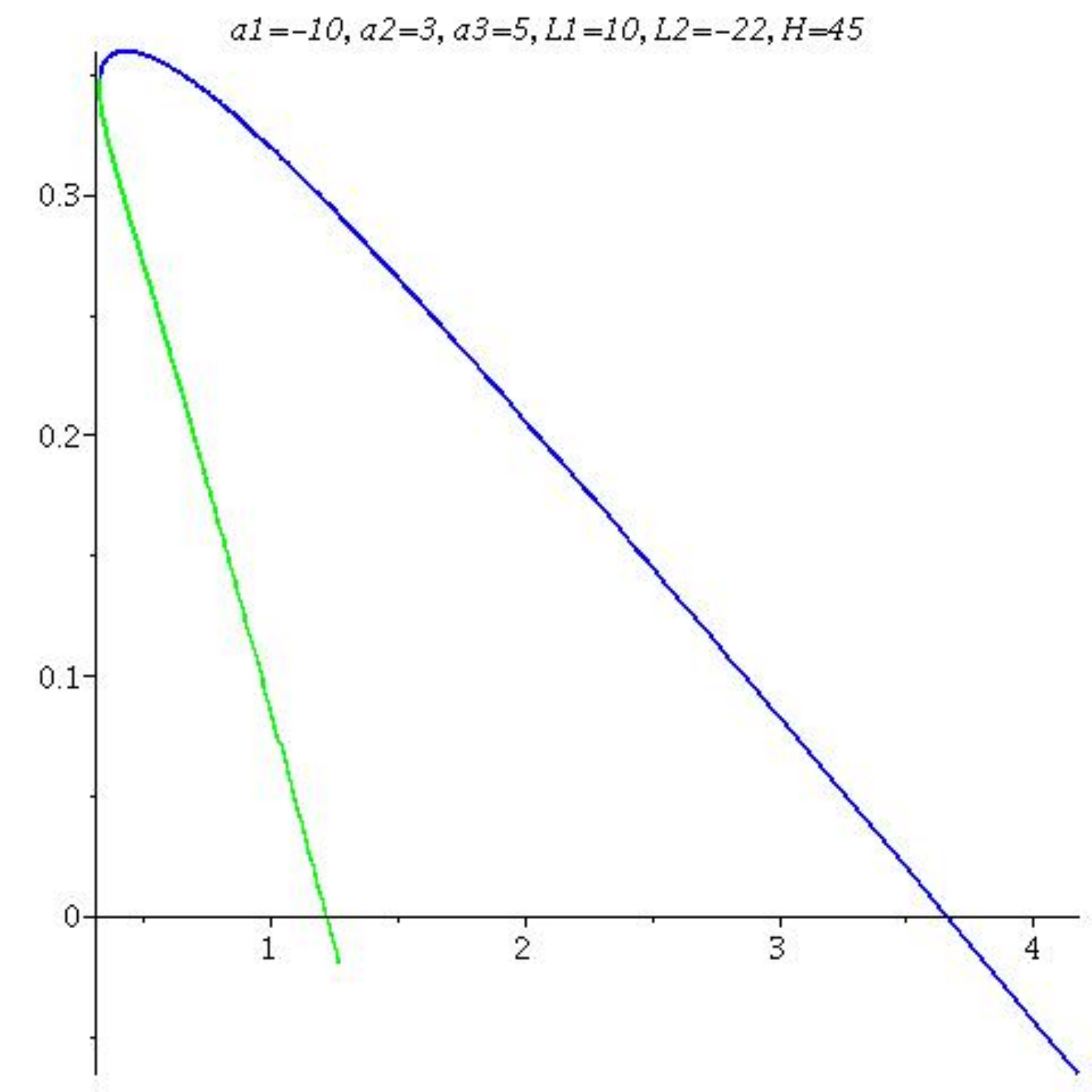}
\caption{Case 1: positive energy.}\label{Fig28.jpg}
\end{figure}
\begin{figure}[t!]\centering
 \includegraphics[width=58mm]{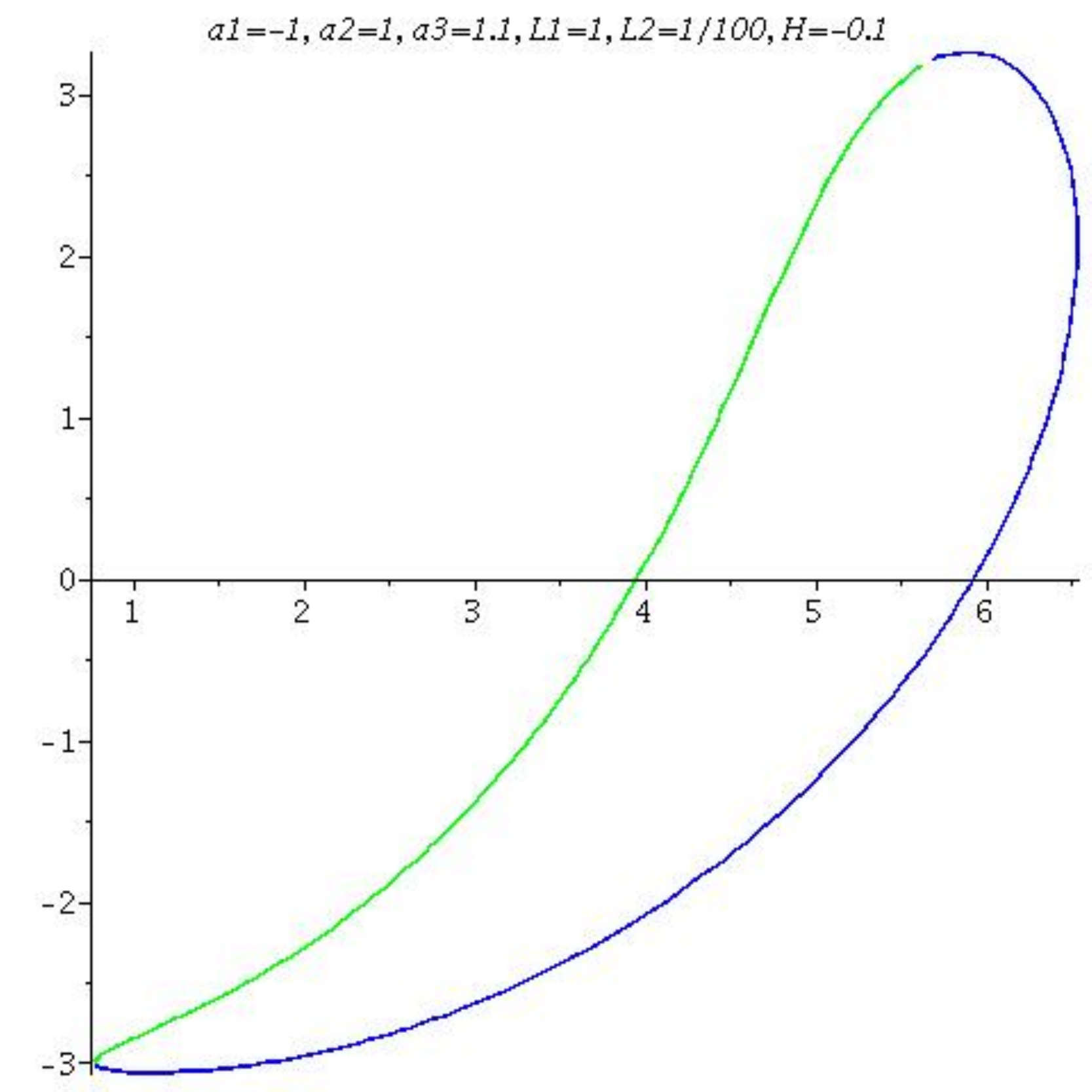}
\caption{Case 1: negative energy.}\label{Fig29.jpg}
\end{figure}
\begin{figure}[t!]\centering
 \includegraphics[width=58mm]{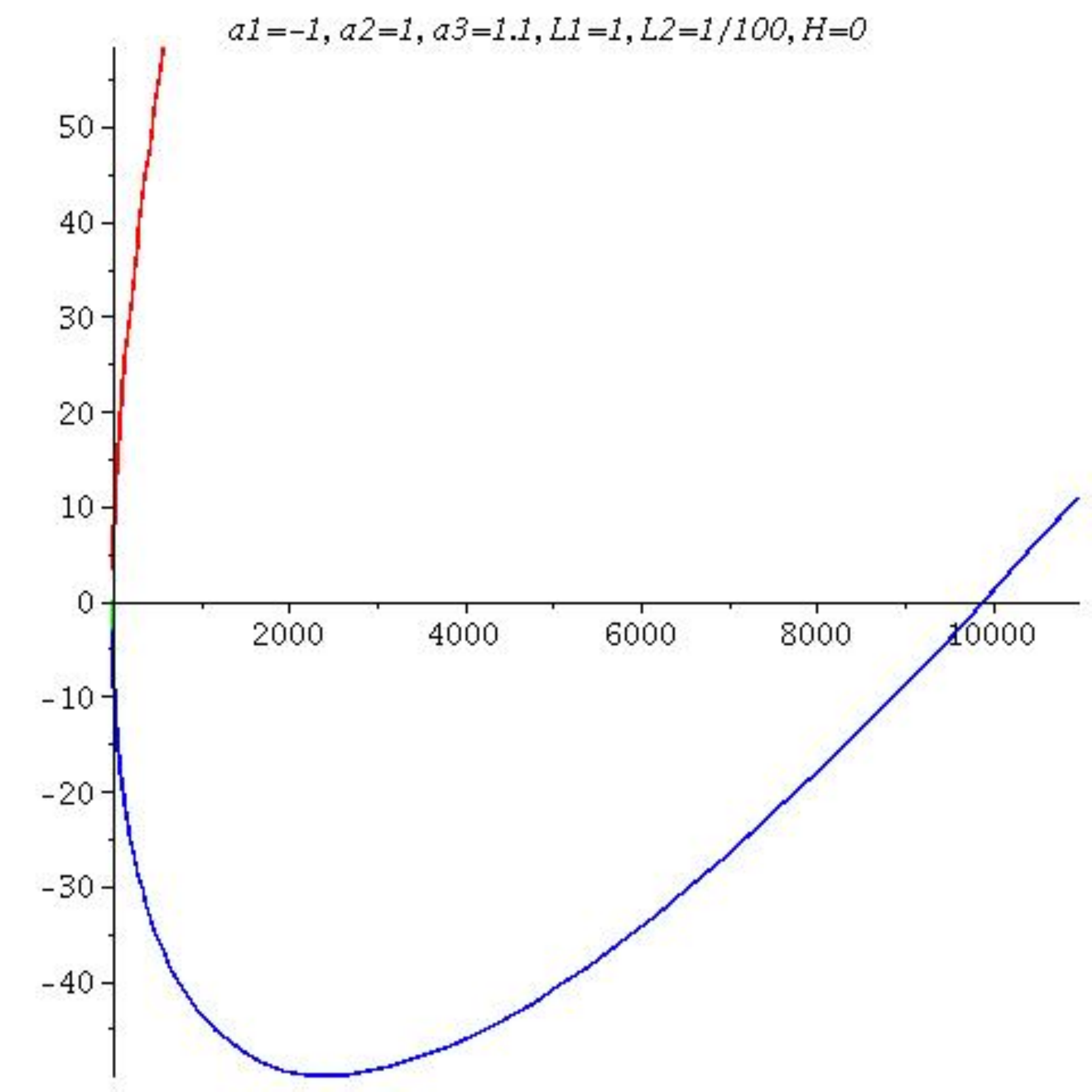}
\caption{Case 1: zero energy.}\label{Fig30.jpg}
\end{figure}
\begin{figure}[t!]\centering
\includegraphics[width=58mm]{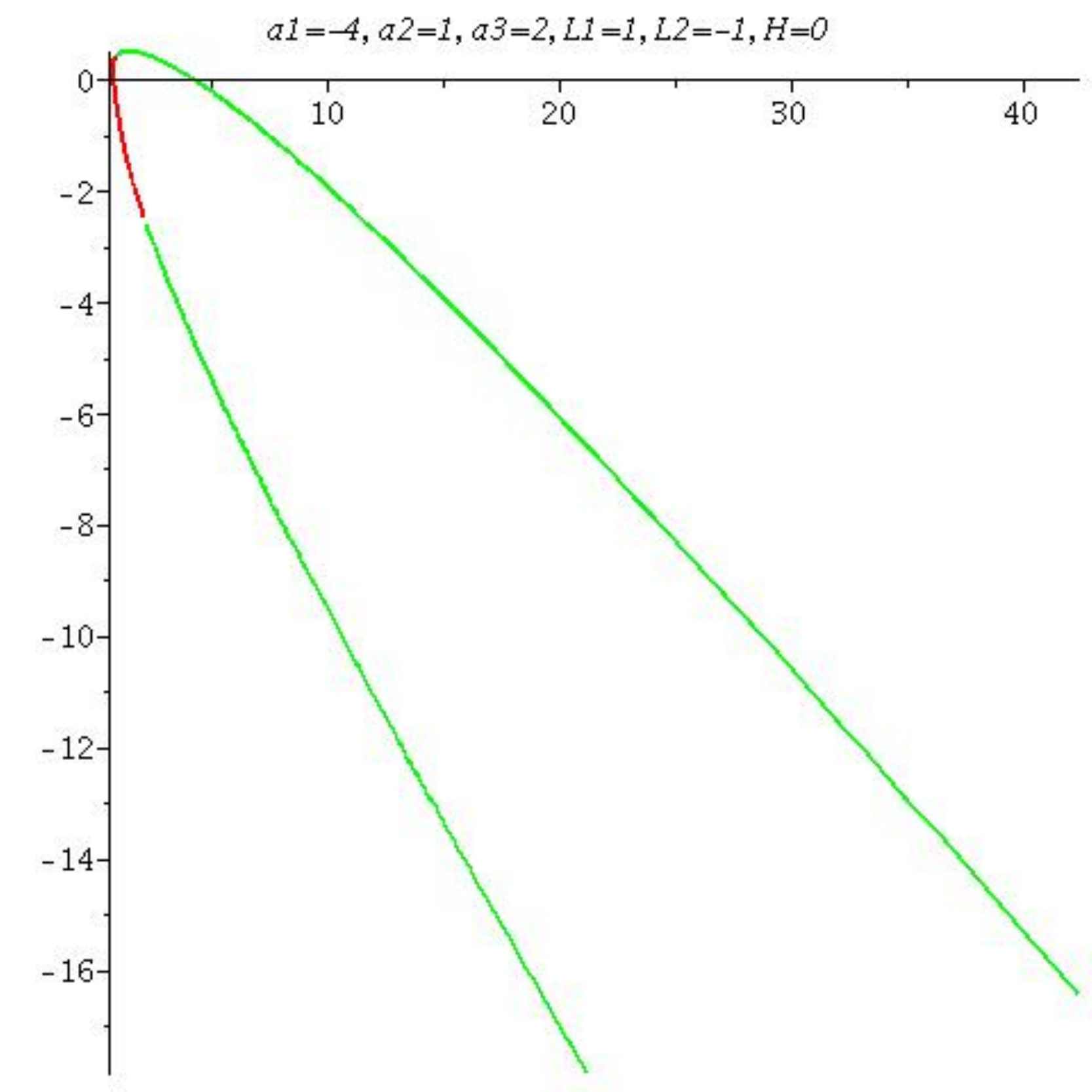}
\caption{Case 1: zero energy.}\label{Fig31.jpg}
\end{figure}
For examples, see Figs.~\ref{Fig28.jpg}, \ref{Fig29.jpg}, \ref{Fig30.jpg}, \ref{Fig31.jpg}.

{\bf Case 1a}:  ${\cal R}^2=0$. In this case equation~(\ref{NDrel}) for the trajectories simplif\/ies since  $S\equiv 0$.
We have  $ N^2=(a_2^2+4(\ell_1-a_3)\ell_1)D$ so
\begin{gather*}
r(\theta)=\frac{a_2^2+4(\ell_1-a_3)\ell_1}{N}.
\end{gather*}
Because $N$ is linear in $t=\sin\theta$, it is easily seen that the trajectories are segments of ordinary conic sections: lines, ellipses, circles, hyperbolas and parabolas.
Now  $ {\cal R}^2=(\frac{2\ell_1+a_1r_p}{r_p})^2(\ell_1-a_3-a_2t_p-\ell_1t_p^2)=0$,
so  there are two possibilities:  I)~$2\ell_1+a_1r_p=0$, and  II)~$ \ell_1-a_3-a_2t_p-\ell_1t_p^2=0$.

 {\bf Case 1aI)}: Circular arc. $2\ell_1+a_1r_p=0$. We can express the constants of the motion in terms of $r_p$, $t_p$ alone:
\begin{gather*}
 \ell_1=-\frac{a_1r_p}{2},\qquad {\cal H}=\frac{a_1}{2r_p}<0,\qquad {\cal L}_2=-\frac{a_2}{2r_p}.
 \end{gather*}
The equation for the trajectories simplif\/ies to
$ r(\theta)=\frac{a_2^2+4(\ell_1-a_3)\ell_1}{N}=r_p$,
so the trajectory is a segment of a circle centered at the origin with radius $r_p$. The wedge boundaries are
$t_1=\frac{-a_2-\sqrt{a_2^2+a_1^2r_p^2+2a_1a_3r_p}}{a_1r_p}$, $ t_2=\frac{-a_2+\sqrt{a_2^2+a_1^2r_p^2+2a_1a_3r_p}}{a_1r_p}$.
Thus for this case to occur, $r_p$ must satisfy $a_2^2+a_1^2r_p^2+2a_1a_3r_p\ge 0$.
We conclude that this case always occurs for~$r_p$ suf\/f\/iciently small or suf\/f\/iciently large,
but that there is a~forbidden intermediate band for which no circular trajectory exists.

  {\bf Case 1aII)}: Conic section arc. $ \ell_1-a_3-a_2t_p-\ell_1t_p^2=0$. Again
we can express the constants of the motion in terms of $r_p$, $t_p$ alone:
\begin{gather*}
 \ell_1=\frac{a_3+a_2t_p}{1-t_p^2},\qquad {\cal H}=\frac{a_3+a_2t_p}{r_p^2(1-t_p^2)}+\frac{a_1}{r_p},\qquad {\cal L}_2=-\frac{t_p(a_3+a_2t_p)}{r_p(1-t_p^2)}-\frac{a_1t_p}{2}-\frac{a_2}{2r_p}.
 \end{gather*}
The equation for the trajectories becomes
\begin{gather*}
 r(\theta)=\pm \frac{r_p(a_2+2t_pa_3+a_2t_p^2)}{(2a_2t_p-a_1t_p^2r_p+a_1r_p+2a_3)\sin\theta+(1-t_p^2)(a_2-a_1t_pr_p)}=\pm \frac{1}{\alpha\sin\theta+\beta},
\end{gather*}
where
\begin{gather*} \alpha=\frac{(2a_2t_p-a_1t_p^2r_p+a_1r_p+2a_3)}{r_p(a_2+2t_pa_3+a_2t_p^2)}, \qquad \beta=
\frac{(1-t_p^2)(a_2-a_1t_pr_p)}{r_p(a_2+2t_pa_3+a_2t_p^2)},
\end{gather*}
and the sign is chosen so that $r>0$. Setting $y=r\sin\theta$, $x=r\cos\theta$ we see that the trajectory is a~segment of the curve $\alpha y+\beta r=\pm 1$ or $\beta r=-\alpha y\pm 1$, so
$ \beta^2(x^2+y^2)=\alpha^2 y^2\pm 2\alpha y+1$, $\alpha^2+\beta^2>0$.
We have the following possibilities for the curve segments:
\begin{alignat*}{3}
&\beta^2>\alpha^2>0\colon \quad \text{ellipse},\qquad &  & 0<\beta^2<\alpha^2\colon \quad\text{hyperbola}, & \\
&\beta^2=\alpha^2\colon \quad\text{parabola},\qquad &   &\alpha=0\colon  \quad\text{circle}, & \\
&\beta=0\colon  \quad\text{horizontal  line}.\qquad &&&
\end{alignat*}
The wedge boundaries are
$t_p$, $-\frac{a_2+a_3t_p}{a_3+a_2t_p}$.
For the circular arc we have
$r_p=-2\frac{a_2t_p+a_3}{a_1(1-t_p^2)}$.
(This coincides with Case~1aI.)
For the horizontal line we have
$ t_p=\frac{a_2r_p}{a_1}<0$,
so trajectories are possible only for $r_p<-a_1/a_2$. Only strictly positive energy solutions are possible for the horizontal line.
For the parabolic arc we have
$ r_p={-2a_3-a_2-2a_2t_p+a_2t_p^2}/{a_1(1+t_p)(1-t_p)^2}$.
The energy is always negative.
Only strictly negative energies are possible for the elliptical arc.

In general, orbits are possible for ${\cal H}<0$ if and only if the constants of the motion satisfy the inequality
${a_3+a_2t_p}/{(1-t_p^2)(-a_1)}<r_p$.
Trajectories are possible in the positive energy case if and only if
$ r_p<{a_3+a_2t_p}/{(1-t_p^2)(-a_1)}$.

{\bf Case 2}: $a_3<  a_2$.   Now there are no restrictions on the sign of $\ell_1={\cal L}_1+a_3$.
The  force in the angular direction is always negative.
In the f\/irst quadrant the force in the $y$ direction is always negative. It follows from this that there are no periodic
 orbits contained entirely in the f\/irst quadrant.
 The positive $y$-axis is repelling, but the negative $y$-axis is attracting.

 We assume, initially, that ${\cal R}^2>0$. Equations~(\ref{orbiteqns}) for the trajectories still hold. The condition
$ a_2^2+4\ell_1(\ell_1-a_3)> 0$
 is now satisf\/ied automatically.  There is at most  a single wedge boundary
\begin{gather*} t_0=\sin\theta_0=\frac{-a_2 + \sqrt{a_2^2+4\ell_1(\ell_1-a_3)}}{2\ell_1}<0.
\end{gather*}
The following general types of behavior occur:

1.~If $ \ell_1<0$ the force in the radial direction is always negative.
 The trajectories all impact the lower $y$-axis and are perpendicular to the axis at impact.
Since~${\dot p}_r $ is always strictly negative along a trajectory,  perigee must occur at $y_p=-r_p$ on the negative $y$-axis.
Note that ${\cal R}^2$ is linear in $\cal H$, always  with positive linear coef\/f\/icient. The critical value of $\cal H$ such that ${\cal R}^2=0$ is
$ E_0=[{4a_3{\cal L}_2^2+4{\cal L}_1{\cal L}_2^2-2a_1a_2{\cal L}_2-a_1^2{\cal L}_1}]/[{4{\cal L}_1^2+a_2^2+4a_3{\cal L}_1}]$.
The qualitative behavior of the trajectories divides into two basic classes, depending on the sign of $N(\sin\theta)$ for $\theta=-\frac{\pi}{2}$.
Note that
\begin{gather*}
 N(-1)=(2{\cal L}_2-a_1)(2\ell_1-a_2)+2a_1(a_3-a_2),\qquad S(-1)=(a_2-a_3){\cal R}^2.
 \end{gather*}

  {\bf Case 2a}: $N(-1)>0$.  Then if we increase ${\cal H}$ from $E_0$ to $E_0+\epsilon$ for arbitrarily small $\epsilon>0$  there will be two
points of intersection of the trajectory with the negative $y$-axis,  one on the curve $r =(N+2\sqrt{S})/D$ and one on the curve
$r =(N-2\sqrt{S})/D$. The trajectory will remain bounded and there will be a wedge boundary.  If we further increase~${\cal H}$,
leaving ${\cal L}_1,{\cal L}_2$ unchanged,
this behavior will persist until the critical value
$E_1=\frac14 [(2{\cal L}_2-a_1)^2]/[{(a_2-a_3)}]\ge E_0$,
where $D(-1)=0$. Here the intersection point with the negative $y$-axis and the curve $r =(N-2\sqrt{S})/D$ has moved to $y=-\infty$
whereas the intersection point with the curve $r =(N+2\sqrt{S})/D$ remains f\/inite, As ${\cal H}$ is further increased the behavior
of the trajectory is that it is unbounded and asymptotic to a radial line in the 2nd quadrant, that there is a wedge boundary and
a single intersection point with the negative $y$-axis.

  {\bf Case 2b}: $N(-1)<0$. Again, we increase ${\cal H}$ from $E_0$ to $E_0+\epsilon$ for arbitrarily small $\epsilon>0$. Now there
are no points of intersection with the negative $y$-axis and, in fact, the trajectory is non-physical until  $\cal H$
is increased to $\epsilon$ beyond the critical value $E_1$. The curve is entirely described by the function $r =(N+2\sqrt{S})/D$.
The trajectory  is unbounded, asymptotic to a~radial line in the 2nd quadrant and intersects the negative $y$-axis a single time.
There is no wedge boundary. Note that for this case, the region $\{ N(-1)< 0, {\cal H}\le E_1\}$ is nonphysical.
 (We do not have a complete proof of this but strong numerical evidence.)

2.~If $\ell_1>0$ the behavior is dif\/ferent. The trajectories do not necessarily intersect the negative $y$-axis. Some trajectories are disjoint.
The qualitative behavior of the trajectories divides into two basic classes, depending on the sign of $N(\sin\theta)$ for $\theta=-\frac{\pi}{2}$.
Note again that
\begin{gather*}
 N(-1)=(2{\cal L}_2-a_1)(2\ell_1-a_2)+2a_1(a_3-a_2),\qquad S(-1)=(a_2-a_3){\cal R}^2.
 \end{gather*}

 {\bf Case 2c}: $N(-1)>0$.  If we increase ${\cal H}$ from $E_0$ to $E_0+\epsilon$ for arbitrarily small $\epsilon>0$  there will be two
points of intersection of the trajectory with the negative $y$-axis,  one on the curve $r =(N+2\sqrt{S})/D$ and one on the curve
$r =(N-2\sqrt{S})/D$. The trajectory will remain bounded if ${\cal H}<0$ (with a wedge boundary)  and be unbounded if ${\cal H}>0$.
The unbounded trajectory will divide into two disconnected parts.  If we further
 increase ${\cal H}$, leaving ${\cal L}_1,{\cal L}_2$ unchanged,
this behavior will persist until the critical value
$E_1=\frac14[{(2{\cal L}_2-a_1)^2}][{(a_2-a_3)}]\ge E_0$,
where $D(-1)=0$. Here the intersection point with the negative $y$-axis and the curve $r =(N-2\sqrt{S})/D$ has moved to $y=-\infty$
whereas the intersection point with the curve $r =(N+2\sqrt{S})/D$ remains f\/inite, As ${\cal H}$ is further increased the behavior
of the trajectory is that it is unbounded and asymptotic to a radial line in the 2nd quadrant,  and
a single intersection point with the negative $y$-axis. There is a wedge boundary.

 {\bf Case 2d}: $N(-1)<0$.  Again, we increase ${\cal H}$ from $E_0$ to $E_0+\epsilon$ for arbitrarily small $\epsilon>0$.
There
are no points of intersection with the negative $y$-axis and, the trajectory is  non-physical for ${\cal H}<0$,
but physical for the  ${\cal H}>0$ interval until~$\cal H$
is increased to $\epsilon$ beyond the critical va\-lue~$E_1$.
Then the  trajectory  is unbounded and asymptotic to a radial line in the 2nd quadrant. Physical trajectories have a wedge boundary.
See Fig.~\ref{Fig32.jpg}.
\begin{figure}[t!]\centering
 \includegraphics[width=60mm]{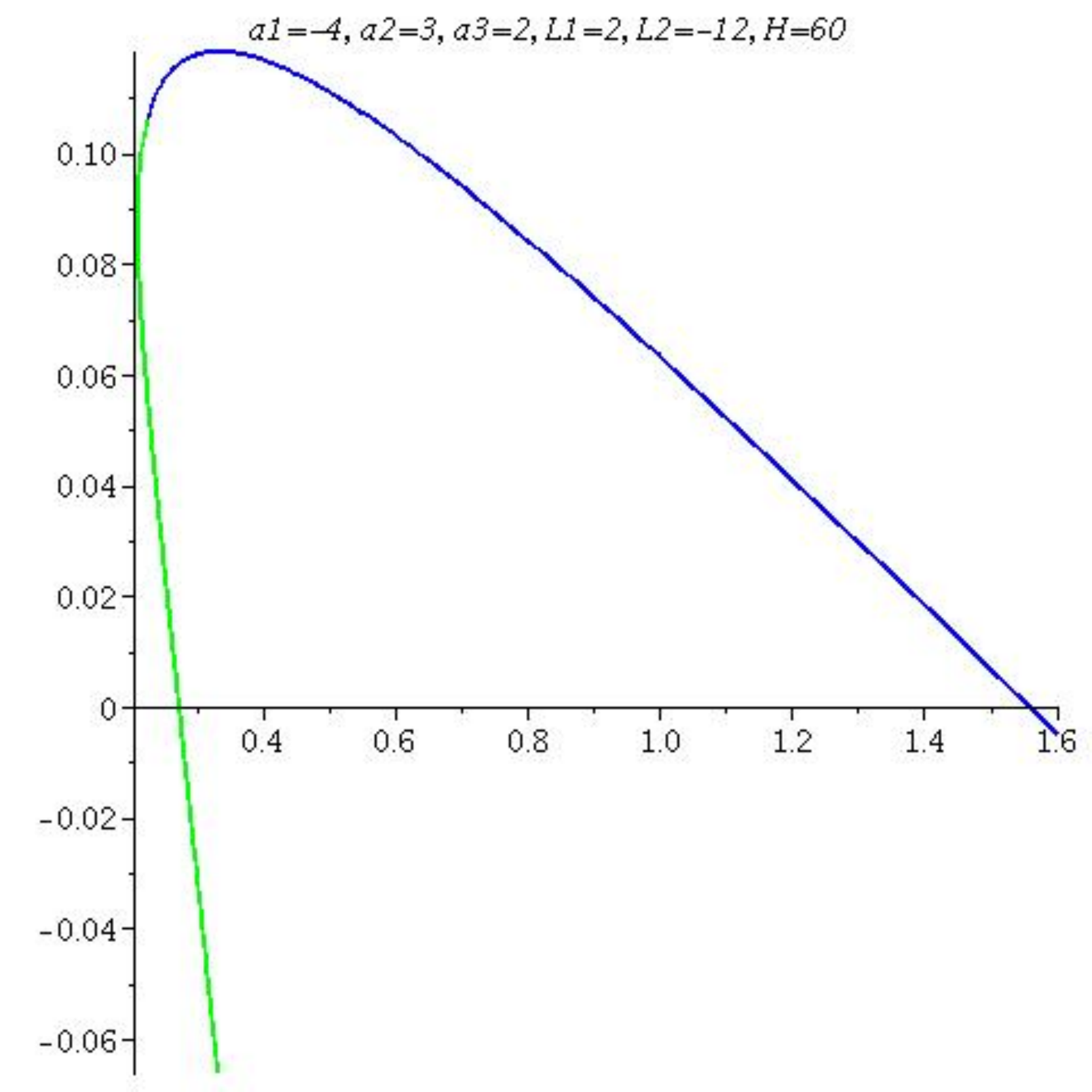}
\caption{Case 2d: $N(-1)<0$.}\label{Fig32.jpg}
\end{figure}

3.~${\cal R}^2=0$.
In this case equation (\ref{NDrel}) for the trajectories simplif\/ies since  $S\equiv 0$. We have  $ N^2=(a_2^2+4(\ell_1-a_3)\ell_1)D$ so
\begin{gather*}
\label{rR1=0} r(\theta)=\pm\frac{a_2^2+4(\ell_1-a_3)\ell_1}{N}
 =\pm\frac{a_2^2+4(\ell_1-a_3)\ell_1}{(a_1a_2-4{\cal L}_2\ell_1)t+2(a_1a_3-a_2{\cal L}_2-a_1\ell_1)}=\pm \frac{1}{\alpha\sin\theta+\beta},
 \end{gather*}
where
$\alpha=\frac{a_1a_2-4{\cal L}_2\ell_1}{a_2^2+4(\ell_1-a_3)\ell_1}$, $\beta=
\frac{2(a_1a_3-a_2{\cal L}_2-a_1\ell_1)}{a_2^2+4(\ell_1-a_3)\ell_1}$,
and the sign is chosen so that $r>0$. Setting $y=r\sin\theta, x=r\cos\theta$ we see that the
trajectory is a segment of the curve $\alpha y+\beta r=\pm 1$ or $\beta r=-\alpha y\pm 1$, so
$ \beta^2(x^2+y^2)=\alpha^2 y^2\pm 2\alpha y+1$,$\alpha^2+\beta^2>0$.
We have the following possibilities for the curve segments:
\begin{alignat*}{3}
&\beta^2>\alpha^2>0\colon \quad\text{ellipse},\qquad &  &0<\beta^2<\alpha^2\colon \quad\text{hyperbola},&\\
&\beta^2=\alpha^2\colon \quad\text{parabola},\qquad & &\alpha=0\colon \quad\text{circle},&\\
&\beta=0\colon \quad\text{horizontal  line}.&&&
\end{alignat*}
Since ${\cal R}^2=0$ we can express the constants of the motion in terms of $\alpha$, $\beta$ alone:
\begin{gather*}   {\cal H}=\frac{a_2\alpha^3-a_1\alpha^2-2a_3\alpha^2\beta+a_2\alpha\beta^2+a_1\beta^2}{2\beta},\qquad
\ell_1=\frac{a_2\alpha-a_1}{2\beta},\\ 
 {\cal L}_2=-\frac{a_2\alpha^2-a_1\alpha-2a_3\alpha\beta+a_2\beta^2}{2\beta}.
\end{gather*}

 {\bf Case 2e}: circular arc, $a_1a_2-4{\cal L}_2\ell_1=0$.
Then ${\cal H}=-\frac{a_1^2}{4\ell_1}$. The radius is $r_0=\pm \frac{2\ell_1}{a_1}$. There is a wedge boundary at
$ t_0=\sin\theta_0=\frac{-a_2 + \sqrt{a_2^2+4\ell_1(\ell_1-a_3)}}{2\ell_1}$.
The trajectory impacts the negative $y$-axis.
\begin{figure}[t!]\centering
 \includegraphics[width=60mm]{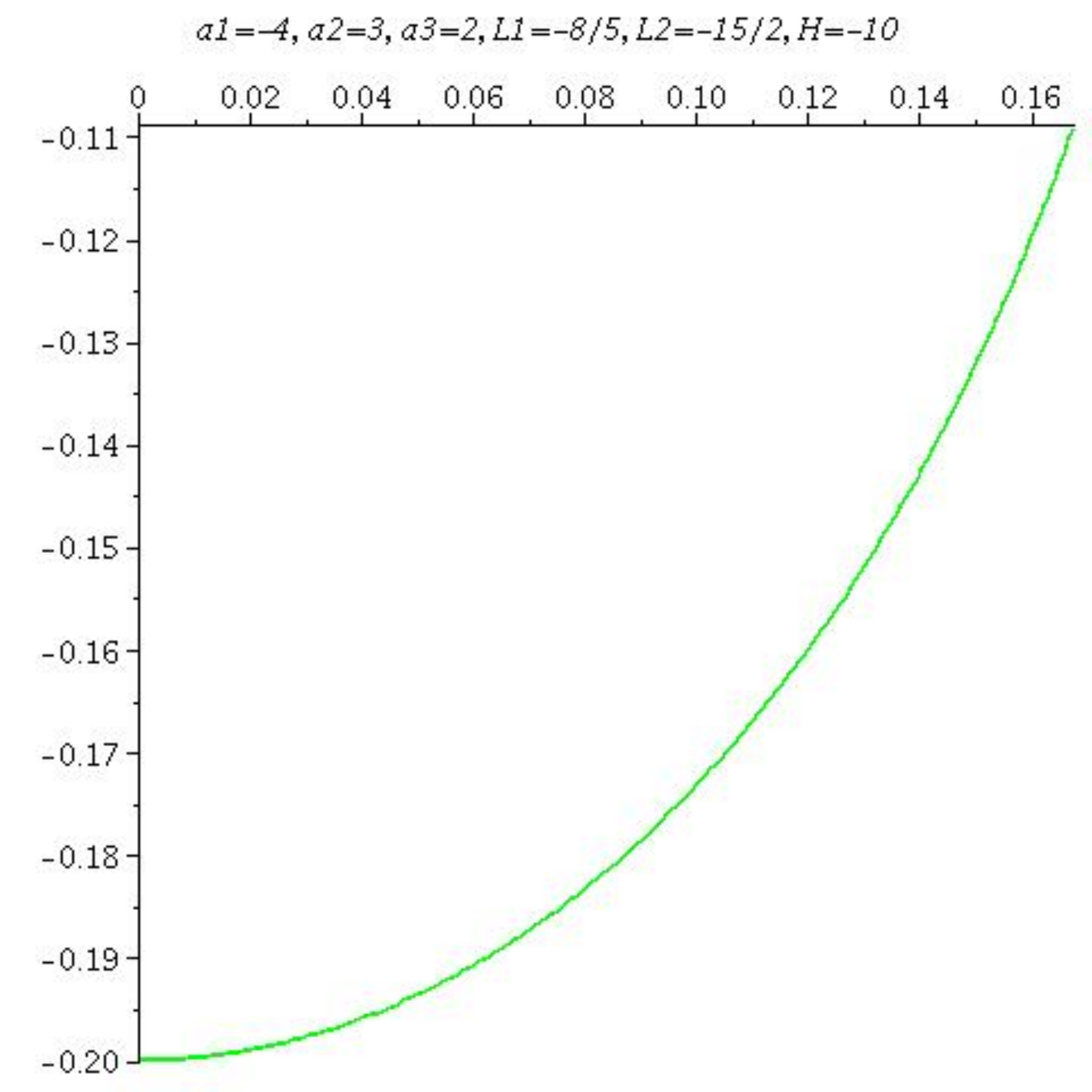}
\caption{Case 2e: circular arc.}\label{Fig33.jpg}
\end{figure}
See Fig.~\ref{Fig33.jpg}.

 {\bf Case 2f}: horizontal line segment, $a_2{\cal L}_2+a_1\ell_1=a_1a_3$,
$y_0=\frac{a_2}{a_1}$. The energy is ${\cal H}=\frac{a_1^2(\ell_1-a_3)}{a_2^2}$. If the energy is negative there is a wedge boundary at
$t_0=\sin\theta_0=\frac{-a_2 + \sqrt{a_2^2+4\ell_1(\ell_1-a_3)}}{2\ell_1}<0$,
and the trajectory impacts the negative $y$-axis. If the energy is positive, the trajectory is unbounded and there is no wedge boundary,
 but the trajectory still impacts the negative $y$-axis.

  {\bf Case 2g}: parabolic arc.   If $\alpha=\beta$ we have
\begin{gather*}
 {\cal L}_2=\frac{a_1(a_2-2a_3+2\ell_1)}{2(2\ell_1-a_2)},\qquad {\cal H}=\frac{(a_2-a_3)a_1^2}{(2\ell_1-a_2)^2}
 \end{gather*}
and the energy is positive. If $\alpha=-\beta$ we have
\begin{gather*}
 {\cal L}_2=\frac{a_1(a_2+2a_3-2\ell_1)}{2(2\ell_1+a_2)},\qquad {\cal H}=-\frac{(a_2+a_3)a_1^2}{(2\ell_1+a_2)^2}
 \end{gather*}
and the energy is negative. There is a wedge boundary at
$t_0=\sin\theta_0=\frac{-a_2 + \sqrt{a_2^2+4\ell_1(\ell_1-a_3)}}{2\ell_1}$,
and the trajectory impacts the negative $y$-axis.

  {\bf Case 2h}: elliptic arc.
\begin{gather*}
{\cal H}= \frac{4{\ell}_1{\cal L}_2^2-2a_1a_2{\cal L}_2-a_1^2({\ell }_1-a_3)}{a_2^2+4\ell_1(\ell_1-a_3)},
\end{gather*}
the general expression. There is a wedge boundary at
$t_0=\sin\theta_0=\frac{-a_2 + \sqrt{a_2^2+4\ell_1(\ell_1-a_3)}}{2\ell_1}$,
and the trajectory impacts the negative $y$-axis.

  {\bf Case 2i}: hyperbolic arc.
 Here again,
\begin{gather*}
{\cal H}= \frac{4{\ell}_1{\cal L}_2^2-2a_1a_2{\cal L}_2-a_1^2({\ell }_1-a_3)}{a_2^2+4\ell_1(\ell_1-a_3)},
\end{gather*}
the general expression. There is a wedge boundary at
$t_0=\sin\theta_0=\frac{-a_2 + \sqrt{a_2^2+4\ell_1(\ell_1-a_3)}}{2\ell_1}$,
and the trajectory impacts the negative $y$-axis.

There are no radial line trajectories in this case.

\section[The Post--Winternitz superintegrable systems]{The Post--Winternitz superintegrable systems}\label{section4}

These systems are especially interesting as a pair of classical and quantum superintegrable systems that do not admit separation of variables,
so they cannot be solved by traditional methods.
We treat only the classical case here and show that  its superintegrability allows us  to determine the trajectories exactly.
The quantum case can also be solved.

\subsection{The classical system}

The classical Post--Winternitz system is def\/ined by the Hamiltonian \cite{PW2011}
\begin{gather*}
{\cal H} = \frac12 \big(p_x^2+p_y^2\big)+\frac{cy}{x^{2/3}}.
\end{gather*}
The generating constants of the motion are of 3rd and 4th degree:
\begin{gather*}
 {\cal L}_1=3p_x^2p_y+2p_y^3+9cx^{1/3}p_x+6\frac{cyp_y}{x^{2/3}},\\
 {\cal L}_2=p_x^4+4cp_x^2\frac{y}{x^{2/3}}-12cx^{1/3}p_xp_y-2c^2\frac{(9x^2-2y^2)}{x^{4/3}}.
 \end{gather*}
Here $\{L_1,L_2\}=-188c^3$, so the constants of the motion generate a Heisenberg algebra.
Since there is no 2nd degree constant other than~$\cal H$, separation of the Hamilton--Jacobi equations is not
possible in any coordinate system, so separation of variables methods cannot be used to compute the trajectories. However, we can f\/ind a parametric description of the trajectories.
Choosing constants  ${\cal H}=E$, ${\cal L}_1=\ell_1$ and ${\cal L}_2=\ell_2$, we can verify the two identities
\begin{gather*}
 -p_y^3+9cx^{1/3}p_x-\ell_1+6Ep_y=0,\\
 \left(-\frac13p_y^4+4py^2E+4E^2-\ell_2-\frac43\ell_1 p_y\right)p_x^2-\frac29\big({-}p_y^3-\ell_1+6Ep_y\big)^2=0.
 \end{gather*}
From these results we can solve for $p_x$ and $x$, $y$ as functions of the parameter $p_y$:
\begin{gather}
  p_x = \frac{2\big({-}p_y^3-\ell_1+6Ep_y\big)}{\sqrt{-6p_y^4+72Ep_y^2-24\ell_1 p_y+72E^2-18\ell_2}},\nonumber\\
  x = -\frac{1}{5832c^3}\big({-}6p_y^4+72Ep_y^2-24\ell_1p_y+72E^2-18\ell_2\big)^{3/2},\nonumber\\
  y=\frac{1}{324c^3}\big({-}2\ell_1^2-36E^2p_y^2+p_y^6+8\ell_1p_y^3-18Ep_y^4+9\ell_2p_y^2+72E^3-18\ell_2 E\big),\label{xypx}
\end{gather}
and equations (\ref{xypx}) enable us to plot the trajectories. See Figs.~\ref{Fig34.pdf},~\ref{Fig35.pdf} and~\ref{Fig36.pdf}. There are no closed orbits.
The trajectories always impact the $y$-axis from the left or right, depending on the signs of the parameters.
\begin{figure}[t!]\centering
 \includegraphics[width=60mm]{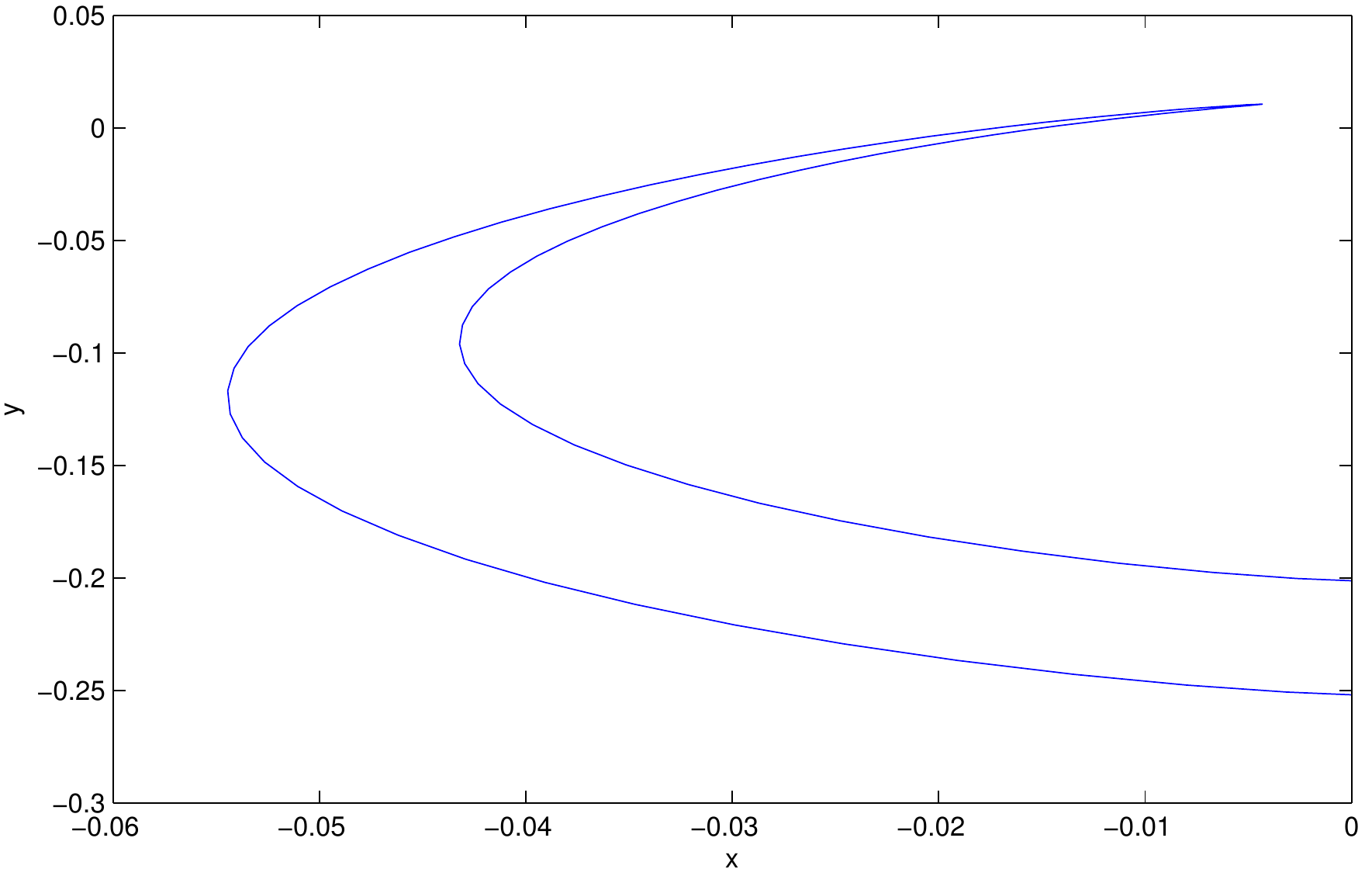}
\caption{$c=5$, $E=2$, $L_1=1$, $L_2=4$.}\label{Fig34.pdf}
\end{figure}
\begin{figure}[t!]\centering
 \includegraphics[width=60mm]{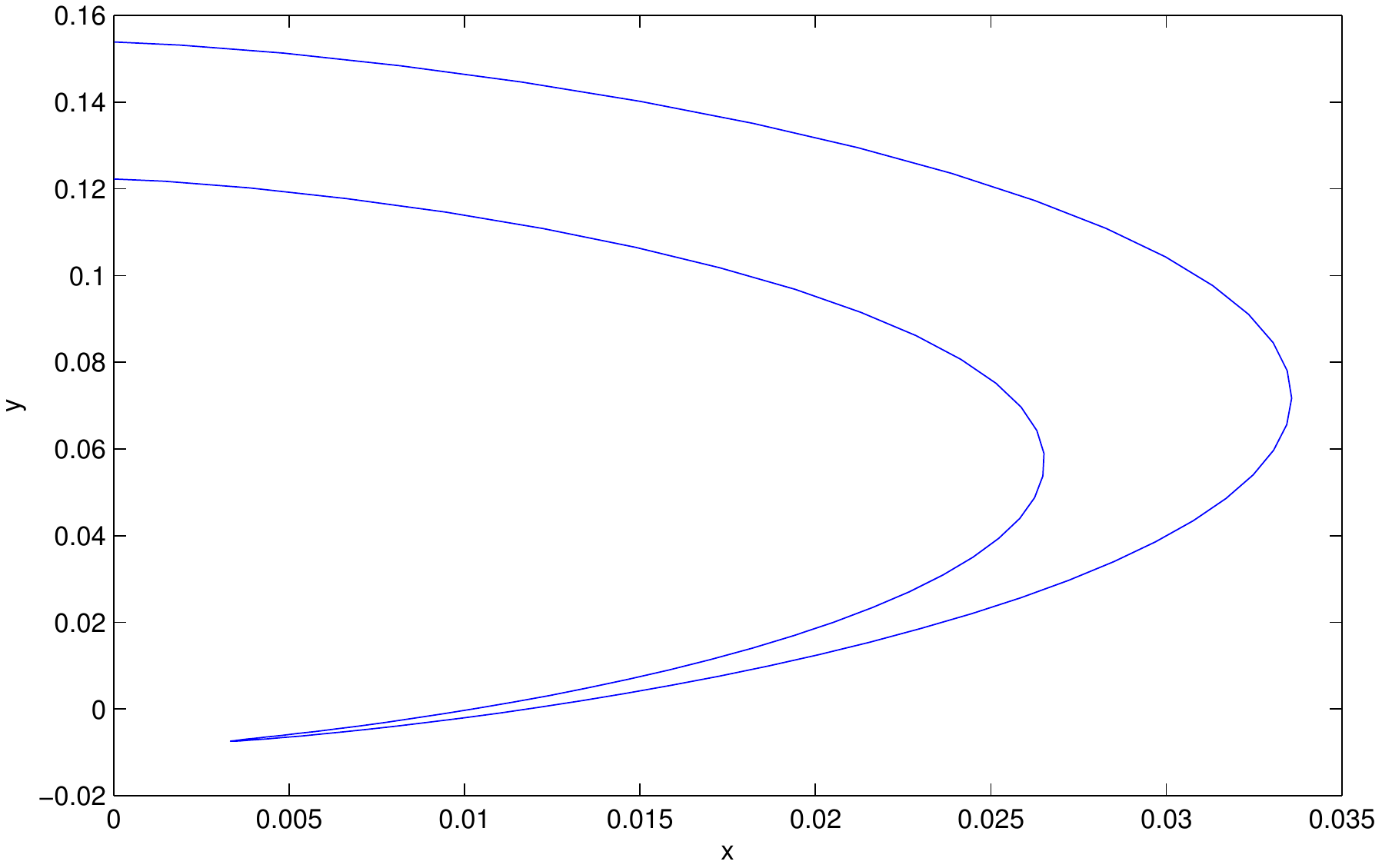}
\caption{$c=-12$, $E=4$, $L_1=3$, $L_2=6$.}\label{Fig35.pdf}
\end{figure}
\begin{figure}[t!]\centering
 \includegraphics[width=60mm]{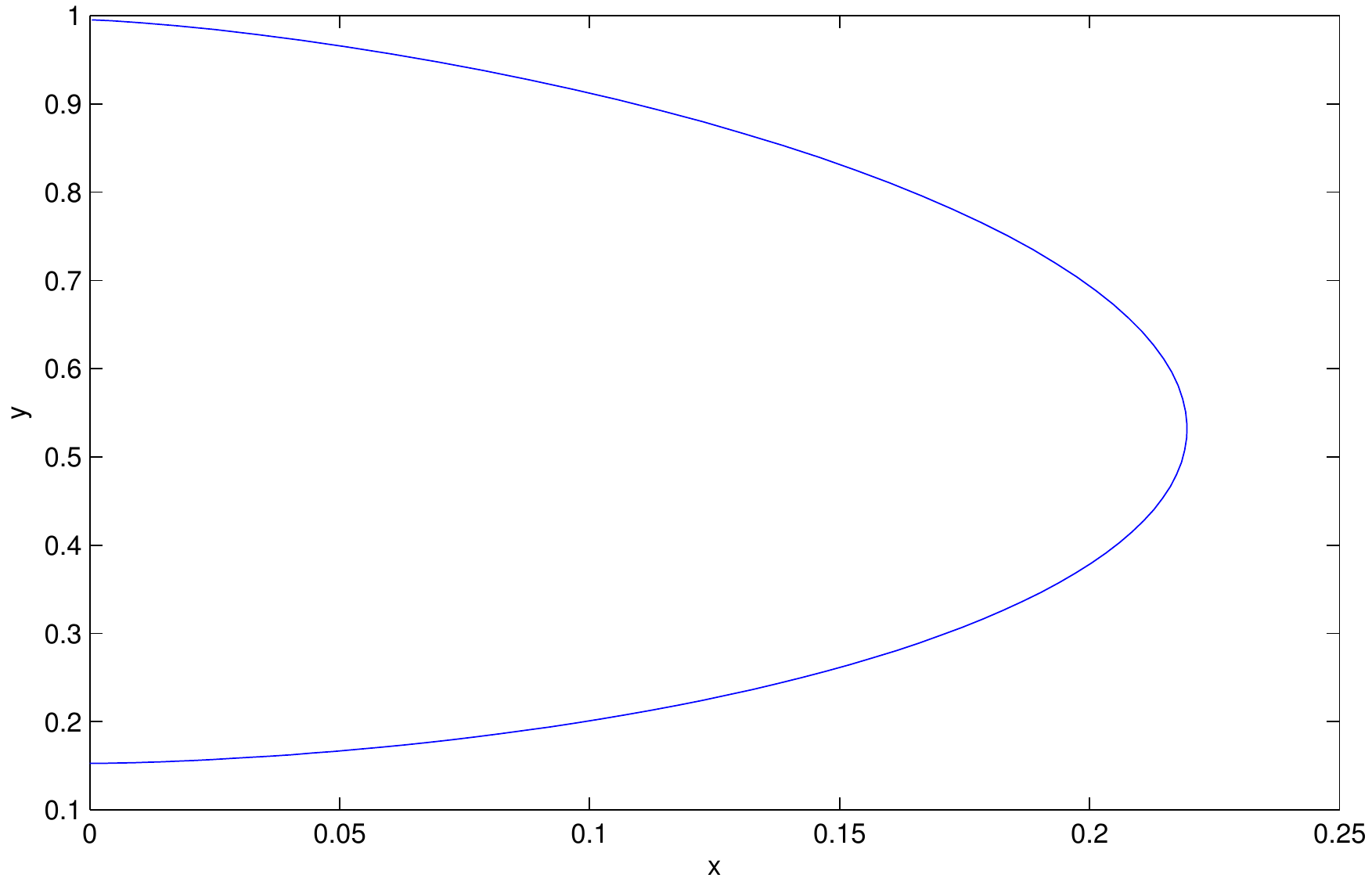}
\caption{$c=-1$, $E=0$, $L_1=5$, $L_2=5$.}\label{Fig36.pdf}
\end{figure}

\section{Classical elliptic superintegrable systems}\label{section5}

\subsection{Elliptic superintegrability }

Here we  extend the ideas in \cite{TTW1,TTW2}  to a system that separates in Jacobi elliptic coordina\-tes~\cite{DLMF}. We start with a simple example,
the Higgs oscillator $S3$~(\ref{Hamiltonian})~\cite{KKMP}, expressed in elliptic coordinates.
\begin{gather*}
 H= \frac{1}{\mbox{sn}^2(\alpha,k)-\mbox{sn}^2(\beta,k)}\left(p_\alpha^2-p_\beta^2 +\frac{A}{\mbox{sn}^2(\alpha,k)}-\frac{A}{\mbox{sn}^2(\beta,k)}\right).
 \end{gather*}
The Hamilton--Jacobi equation can be taken as $H=E$ for $p_\alpha=\frac{\partial W}{\partial \alpha}$, $p_\beta=\frac{\partial W}{\partial \beta}$.
The separation equations in the coordinates $\alpha$, $\beta$ are
\begin{gather}\label{papb}
p_\alpha^2-E\, \mbox{sn}^2(\alpha,k) +\frac{A}{\mbox{sn}^2(\alpha,k)}+\Lambda_1=0,\qquad p_\beta^2-E\,
\mbox{sn}^2(\beta,k) +\frac{A}{\mbox{sn}^2(\beta,k)}+\Lambda_1=0.
\end{gather}
Here, $H$ and $L_1= - p_\alpha^2+ \mbox{sn}^2(\alpha,k)H -\frac{A}{\mbox{sn}^2(\alpha,k)}$ are constants of the motion, and $L_1=\Lambda_1$ on a~coordinate hypersurface.
To f\/ind another constant of the motion via the action-angle variable construction we introduce  $M$ and $N$ which
are computed according to
\begin{gather*}
 M=\frac12\int\frac{d\alpha}{p_\alpha},\qquad N=\frac12\int\frac{d\beta}{p_\beta},
 \end{gather*}
where $p_\alpha,p_\beta$ are expressed in terms of coordinates $\alpha$, $\beta$ by (\ref{papb}).
Applying  action-angle theory, see, e.g.,~\cite{KKM10,KKM10b}, we see that  $M-N$ is a  constant of the motion. The integral  $M$ can be written as
\begin{gather*}
 M=\frac12\int\frac{\mbox{sn}(\alpha,k)  d\alpha}{\sqrt{E\, \mbox{sn}^4(\alpha,k)-\Lambda_1\, \mbox{sn}^2(\alpha,k)-A}},
 \end{gather*}
with  a similar integral for~$N$. This can be calculated as an elliptic
integral by choosing the new variable $\rho =\mbox{sn} ^2 ( \alpha ,k)$. We then have the relations
\begin{gather*}
 \frac{d\rho}{d\alpha}= 2\mbox{sn}( \alpha ,k)\mbox{cn}( \alpha ,k)\mbox{dn}( \alpha ,k),\qquad
 \mbox{cn}( \alpha ,k)=\sqrt{1- \rho },\qquad  \mbox{dn}( \alpha ,k)=\sqrt{1-k ^2 \rho }.
\end{gather*}
 The change of variables is then tantamount to the
computation of
\begin{gather*}
 M=\frac12\int\frac{\mbox{cn}( \alpha ,k)\mbox{dn}( \alpha ,k)\mbox{sn}( \alpha ,k)\, d\alpha}{\mbox{cn}( \alpha ,k)\mbox{dn}( \alpha ,k)
 \sqrt{E\, \mbox{sn}^4(\alpha,k)-\Lambda_1\, \mbox{sn}^2(\alpha,k)-A}}\\
 \hphantom{M}{}
 =\frac{1}{4k\sqrt{E}}\int\frac{d\rho}{\sqrt{(\rho-a)(\rho-b)(\rho-c)(\rho-d)}},
\end{gather*}
 where $a=1$, $b=\frac{ 1 }{k^2}$ and $c,d$ are roots of  $\rho^2-\frac{\Lambda_1}{E }\rho-\frac{A}{E}$.
A~similar calculation applies to the
substitution $\mu =\mbox{sn}^ 2 (\beta ,k)$ with exactly the same $a$, $b$, $c $ and $d$. A convenient choice of integral is
 \begin{gather*}
  \int\frac{dt}{\sqrt{(t-a)(t-b)(t-c)(t-d)}}\\
  \qquad{} =\frac{2}{\sqrt{(a-c)(b-d)}}\mbox{sn}^{-1}\left(\sqrt{\frac{(b-d)(t-a)}{(a-d)(t-b)}},
 \sqrt{\frac{(b-c)(a-d)}{(a-c)(b-d)}}\right).
\end{gather*}
This integral can be used for $t= \rho$ or $\mu$. If we  form $\mbox{sn}(2k \sqrt{E(a-c)(b-d)}(M-N),\kappa)$ for $\kappa=\sqrt{(b-c)(a-d)/(a-c)/(b-d)}$
and use the
addition formula~\cite{DLMF}
\begin{gather*}
 \mbox{sn}(u+v, \kappa )=\frac{ \mbox{sn}(u,\kappa)\mbox{cn}(v,\kappa)\mbox{dn} (v,\kappa)+\mbox{sn}(v,\kappa) \mbox{cn}(u,\kappa)\mbox{dn}(u,\kappa)}{1-\kappa^2\mbox{sn}^2(u,\kappa)
 sn^2(v,\kappa)},
\end{gather*}
  we obtain a constant of the motion. In terms of the coordinates it is helpful to write
\begin{gather*}
 p _\alpha =\sqrt{E}\sqrt{\frac{( \rho -c)( \rho -d)}{ \rho} }, \qquad p _\beta =\sqrt{E}\sqrt{\frac{( \mu -c)( \mu -d)}{ \mu} }.
 \end{gather*}
Then  for $Z=\sqrt{\frac{ b-d}{a-c}} \frac{1}{a-d}$  we have
\begin{gather*}
 \mbox{sn}(u, \kappa)\mbox{cn}(v, \kappa )\mbox{dn}(v, \kappa )=Z\sqrt{\frac{\rho-a}{\rho-b}}\frac{b-a}{\mu-b}\sqrt{(\mu-c)(\mu-d)}
 \end{gather*}
with a similar expression for $\mbox{sn}( v, \kappa )\mbox{cn}( u , \kappa )\mbox{dn}( u, \kappa )$. We also note that
\begin{gather*}
 \kappa^2\mbox{sn}(\rho,\kappa)^2\mbox{sn}(\mu,\kappa)^2=\frac{(\rho-a)(\mu-a)(b-c)(b-d)}{(\rho-b)(\mu-b)(a-c)(a-d)},
 \end{gather*}
as well as
\begin{gather*}
  \sqrt{(\mu-c)(\mu-d)}=2 \mu\sqrt{(1-\mu)(1-k^2 \mu)}\frac{p_\mu}{\sqrt{E}},\\
 \sqrt{(\rho-c)(\rho-d)}=2 \rho\sqrt{(1-\rho)(1-k^2 \rho)}\frac{p_\rho}{\sqrt{E}},\\
  \mbox{sn}(u, \kappa )=\sqrt{ \frac{(\rho-1)(d\, k^2-1)} {(d-1)(\rho\, k^2-1)}},\qquad \mbox{cn}(u, \kappa )=\sqrt{ \frac{(\rho-d)( k^2-1)} {(d-1)(1-\rho\, k^2)}},\\
  \mbox{dn}(u, \kappa )=\sqrt{ \frac{(\rho-c)( k^2-1)} {(c-1)(1-\rho\, k^2)}},\qquad \kappa=\sqrt{\frac{(d-1)(c\, k^2-1)}{(c-1)(d\, k^2-1)}},
\end{gather*}
with similar expressions for $\mbox{sn}(v, \kappa )$, $\mbox{cn}(v, \kappa )$ and $\mbox{dn}(v, \kappa )$ with $\rho$ replaced by~$\mu$.  From this we
deduce that
$\frac{\sqrt{E}}{Z}\mbox{sn}(2k\sqrt{E(a-c)(b-d)}(M-N),\kappa)=\frac{1}{X}$
where $X$ is a constant of the motion, linear in the momenta,  viz
$X= \frac{\sqrt{ (1-k^2\rho) (1-k^2\mu)(1-\rho)(1-\mu)}}{(\rho-\mu)}(\rho p_\rho-\mu p_\mu)$,
 in involution with the classical Hamiltonian
\begin{gather*}
 H=\frac{1}{\rho-\mu}\big( \rho(1-\rho)\big(1-k^2\rho\big)p_\rho^2-\mu(1-\mu)\big(1-k^2\mu\big)p_\mu^2\big)-\frac{A}{\rho\mu}.
 \end{gather*}
This is just the f\/irst degree symmetry that we expect for this Hamiltonian. In this case we choose the
variables $\rho$ and $\mu$ to obtain the constant.  In particular,
$cd=- A/E$, $c+d=-\Lambda_1/E$.

This seems to be  just a complicated way of deriving a straightforward result. However, we can use the TTW trick~\cite{TTW1,TTW2} by looking at the Hamiltonian
\begin{gather*}
 H= \frac{1}{\mbox{sn}^2(p\alpha,k)-\mbox{sn}^2(q\beta,k)}\left(p_\alpha^2-p_\beta^2 +\frac{A}{\mbox{sn}^2(p\alpha,k)}-\frac{A}{\mbox{sn}^2(q\beta,k)}\right),
\end{gather*}
where $p$ and $q$ are arbitrary nonzero numbers.
The separation equations are
\begin{gather}  p_\alpha^2-E\, \mbox{sn}^2(p\alpha,k) +\frac{A}{\mbox{sn}^2(p\alpha,k)}+\Lambda_1=0,\nonumber\\ p_\beta^2-E\,
\mbox{sn}^2(q\beta,k) +\frac{A}{\mbox{sn}^2(q\beta,k)}+\Lambda_1=0.\label{papb1}
\end{gather}
Here, $H$ and $L_1=  p_\alpha^2- \mbox{sn}^2(p\alpha,k)H +\frac{A}{\mbox{sn}^2(p\alpha,k)}$ are constants of the motion.
To f\/ind another constant of the motion via the action-angle variable construction we introduce~$M$ and~$N$ which
are computed according to
$ M=\frac12\int\frac{d\alpha}{p_\alpha}$, $N=\frac12\int\frac{d\beta}{p_\beta}$,
where $p_\alpha$, $p_\beta$ are expressed in terms of coordinates~$\alpha$,~$\beta$ by~(\ref{papb1}).
Again $M-N$ is a  constant of the motion. The integral  $M$ can be written as
\begin{gather*}
 M=\frac12\int\frac{\mbox{sn}(p\alpha,k)\, d\alpha}{\sqrt{E\, \mbox{sn}^4(p\alpha,k)-\Lambda_1\, \mbox{sn}^2(p\alpha,k)-A}},
 \end{gather*}
with  a similar integral for $ N$.

This can be calculated as an elliptic
integral by choosing the new variable $\rho =\mbox{sn} ^2 ( p\alpha ,k)$. We then have the relations
 \begin{gather*}
  \frac{d\rho}{d\alpha}= 2p\ \mbox{sn}( p\alpha ,k)\mbox{cn}( p\alpha ,k)\mbox{dn}( p\alpha ,k),\\
 \mbox{cn}( p\alpha ,k)=\sqrt{1- \rho },\qquad  \mbox{dn}( p\alpha ,k)=\sqrt{1-k ^2 \rho }.
\end{gather*}
 The change of variables is then tantamount to the
computation of
\begin{gather*}
 M=\frac{1}{2p}\int\frac{\mbox{cn}( {\hat \alpha} ,k)\mbox{dn}( {\hat \alpha} ,k)\mbox{sn}( {\hat \alpha} ,k)\, d{\hat \alpha}}{\mbox{cn}({\hat  \alpha} ,k)\mbox{dn}(
{\hat \alpha} ,k)
 \sqrt{E\ \mbox{sn}^4({\hat \alpha},k)-\Lambda_1\, \mbox{sn}^2({\hat \alpha},k)-A}}\\
 \hphantom{M}
=\frac{1}{4k\sqrt{E}p}\int\frac{d\rho}{\sqrt{(\rho-a)(\rho-b)(\rho-c)(\rho-d)}},
\end{gather*}
where ${\hat \alpha}=p\alpha$, $a=1$, $b=\frac{1}{k^2}$ and $c,d$ are roots of  $\rho^2-\frac{\Lambda_1}{E }\rho-\frac{A}{E}$.
A similar calculation applies to the
substitution $\mu =\mbox{sn}^ 2 (q\beta ,k)$ with exactly the same~$a$,~$b$,~$c $ and~$d$. A convenient choice of integral is
 \begin{gather*}
 \int\frac{dt}{\sqrt{(t-a)(t-b)(t-c)(t-d)}}\\
 \qquad {} =\frac{2}{\sqrt{(a-c)(b-d)}}\mbox{sn}^{-1}\left(\sqrt{\frac{(b-d)(t-a)}{(a-d)(t-b)}},
 \sqrt{\frac{(b-c)(a-d)}{(a-c)(b-d)}}\right).
\end{gather*}
This integral can be used for $t= \rho$ or $\mu$. If we now form  $\mbox{sn}(2k \sqrt{E(a-c)(b-d)} pq(M-N))$ and use the
addition formula
\begin{gather*} \mbox{sn}(qu+pv, \kappa )=\frac{ \mbox{sn}(qu,\kappa)\mbox{cn}(pv,\kappa)\mbox{dn} (pv,\kappa)+\mbox{sn}(pv,\kappa) \mbox{cn}(qu,\kappa)\mbox{dn}(qu,\kappa)}
{1-\kappa^2\mbox{sn}^2(qu,\kappa)
 \mbox{sn}^2(pv,\kappa)}
\end{gather*}
we obtain a constant of the motion.  Thus computation yields a rational constant of the motion whenever~$p/q$ is a rational number.

Let us restrict ourselves to $p=1$, $q=2$. Then $u= \alpha/2$ and
$v= \beta$ where
\begin{gather*}
 p _\alpha =\sqrt{E}\sqrt{\frac{( \rho -c)( \rho -d)}{ \rho} }, \qquad p _\beta =\sqrt{E}\sqrt{\frac{( \mu -c)( \mu -d)}{ \mu} },\qquad
Y=\sqrt{\frac{ b-d}{a-d}}.
\end{gather*}
We form
\begin{gather*}
 T_1= \mbox{sn}(2u, \kappa)\mbox{cn}(v, \kappa )\mbox{dn}(v, \kappa )\\
 \hphantom{T_1}{}
 =-2Y\frac{\sqrt{(\rho-a)(\rho-b)(\rho-c)(\rho-d)(\mu-c)(\mu-d)}}
{(\mu-b)(\rho^2(a+b-c-d)+2\rho(cd-ab)-cd(a+b)+ab(c+d))}, \\
 T_2= \mbox{sn}(u, \kappa)\mbox{cn}(2v, \kappa )\mbox{dn}(2v, \kappa )\\
 \hphantom{T_2}{}
 =-Y\sqrt{\frac{\mu-a}{\mu-b}}
 \left((c+b-a-d)\rho^2+2(ad-bc)\rho+cb(a+d)-ad(b+c)\right)\\
\hphantom{T_2=}{} \times \left((d+b-a-c)\rho^2+2(ac-bd)\rho+cb(d-a)+ad(b-c)\right) \\
\hphantom{T_2=}{} \times
 \left((a+b-c-d)\rho^2+2(cd-ab)\rho+ab(c+d)-cd(a+b)\right)^{-2}.
\end{gather*}
We also note that
 \begin{gather*}
 T_3= \kappa^2\mbox{sn}(2u,\kappa)^2\mbox{sn}(\mu,\kappa)^2\\
 \hphantom{T_3}{}
 =4\frac{(\mu-a)(\rho-a)(\rho-b)(\rho-c)(\rho-d)(b-d)(b-c)}
{(\mu-b)\left((a+b-c-d)\rho^2+2(cd-ab)\rho+ab(c+d)-cd(a+b)\right)^2},
\\
 \sqrt{( \mu -c)( \mu -d)}=4 \mu \sqrt{(1- \mu )(1-k ^2 \mu)}\frac{p_\mu}{\sqrt{E}},\\
\sqrt{( \rho -c)( \rho -d)}=2\rho \sqrt{(1- \rho )(1-k ^2 \rho)}\frac{p_\rho}{\sqrt{E}},
\end{gather*}
and $\rho =\mbox{sn}^ 2 ( \alpha,k)$ and $\mu =\mbox{sn} ^2 (2 \beta ,k)$.
From this we deduce that
$\frac{\sqrt{E}}{Y}\mbox{sn}(4k\sqrt{E(a-c)(b-d)}(M-N),\kappa)=P$,
where $P$ is a constant of the motion rational in the momenta. We can  calculate the
components of this constant as follows: $T _1=n_1/d_1$, where
\begin{gather*}
  n_ 1 =-16(1- \rho )\big(1-k^ 2 \rho \big) \rho\mu  {k'}^2 \sqrt{1- \mu }p _\rho p _\mu,\\
  d_1 =(k ^2 \mu-1)\big(\big(k ^2 \Lambda_ 1 -E(1+k 2 )\big) \rho^2 +2\big(Ak^ 2 +E\big) \rho -\Lambda_1 -A(k 2 +1)\big),
\end{gather*}
$T _2=n_2/d_2$, where
\begin{gather*}
 n_ 2 = \sqrt{1- \mu}\big[{-}\big(E^ 2 {k'}^ 4 +k^ 4 \big(\Lambda_ 1 ^2 +4AE\big)\big) \rho^4 +
 \big({-}k ^4 \big({}-8AE+2\Lambda_1 E-2\Lambda_ 1 ^2 \big)\\
 \hphantom{n_ 2 =}{}
  +k ^2 \big({-}8AE-4\Lambda_ 1 E-\Lambda^ 2_ 1 \big)+2\Lambda_ 1 E\big) \rho^ 3 +6\big(2k ^2 AE+AE+k ^2 (\Lambda^ 2_ 1 +AE)\big) \rho^ 2 \\
 \hphantom{n_ 2 =}{}
  +\big({-}2k ^4 \Lambda_ 1 A   +2k^ 2 \big({-}\Lambda_ 1^ 2 +2\Lambda_1 A-4AE\big)-2\Lambda_1^2 -2\Lambda_1 A-8AE\big) \rho\\
 \hphantom{n_ 2 =}{}
  +\big({-}k ^4 A 2 +4AE+\big(2k ^2 -1\big)A^ 2 -\Lambda^ 2 _1 \big)\big],\\
d_ 2 = \sqrt{k ^2 \mu -1}\big[\big(k^ 2 \Lambda_ 1 -E\big(1+k^ 2 \big)\big) \rho^ 2 +2\big(Ak^ 2 +E\big) \rho -\Lambda_ 1 -A\big(k ^2 +1\big)\big]^ 2,
\end{gather*}
$T_3=n_3/d_3$, where
\begin{gather*}
 n_ 3 =4( \rho -1)\big( \rho k ^2 -1\big)( \mu -1)\big({-}E \rho^ 2 +\Lambda_ 1 \rho +A\big)\big({-}E+Ak^ 4 +\Lambda_1 k^ 2 \big),\\
 d _3 =(k ^2 \mu -1)\big[\big(k^ 2 \Lambda_ 1 -E\big(1+k^ 2 \big)\big) \rho^ 2 +2\big(Ak ^2 +E\big) \rho -\Lambda_ 1 -A\big(k ^2 +1\big)\big]^ 2.
\end{gather*}
We could, of course, use the expressions above given in terms of $c+d $ and $cd$.

If we set
$x={\rm sn}^2 ( \alpha,k)=\rho$ and $y=\mbox{sn} ^2 (2 \beta,k)=\mu$ then $H$ and $L_1$ become
\begin{gather}\label{Hamelliptic1}
H=\frac{1}{x-y}\big( x(1-x)\big(1-k ^2 x\big)p_ x ^2 -4y(1-y)\big(1-k^ 2 y\big)p^ 2 _y \big)-\frac{A}{xy},\\
L_1 =\frac{xy}{x-y}\big( (1-x)\big(1-k^ 2 x\big)p ^2 _x -4(1-y)\big(1-yk^ 2 \big)p ^2 _y \big)-A\left(\frac{1}{x}+\frac{1}{y}\right).\nonumber
\end{gather}

For $ A=0$   the extra constant of the motion is
$G= N/D$ where
\begin{gather*}
 N=\sqrt{y} \big(\big({-}yxk ^2 -x^ 2 k-x ^2 +2x-y\big)p^ 2_ x -4y(y-1)\big({-}1+k^ 2 y\big)p ^2 _y \\
 \hphantom{N=}{} +4x(y-1)\big({-}1+k ^2 y\big)p_ x p _y \big),\\
 D=\sqrt{y-1} \big(\big(x^ 2 k^ 2 +x^2 yk^ 2 -x ^2 -2xyk ^2 +y\big)p^ 2_ x +4y(y-1)\big({-}1+k^ 2 y\big)p^ 2_ y\\
 \hphantom{D=}{} -4y\big({-}1+k ^2 y\big)(x-1)p_ x p _y \big).
\end{gather*}
Setting $ R=\{L_ 1 ,G\}$  we obtain the polynomial structure relations
\begin{gather*}
 R ^2 -4\big(1-{k'}^ 2 G^ 2 \big)\big(L_ 1 +G ^2 H-L_ 1 G ^2 \big)=0,\\
 \{G,R\}=2\big(G ^2 -1\big)\big(G^ 2 {k'}^ 2 -1\big),\\
  \{L_1 ,R\}=-4G\big(\big(2-k ^2 \big)L_ 1 +2{k'}^ 2 HG^ 2 -2{k'}^ 2 L_ 1 G^ 2 -H\big).
\end{gather*}
Turning to the case when $A\ne 0$ we have $G=N/D$,
where
\begin{gather*}
  N=\sqrt{y-1}\big[x^ 2 y\big(x ^2 yk^ 2 +x^ 2 k^ 2 -x^ 2 -2xyk ^2 +y\big)p^ 2 _x +4x ^2 y^ 2 (y-1)\big({-}1+k ^2 y\big)p^ 2_ y \\
  \hphantom{N=}{}-
4x ^2 y^ 2 \big({-}1+k^ 2 y\big)p_ x p_ y +A(x-y)^ 2 \big],\\
  D=\sqrt{-1+k ^2y}\big[x ^2 y\big({-}2xy+y-k^ 2 x ^2 +x^ 2 yk^ 2 +x^ 2 \big)p ^2 _x +4x^ 2 y ^2 (y-1)\big({-}1+k ^2 y\big)p^ 2 _y\\
  \hphantom{D=}{}
-4x^ 2 y^ 2 (y-1)\big({-}1+k ^2 x\big)p_ x p_ y +A(x-y)^2\big].
\end{gather*}
The structure  relations are
\begin{gather}
 \{G,R\}=-2\big(G^ 2 -1\big)\big({k'}^ 2 G^ 2 -1\big),\nonumber\\
  \{L_ 1 ,R\}=-4G\big[2L_ 1 k ^2 G ^2 +2Ak^ 4 G^ 2 -2HG^ 2 -L_1 -L_ 1 k^ 2 -2Ak ^2 +2H\big],\label{structureelliptic2}\\
  R^ 2 +4\big[Ak^ 4 G^ 2 -L_1 k^ 2 G^ 2 +L_1 G^ 4 -2Ak^ 2 G+A-L_ 1 G ^2 +2HG ^2 +L_ 1 -H-H ^4 \big]=0.\nonumber
\end{gather}
All of the examples that we have constructed using the addition theorem for elliptic functions obey polynomial structure equations,
but we have no proof that this is true in general.

 \subsection{Example}
\begin{gather*}
 H= \frac{1}{y^2-x^2}\big(\big(1-x^2\big)p_x^2-4\big(1-y^2\big)p_y^2\big)-\frac{A}{(1-x^2)(1-y^2)},\\
 L_1 = \frac{(1-x^2)(1-y^2)}{(y^2-x^2)}\big(p_x^2-4p_y^2\big)
+\frac{A(x^2+y^2-2)}{(1-x^2)(1-y^2)},\\
 L_2 = \frac{(yx^4-2x^2y+y^3)p_x^2+(-4x^2y+4x^2
y^3)p_y^2+(-4x^3y^2+4x^3)p_xp_y-A\frac{x^2y(x^2-y^2)^2}{(1-y^2)(1-x^2)^2}}{(x^4-2y^2x^2+y^2)p_x^2+(4x^2-4x^2y^2)p_y^2+(-4xy+4xy^3
)p_xp_y+A\frac{x^2(x^2-y^2)
^2}{(1-y^2)(1-x^2)^2}}.
\end{gather*}
This is the special case of (\ref{Hamelliptic1}), (\ref{structureelliptic2}) where we have replaced $x,y$ by $1-x^2,1-y^2$, respectively, set $k=0$ and rescaled.
Note that
$ H=4p_y^2+\frac{L_1}{(1-y^2)}+\frac{A}{(1-y^2)^2}$,
and
$ \frac{(1-x^2)}{(y^2-x^2)}\left(p_x^2-4p_y^2\right)=\frac{L_1}{(1-y^2)} -\frac{A(x^2+y^2-2)}{(1-x^2)(1-y^2)^2}$.
We have
 $\{H,L_1\}=\{H,L_2\}=0,$
and with $R=\{L_1,L_2\}$, the polynomial structure relations are
\begin{gather*}
 \{L_1,R\}-16L_2L_1+32HL_2-32L_2^3H=0,\qquad
\{L_2,R\}-8\big(1-L_2^2\big)=0,\\
 R^2+16L_1-16L_2^2L_1+32H L_2^2-16L_2^4H-16H+16A=0.
 \end{gather*}
Using the three constants of the motion to eliminate the momenta terms we f\/ind
\begin{gather*}
 p_x^2 = \frac{H(x^2-1)^2+L_1(x^2-1)-A}{(1-x^2)^2},\qquad  p_y^2 = \frac{H(y^2-1)^2+L_1(y^2-1)-A}{4(1-y^2)^2},
 \end{gather*}
and  the trajectories must satisfy the implicit equation
\begin{gather*} 
0=\left(Hx^2(A+L_1-L_1x^2-Hx^4+2Hx^2-H)y^2+\frac14\big({-}H+L_1+Hx^4+A\big)^2\right)L_2^2 \\
\hphantom{0=}{}
 +\left(3Hx^4L_1-\frac12 H^2x^8-\frac12 A^2+AH -3H^2x^4+L_1x^2A+2H^2x^6+3Hx^4A\right.\\
\left. \hphantom{0=}{}
-AL_1-L_1Hx^6+L_1H+2H^2x^2
-\frac12L_1^2+L_1^2x^2
-\frac12H^2-3L_1x^2H-2Hx^2A\right)yL_2\\
\hphantom{0=}{}+\frac14\big({-}H+L_1+Hx^4+A\big)^2y^2\\
\hphantom{0=}{}-x^2(A+L_1-H)\big(A+L_1-L_1x^2-Hx^4+2Hx^2-H\big).
\end{gather*}
The left-hand side of this equation is quadratic in~$y$, so we can solve for~$y$ as a function of~$x$.  We obtain the solutions
$ y=\frac{N\pm 2\sqrt{S}}{D}$,
where
\begin{gather*}
N= L_2(-1+x)^4(x+1)^4H^2-2L_2\big({-}2Ax^2-3L_1x^2-L_1x^6+3L_1x^4+3Ax^4+A+L_1\big)H\\
\hphantom{N=}{} +L_2(A+L_1)\big(A+L_1-2L_1x^2\big),\\
S= -x^2\big(A+L_1-L_1x^2-Hx^4+2Hx^2-H\big)\big({-}H+L_1+Hx^4+A\big)^2\frac{R^2}{16},\\
D= -(-1+x)^2(x+1)^2\big(x^2+2L_2x+1\big)\big({-}x^2+2L_2x-1\big)H^2\\
\hphantom{D=}{} + \big(4L_2^2Ax^2+2L_1x^4-2L_1+4L_2^2L_1x^2-2A-4L_2^2L_1x^4+2Ax^4\big)H+(A+L_1)^2.
\end{gather*}
Let $(x_0,y_0)$ be a point on the trajectory such that $p_{x_0}=0$. A straightforward calculation gives
\begin{gather*}
x_0^2=\frac{2H-L_1\pm \sqrt{L_1^2+4AH}}{2H},\qquad y_0=-L_2,\\
p_{y_0}^2=-\frac{A-H-HL_2^4+2HL_2^2+L_1-L_1L_2^2}{4(L_2-1)^2(L_2+1)^2}.
\end{gather*}

{\bf Assumption: $A>0$}.
Note that
\begin{gather}\label{paragee1}
x_0^2=1+\left(\frac{-L_1\pm \sqrt{L_1^2+4AH}}{2H}\right).
\end{gather}
It follows that if we are in the region $x>1$ and $H>0$, then since $\sqrt{L_1^2+4AH}>|L_1|$, only the plus
sign is possible in (\ref{paragee1}). We conclude that the trajectory must be unbounded to the right.
Examples are Figs.~\ref{Fig37.pdf} and~\ref{Fig38.pdf}. (The dif\/ferent colors in the f\/igures correspond to dif\/ferent functions describing the trajectories.)
\begin{figure}[t!]\centering
 \includegraphics[width=60mm]{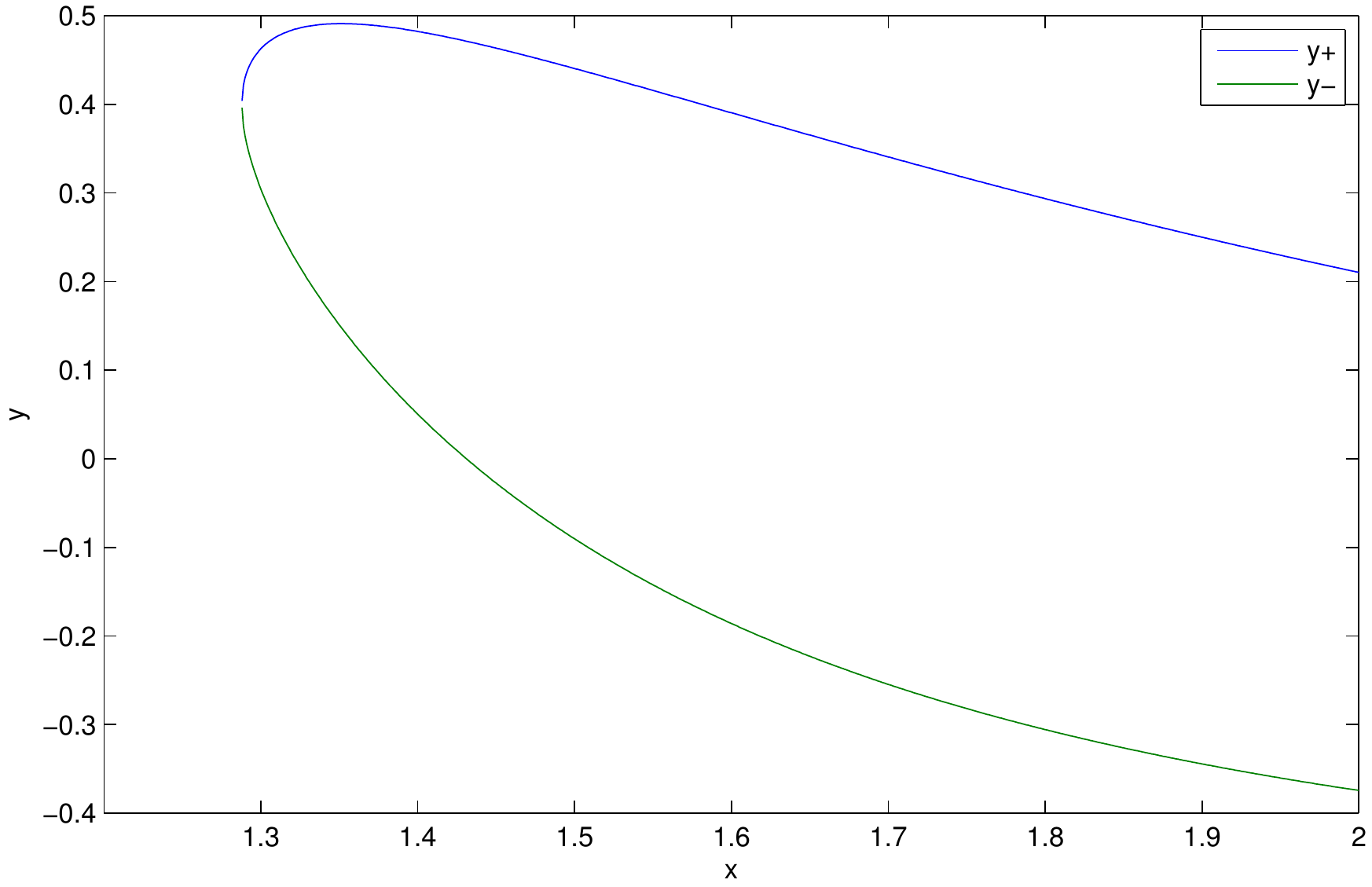}
\caption{Case: $A=0.5$, $L_1=0.1$, $L_2=-0.4$, $H=1$.}\label{Fig37.pdf}
\end{figure}
\begin{figure}[t!]\centering
 \includegraphics[width=60mm]{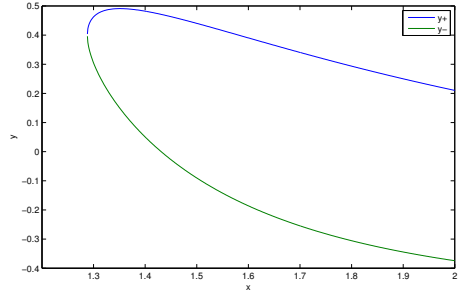}
\caption{Case: $A=1$, $L_1=0.2$, $L_2=0$, $H=2$.}\label{Fig38.pdf}
\end{figure}

{\it We investigate what happens if the trajectory intersects the lines $y=\pm x$.} If the intersection occurs for $y=x$ then we
f\/ind that $x=-L_2$ or $p_x=2p_y$, or both.
If the intersection occurs for $y=-x$ then we f\/ind that $x=L_2$, or $p_x=-2p_y$, or both. In both cases, we have
$ p_x^2=4p_y^2=\frac{H(L_2^2-1)^2+L_1(L_2^2-1)-A}{(L_2^2-1)^2}$.
Thus, in both cases if $|L_2|\ne 1$, the trajectory intersects the lines $y=\pm x$ at f\/inite momentum. However, the velocity is inf\/inite.
Indeed
${\dot x} = \{H,x\} = 2p_x(1-x^2)/(y^2-x^2)$, which blows up unless $ p_x=0$. This suggests that there are no bounded orbits in the
square $-1<x,y<1$. However we f\/ind some bounded orbits in the region $x>1$ for  negative energies.
Examples of these bounded orbits are Figs.~\ref{Fig39.pdf} and~\ref{Fig40.pdf}.
\begin{figure}[t!]\centering
 \includegraphics[width=60mm]{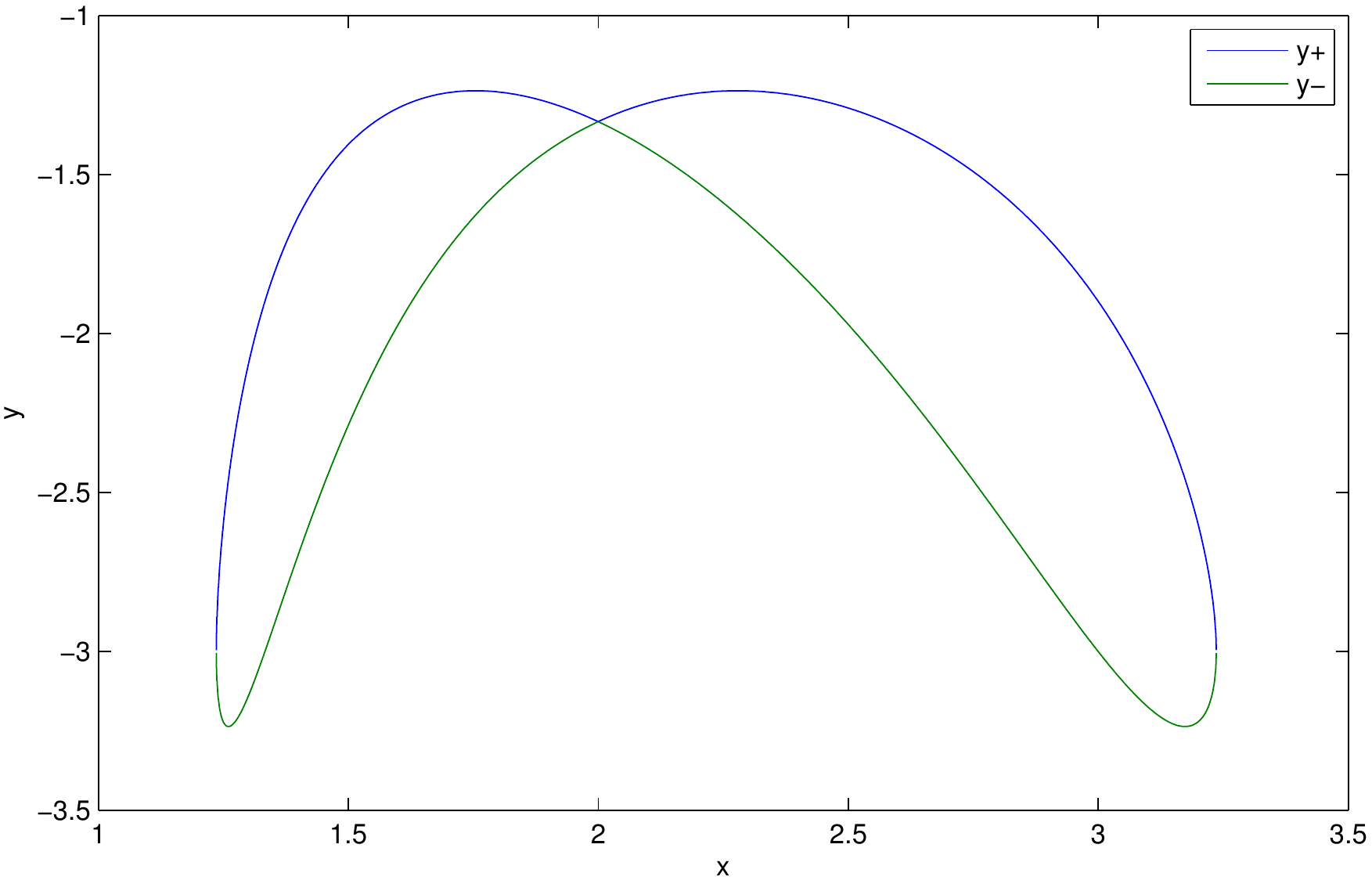}
\caption{Case: $A=5$, $L_1=10$, $L_2=3$, $H=-1$.}\label{Fig39.pdf}
\end{figure}
\begin{figure}[t!]\centering
\includegraphics[width=60mm]{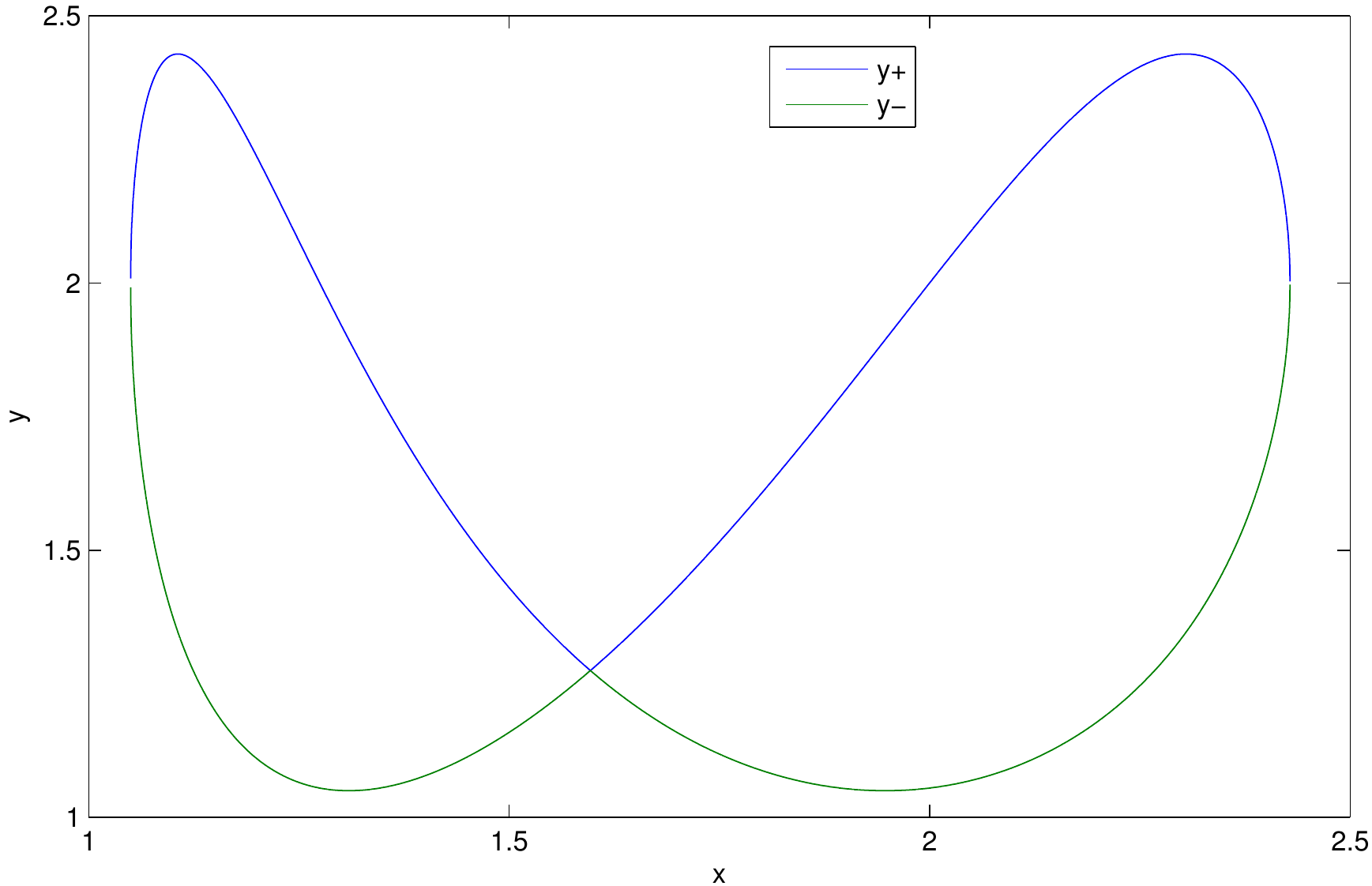}
\caption{Case: $A=1$, $L_1=10$, $L_2=-2$, $H=-2$.}\label{Fig40.pdf}
\end{figure}

\subsection{Another version of the example}
We set $y=iz$ where $z$ is real, multiply $H$ by $-1$ and $L_2$ by $-i$ to get the system
\begin{gather*}
 H=\frac{(1-x^2)p_x^2+4(1+z^2)p_z^2}{x^2+z^2}+\frac{A}{(1-x^2)(1+z^2)}, \\
  L_1=-\frac{(1+z^2)(1-x^2)(p_x^2+4p_z^2)}{x^2+z^2}+\frac{A(x^2-z^2-2)}{(1+z^2)(1-x^2)},\\
  L_2=\frac{zx^4p_x^2-4x^3p_zp_x z^2-4x^3 p_zp_x-2x^2zp_x^2+4x^2p_z^2z+4x^2p_z^2z^3-p_x^2z^3-\frac{ Ax^2z(z^2+x^2)^2}{(1+z^2)(1-x^2)^2}}
{x^4p_x^2-4p_z^2x^2-4 p_xzp_zx-4p_z^2x^2z^2+2p_x^2z^2x^2-p_x^2z^2-4xz^3p_zp_x+\frac{Ax^2(x^2+z^2)^2}{(1+z^2)(1-x^2)^2)}}.
\end{gather*}
 We will consider this system for $A>0$ and resticted to the vertical strip of points $(x,z)$ where $-1<x<1$. Then $H$ must be  $>0$. The motion is unbounded, there are no bounded orbits, and the origin~$(0,0)$ and the boundary lines $x=\pm 1$ are repulsive.

The polynomial structure relations are $R=\{L_1,L_2\}$ and
\begin{gather*}
 \{L_2,R\}+8\big(1+L_2^2\big)=0,\qquad   \{L_1,R\}-16L_2L_1-32HL_2-32L_2^3H=0,\\
  -R^2+16A+16L_1+16L_2^2L_1+32HL_2^2+16L_2^4H+16H=0.
\end{gather*}
 Expressing the momenta in terms of the coordinates we have
 \begin{gather*}
  p_x^2 = -\frac{(A+H+x^4H+L_1-x^2L_1-2Hx^2}{(1-x^2)^2}, \\
  p_z^2 = \frac14\frac{(2Hz^2+Hz^4+L_1z^2+A+H+L_1)}{(1+z^2)^2}.
 \end{gather*}
 The equations for the trajectories turn out to be
 $ y=\frac{N\pm 2\sqrt{S}}{D}$,
 where
\begin{gather*}
 N = L_2\big(1-x^2\big)^4H^2+2L_2\big(L_1+3Ax^4+A-L_1x^6+3L_1x^4-2Ax^2-3L_1x^2\big)H\\
 \hphantom{N=}{} +L_2(L_1+A)\big(L_1+A-2L_1x^2\big),\\
 S= -\frac{x^2}{16}\big(A+H+Hx^4+L_1-L_1x^2-2Hx^2\big)\big(A+H+L_1-Hx^4\big)^2R^2,\\
 D=\big(1-x^2\big)^2  \big(x^4+4 L_2^2 x^2+2 x^2+1\big) H^2\\
 \hphantom{D=}{} +\big({-}2 A x^4+2 L_1-4 L_2^2 L_1 x^4+2 A-2 L_1 x^4+4 L_2^2 L_1 x^2+4 L_2^2 A x^2\big) H+(L_1+A)^2.
\end{gather*}
 The points $(x_0,z_0, p_{x_0},p_{z_0})$ on the trajectory where $p_{x_0}=0$ satisfy
 \begin{gather*}
  z_0 = -L_2,\qquad  p_{z_0}^2 = \frac14 \frac{(A+H(1+L_2^2)^2+L_1+L_1L_2^2)}{(1+L_2^2)^2},\\
   x_0^2 = 1+\frac{L_1\pm \sqrt{L_1^2-4AH}}{2H}.
   \end{gather*}
 Requiring f\/irst that $R^2>0$, we see that $L_1<0$ is a necessary condition for trajectories (otherwise $S<0$ in the strip $-1<x<1$. If
 $|L_1|/2H>1$ then a further necessary condition for $S>0$ in a~strip is $L_1+H+A<0$. If $|L_1|/2H\le 1$ then a necessary condition is $L_1^2-4AH>0$.
Examples here are Figs.~\ref{Fig41.pdf} and~\ref{Fig42.pdf}. Note that here there are 4~dif\/ferent trajectories corresponding to a~single choice of constants of the motion.

\begin{figure}[t!]\centering
 \includegraphics[width=65mm]{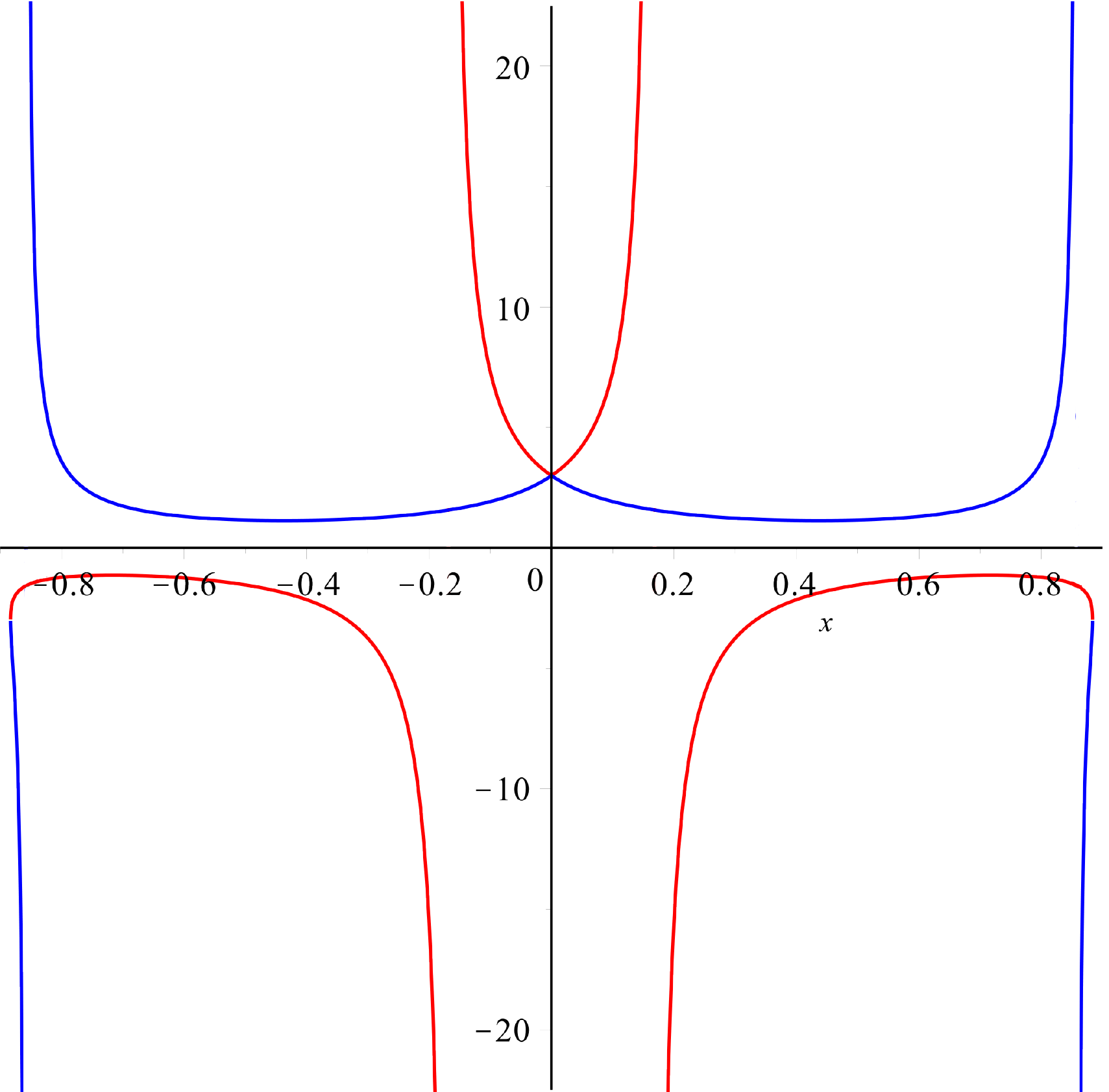}
\caption{Case: $A=1$, $L_1=-5$, $L_2=3$, $H=2$.}\label{Fig41.pdf}
\end{figure}
\begin{figure}[t!]\centering
 \includegraphics[width=65mm]{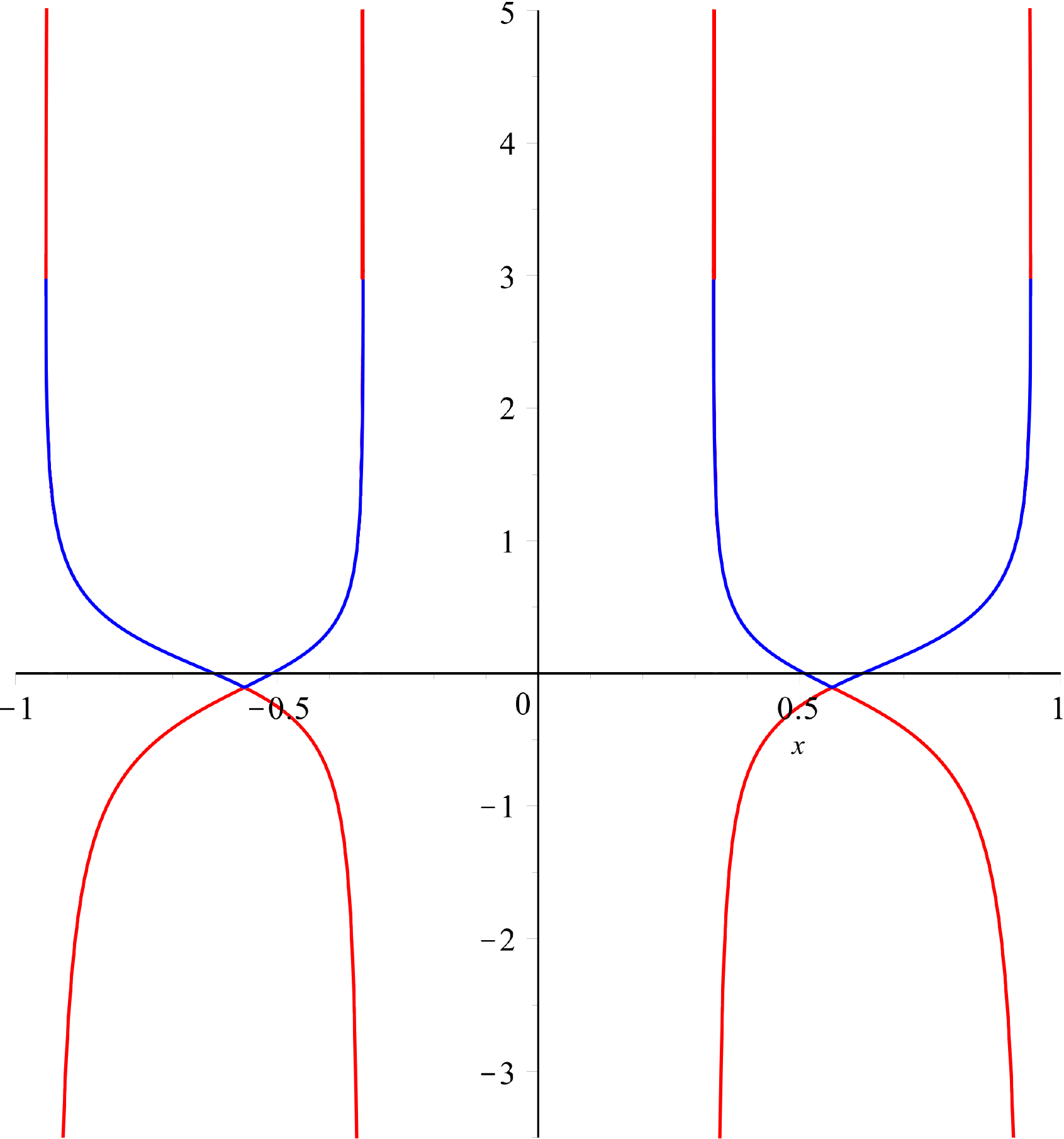}
\caption{Case: $A=1$, $L_1=-10$, $L_2=-3$, $H=10$.}\label{Fig42.pdf}
\end{figure}

{\bf Case when} $R^2=0$. When $L_1=-9$, $L_2=0$, $A=1$, $H=8$, we have $R^2=0$ such that $S=0$. The trajectory in this case is a straight line on the x-axis. The two boundary points where $p_x=0$ for $x>0$ are $0$ and $0.9354$.
The velocity of the particle is $\dot{x}=\{H,x\}=\frac{p_x(1-x^2)}{x^2+z^2}$. Substituting $z=0$ and the expression for $p_x$ in terms of $x$ into the equation gives,$\dot{x}=\sqrt{\frac{L_1+2H}{x^2}-H} $
Thus $\displaystyle \lim_{x \to 0} \dot{x}=\infty$, so the particle will go through the origin instead of bouncing back.

\appendix
\section{Review of some  basic concepts in Hamiltonian mechanics} \label{appendixA}

Hamiltonian mechanics is a reformulation of classical Newtonian mechanics alternate to  Lagrangian mechanics. In the Hamiltonian formalism a  physical system describing the motion of
a~particle at time~$t$ involves $n$ generalized coordinates~$q_j(t)$, and $n$ generalized momenta~$p_j(t)$. The phase space of  the  system is
described by points $(p_j, q_j)\in  {\mathbb R}^{2n}$. The generalized coordina\-tes~$q_j's$ and generalized momenta~$p_j's$ are derived from the Lagrangian formulation.
The dynamics of the system are given by
Hamilton's equations~\cite{Arnold, Goldstein}
\begin{gather}\label{hameqns}   \frac{d q_j}{dt}  =+\frac{\partial {\cal H}}{\partial p_j}, \qquad
\frac{dp_j}{dt}=-\frac{\partial {\cal H}}{\partial q_j},
 \end{gather}
where ${\cal H}={\cal H}(q_1,\dots, q_n, p_1,\dots, p_n, t)$ is the Hamiltonian, which in this paper corresponds to the total (time-independent) energy of the system:
$ {\cal H} = T + V$,
where $T$ and $V$ are kinetic and potential energy, respectively. Solutions of these  equations give the trajectories of the  system. Here~$T$ is a function of ${\bf q}$ alone while $V$ is a function of ${\bf q}$ alone. Explicitly,
\begin{gather}\label{relham} {\cal H} = \frac{1}{2m}\sum_{j,k}g^{jk}({\bf q}) p_jp_k+ V ({\bf q}),
 \end{gather}
where $g^{jk}$ is a contravariant metric tensor on some real or complex Riemannian  manifold.
That is $g^{-1}=\det(g^{jk})\ne 0$, $g^{jk}=g^{kj}$  and the metric on the manifold is given by $ds^2=\sum\limits_{j,k=1}^ng_{jk}dq^j  dq^k$, where $(g_{jk})$ is
the covariant metric tensor, the matrix inverse to $(g^{jk})$. Under a local  transformation $q'_j=f_j({\bf q})$
the contravariant tensor and  momenta transform according to
\begin{gather*}
(g')^{\ell h}=\sum_{j,k} \frac{\partial q'_\ell}{\partial q_j}\frac{\partial q'_h}{\partial q_k}g^{jk},\qquad
 p'_\ell =\sum_{j=1}^n \frac{\partial q_j}{\partial q'_\ell} p_j,
 \end{gather*}
so ${\cal H}$ is  coordinate independent. Here $m$ is a scaling parameter that can often be interpreted as the mass of the particle.
For the Hamiltonian~(\ref{relham}) the relation  between  momenta and the velocities is
$ p_j=m\sum\limits_{\ell=1}^ng_{j\ell}\dot{q}_\ell$,
so that
$T=\frac{1}{2m}\sum\limits_{j,k}g^{jk}({\bf q}) p_jp_k=\frac{m}{2}\sum\limits_{\ell,h=1}^n g_{\ell h}({\bf q})\dot{q}_\ell\dot{q}_h$.
Once the velocities are given, the momenta are scaled linearly in~$m$. In mechanics the exact value of~$m$ may be important, but for our purposes  can be scaled to any nonzero value using the above formulas. To make direct contact with mechanics we will often set~$m=1$;
for  structure calculations we will usually set~$m=1/2$.

 The {\it Poisson bracket} of two arbitrary functions ${\cal A}({\bf p}, {\bf q})$, ${\cal B}({\bf p}, {\bf  q})$  on the phase space is the function
 \begin{gather*}
 \{{\cal A},{\cal B}\}({\bf p},{\bf q})=\sum_{j=1}^n\left(\frac{\partial{\cal A}}{\partial p_j}
 \frac{\partial{\cal B}}{\partial q_j}
-\frac{\partial{\cal A}}{\partial q_j}\frac{\partial{\cal B}}{\partial p_j}\right).
\end{gather*}
The Poisson bracket obeys the following properties, for ${\cal A}$, ${\cal  B}$, ${\cal C}$
functions on the phase space and~$a$,~$b$ constants,
\begin{gather*}
 \text{anti-symmetry:}  \quad \{{\cal A},{\cal B}\}=-\{{\cal B},{\cal A}\}, \\
\text{Bilinearity:}  \quad \{{\cal A}, m{\cal B}+ n{\cal C}\}= m\{{\cal A}, {\cal B}\} + n\{{\cal A},{\cal C}\},  \\
\text{Jacobi identity:}  \quad \{{\cal A},\{ {\cal B},{\cal C}\}\}  + \{{\cal B}, \{{\cal C}, {\cal A}\}\}+\{{\cal C}, \{{\cal A},{\cal B}\}\}
 = 0, \\
\text{Leibniz  rule:} \quad \{{\cal A}, {\cal BC}\}= \{{\cal A},{\cal  B}\}{\cal C}+{\cal B}\{{\cal A},{\cal C}\}, \\
\text{chain\ rule:} \quad \{f({\cal A}),{\cal B}\}=f'({\cal A})\{{\cal A},{\cal B}\}.
\end{gather*}
In terms of the Kronecker delta $\delta_{jk}$, coordinates $ ({\bf q}, {\bf  p})$ satisfy canonical relations
$\{p_j, p_k\} = \{q_j, q_k\} = 0$, and $\{p_j, q_k\} = \delta_{jk}$.
Using the Poisson bracket, we can rewrite Hamilton's equations~(\ref{hameqns}) as
\begin{gather*}
\{{\cal H},q_j\}=\frac{dq_j}{dt}=\frac{\partial {\cal H}}{\partial p_j} , \qquad
 \{{\cal H}, p_j\}=\frac{dp_j}{dt}=\frac{\partial {\cal H}}{\partial q_j}.
\end{gather*}
For any function  ${\cal F}({\bf q},{\bf  p})$, its dynamics
along a trajectory ${\bf q}(t)$, ${\bf p}(t)$ is
$ \frac{d{\cal F}}{dt}=\{{\cal H}, {\cal F}\}$.
Thus  ${\cal F}({\bf q}, {\bf p})$  will be constant along a trajectory if and only $\{{\cal H},{\cal F}\}=0$.
  If $\{{\cal H},{\cal F}\}=0$,   we say that~$\cal F$  is a {\it constant of the motion}.

\subsection{Classical integrability}
 A system with Hamiltonian ${\cal H}$ is {\it integrable} if it admits $n$ constants of
the motion ${\cal L}_1 = {\cal H}$, ${\cal L}_2,\dots,{\cal L}_n$
that are in involution:
\begin{gather}\label{involution}
\{{\cal L}_j,{\cal L}_k\}=0, \qquad 1\le j,k\le n,
\end{gather}
and are functionally independent in the sense that $\det\big(\frac{\partial {\cal  L}_j}{\partial p_k}\big)\ne 0$.
Suppose~$\cal H$ is integrable with associated constants of the motion~${\cal L}_j$. Then by the inverse function theorem we can
solve the $n$ equations ${\cal L}_j({\bf q}, {\bf p}) = c_j$ for the momenta
to obtain $p_k = p_k({\bf q}, {\bf c})$, $ k = 1,\dots, n$, where ${\bf c} = (c_1,\dots, c_n)$ is a vector of constants.
For an integrable system, if
a particle with position $\bf q$  lies on the common intersection of the hypersurfaces ${\cal L}_j = c_j$
for constants $c_j$, then its momentum~$\bf p$ is completely determined.
Also, if a particle following a trajectory of an integrable system lies
on the common intersection of the hypersurfaces ${\cal L}_j = c_j$ at  time $t_0$, where the ${\cal L}_j$ are constants of the motion, then it
lies on the same common intersection for all $t$ near $t_0$.
Considering $p_j({\bf q},{\bf  c})$ and using the conditions~(\ref{involution}) and the chain rule it is straightforward
to verify $\frac{\partial p_j}{\partial q_k}=\frac{\partial p_k}{\partial q_j}$.
Therefore, there exists a function
$u({\bf q},{\bf  c})$ such that $p_\ell= \frac{\partial u}{\partial q_\ell}$, $\ell=1,\dots,n$. Note that
$ {\cal P}_j({\bf q},\frac{\partial u}{\partial {\bf q}})=c_j$, $j=1,\dots,n$,
and, in particular, $u$ satisf\/ies the {\it Hamilton--Jacobi equation}
\begin{gather}\label{hamjaceqn}
{\cal H}\left({\bf q},\frac{\partial u}{\partial {\bf q}}\right)=E,
\end{gather}
where $E = c_1$. By construction $\det(\frac{\partial u}{\partial q_j\partial c_k})\ne 0,$
 and such a solution of the Hamilton--Jacobi equation depending nontrivially on~$n$ parameters~$\bf c$ is called a {\it complete integral}.
 This argument is reversible:
a complete integral of~(\ref{hamjaceqn}) determines $n$ constants of
the motion in involution, ${\cal P}_1,\dots, {\cal P}_n$.

\begin{Theorem} \label{hjtheorem}
A system is integrable if and only if~\eqref{hamjaceqn}  admits
a complete integral.
\end{Theorem}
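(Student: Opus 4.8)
The plan is to prove the two implications separately, making precise the Hamilton--Jacobi correspondence already sketched in the text, and to isolate the one nontrivial computation as a lemma used in both directions.

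For the forward direction, I would start from constants of the motion ${\cal L}_1={\cal H},{\cal L}_2,\dots,{\cal L}_n$ in involution with $\det\big(\partial{\cal L}_j/\partial p_k\big)\ne 0$. By the inverse function theorem this lets me solve ${\cal L}_j({\bf q},{\bf p})=c_j$ locally for the momenta, $p_k=p_k({\bf q},{\bf c})$. The key step is to show $\partial p_j/\partial q_k=\partial p_k/\partial q_j$: differentiate the identities ${\cal L}_j\big({\bf q},{\bf p}({\bf q},{\bf c})\big)=c_j$ with respect to $q_k$, write the involution relations $\{{\cal L}_i,{\cal L}_j\}=0$ out in coordinates, and combine them via the chain rule so that the antisymmetric part cancels. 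Closedness of the $1$-form $\sum_\ell p_\ell\,dq_\ell$ then produces $u({\bf q},{\bf c})$ with $p_\ell=\partial u/\partial q_\ell$; substituting into ${\cal L}_1={\cal H}$ gives the Hamilton--Jacobi equation ${\cal H}({\bf q},\partial u/\partial{\bf q})=E$ with $E=c_1$, while the rank condition on $\big(\partial{\cal L}_j/\partial p_k\big)$ transcribes into $\det\big(\partial^2 u/\partial q_j\partial c_k\big)\ne 0$, so $u$ is a complete integral.

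For the converse, I would take a complete integral $u({\bf q},{\bf c})$ with $\det\big(\partial^2 u/\partial q_j\partial c_k\big)\ne 0$. That determinant condition lets me invert $p_\ell=\partial u/\partial q_\ell({\bf q},{\bf c})$ to recover ${\bf c}$ as functions ${\cal P}_j({\bf q},{\bf p})$ of the phase-space variables, with ${\cal P}_1={\cal H}$ (up to the additive constant $E=c_1$, by the Hamilton--Jacobi equation). Functional independence of the ${\cal P}_j$ is immediate from the invertible Jacobian. For involution, I would fix ${\bf c}$, restrict to the submanifold $\Lambda_{\bf c}=\{p_\ell=\partial u/\partial q_\ell\}$ on which every ${\cal P}_j$ is constant, show the Hamiltonian vector fields $X_{{\cal P}_j}$ are tangent to $\Lambda_{\bf c}$, and conclude that $\{{\cal P}_j,{\cal P}_k\}$—being constant along $\Lambda_{\bf c}$ and expressible through the differentials of the ${\cal P}_i$ there—vanishes identically. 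Since ${\cal P}_1={\cal H}$, the relations $\{{\cal H},{\cal P}_j\}=0$ show that the ${\cal P}_j$ are genuine constants of the motion, so the system is integrable.

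The main obstacle is the single algebraic fact underlying both directions: that vanishing Poisson brackets of the defining functions force the implicitly defined $1$-form $\sum_\ell p_\ell\,dq_\ell$ to be closed (equivalently, force the graph $p_\ell=\partial u/\partial q_\ell$ to be a Lagrangian submanifold). I would state and prove this once as a lemma—essentially the symmetry $\partial p_j/\partial q_k=\partial p_k/\partial q_j$ derived from the involution identities—and then apply it in the forward direction to build $u$ and in the converse to verify $\{{\cal P}_j,{\cal P}_k\}=0$. Everything else is the inverse and implicit function theorems together with the chain rule already invoked in the discussion preceding the statement.
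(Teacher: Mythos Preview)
Your proposal is correct and follows essentially the same approach as the paper. In fact, the paper gives no formal proof: the forward implication is sketched in the paragraph preceding the theorem (inverse function theorem, then $\partial p_j/\partial q_k=\partial p_k/\partial q_j$ from involution via the chain rule, then the existence of $u$), and the converse is disposed of with the single sentence ``This argument is reversible.'' Your plan simply fleshes out that sketch, adding the Lagrangian-submanifold viewpoint to justify the reversibility claim that the paper leaves to the reader.
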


 A powerful method for demonstrating that a system is integrable is
to  exhibit a complete integral by using  additive separation of
variables.

It is a standard result in classical mechanics that  for an integrable system one can
 integrate Hamilton's equations and obtain the trajectories~\cite{Arnold, Goldstein}.
The Hamiltonian
formalism is  is well suited to exploiting symmetries of the system and  an important tool in laying the
framework for quantum mechanics.

\subsection{Classical superintegrability}

Let ${\cal F}=(f_1({\bf q},{\bf p}), \dots, f_N({\bf q},{\bf p}))$ be a set of $N$ functions def\/ined and locally analytic in some region of
 a $2n$-dimensional phase space. We say ${\cal F}$ is {\it functionally independent}  if the $N\times 2n$ matrix
$\left( \frac{\partial f_\ell}{\partial  q_j},\frac{\partial f_\ell}{\partial p_k}\right)$
has rank $N$ throughout the region. Necessarily  $N\le 2n$. The set is {\it functionally dependent} if the rank is strictly less than $N$ on the region.
In this case  there is a nonzero  analytic function $F$ of $N$ variables such that $F(f_1,\dots,f_n)=0$ identically on the region. Conversely, if~$F$
exists then the rank of the matrix is~$<N$.
We say that the Hamiltonian  system ${\cal H}$ is { (polynomially)  integrable} if there exists  a set of $N=n$  constants of the motion ${\cal L}_1={\cal H},\dots,{\cal L}_n$,
each polynomial in the momenta
 globally def\/ined (except possibly for isolated singularities),  that  is functionally independent and in involution: $\{{\cal L}_j,{\cal L}_k\}$=0 for $1\le j,k\le n$.
A~classical Hamiltonian system in $n$ dimensions is {\it maximally $($polynomially$)$ superintegrable} if it admits $2n-1$ functionally independent,
globally def\/ined constants, polynomial in the momenta (the maximum number possible~\cite{MPW}).
 At most $n$ functionally independent constants of the motion can be in mutual involution~\cite{Arnold}.
 However, several distinct $n$-subsets of the $2n-1$ polynomial constants of the motion
for a superintegrable system could be in involution. In that case the system is {\it multi-integrable}. An important  feature of
superintegrable systems is that the orbits traced out by the trajectories  can be determined algebraically, without the need for integration.
Along any trajectory each of the symmetries  is constant: ${\cal L}_s=c_s$, $s=1,\dots,2n-1$. Each equation ${\cal L}_s({\bf q},{\bf p})=c_s$ determines a
$(2n-1)$-dimensional hypersurface in the $2n$-dimensional phase space, and the trajectory must lie in that hypersurface. Thus the trajectory lies in the
 common intersection of $2n-1$
independent hypersurfaces; hence it  must be a curve. Another important feature of the trajectories is that all bounded orbits are periodic~\cite{nekhoroshev1972}.

The polynomial constants of the motion for a system with Hamiltonian ${\cal H}$ form the (polynomial) {\it symmetry algebra} $S_{\cal H}$ of the system,
closed under scalar multiplication and addition, multiplication and the Poisson bracket:
Indeed if  $\cal H$  is a Hamiltonian with constants of the motion~${\cal L}$,~${\cal K}$. Then $\alpha {\cal L}
+\beta {\cal K}$, ${\cal L}{\cal K}$ and $\{{\cal L},{\cal K}\}$ are also constants of the motion.
  The $n$ def\/ining constants of the motion of a polynomially integrable system do not generate a very interesting symmetry algebra, because all Poisson brackets  vanish. However, for
the $2n-1$ generators of a~polynomial superintegrable system the brackets cannot all vanish and the symmetry algebra has nontrivial structure.
The {\it degree} (or {\it order})  $O({\cal L})$ of a polynomial constant of the motion $\cal L$ is its degree as a polynomial in the momenta.  Here ${\cal H}$ has degree~2. The {\it degree} $O(F_k)$  of a~set of generators $F_k=\{ {\cal L}_1,\dots,{\cal L}_k\}$, is the maximum degree of the generators.
Let $S=S_{F_k}$ be a~symmetry algebra of a~Hamiltonian system generated by the set $F_k$. Clearly, many dif\/ferent sets~$F'_{k'}$ can generate the same symmetry algebra. Among all these there will be a~set of genera\-tors~$F^0_{k_0}$ for which $\ell=O(F^0_{k_0})$ is a~minimum. Here~$\ell$ is unique, although~$F^0_{k_0}$ is not. We def\/ine the {\it degree} (or {\it order})  of~$S$ to be~$\ell$.
The theory of 2nd degree superintegrable systems has been developed in papers such as~\cite{Dask2001,4,5,KKM20051,KKM20061,KKMW,VILE,RTW2009}; all such systems are known as are the structures of the symmetry algebras, all of which close at degree~6.

\subsection*{Acknowledgement} This work was partially supported by a grant from the Simons Foundation (\# 208754 to Willard Miller, Jr.). We thank Galliano Valent for correcting an error in an earlier draft.

\pdfbookmark[1]{References}{ref}
\LastPageEnding


\begin{thebibliography}{99}
\footnotesize\itemsep=0pt

\bibitem{Arnold}
Arnold V.I., Mathematical methods of classical mechanics, \href{http://dx.doi.org/10.1007/978-1-4757-2063-1}{\textit{Graduate
  Texts in Mathematics}}, Vol.~60, Springer-Verlag, New York~-- Heidelberg,
  1978.

\bibitem{BEHRR}
Ballesteros {\'A}., Enciso A., Herranz F.J., Ragnisco O., Riglioni D., Quantum
  mechanics on spaces of nonconstant curvature: the oscillator problem and
  superintegrability, \href{http://dx.doi.org/10.1016/j.aop.2011.03.002}{\textit{Ann. Physics}} \textbf{326} (2011), 2053--2073,
  \href{http://arxiv.org/abs/1102.5494}{arXiv:1102.5494}.

\bibitem{BEHRR2}
Ballesteros {\'A}., Enciso A., Herranz F.J., Ragnisco O., Riglioni D.,
  Superintegrable oscillator and {K}epler systems on spaces of nonconstant
  curvature via the {S}t\"ackel transform, \href{http://dx.doi.org/10.3842/SIGMA.2011.048}{\textit{SIGMA}} \textbf{7} (2011),
  048, 15~pages, \href{http://arxiv.org/abs/1103.4554}{arXiv:1103.4554}.

\bibitem{ballesteros2013anisotropic}
Ballesteros {\'A}., Herranz F.J., Musso F., The anisotropic oscillator on the
  2{D} sphere and the hyperbolic plane, \href{http://dx.doi.org/10.1088/0951-7715/26/4/971}{\textit{Nonlinearity}} \textbf{26}
  (2013), 971--990, \href{http://arxiv.org/abs/1207.0071}{arXiv:1207.0071}.

\bibitem{KC}
Capel J.J., Kress J.M., Invariant classif\/ication of second-order conformally
  f\/lat superintegrable systems, \href{http://dx.doi.org/10.1088/1751-8113/47/49/495202}{\textit{J.~Phys.~A: Math. Theor.}} \textbf{47}
  (2014), 495202, 33~pages, \href{http://arxiv.org/abs/1406.3136}{arXiv:1406.3136}.

\bibitem{CDR2008}
Chanu C., Degiovanni L., Rastelli G., Superintegrable three-body systems on the
  line, \href{http://dx.doi.org/10.1063/1.3009575}{\textit{J.~Math. Phys.}} \textbf{49} (2008), 112901, 10~pages,
  \href{http://arxiv.org/abs/0802.1353}{arXiv:0802.1353}.

\bibitem{Curtis}
Curtis H.D., Orbital mechanics for engineering students, 3rd ed., Elsevier,
  2013.

\bibitem{Dask2001}
Daskaloyannis C., Quadratic {P}oisson algebras of two-dimensional classical
  superintegrable systems and quadratic associative algebras of quantum
  superintegrable systems, \href{http://dx.doi.org/10.1063/1.1348026}{\textit{J.~Math. Phys.}} \textbf{42} (2001),
  1100--1119, \href{http://arxiv.org/abs/math-ph/0003017}{math-ph/0003017}.

\bibitem{EVA}
Evans N.W., Superintegrability in classical mechanics, \href{http://dx.doi.org/10.1103/PhysRevA.41.5666}{\textit{Phys. Rev.~A}}
  \textbf{41} (1990), 5666--5676.

\bibitem{EvansVerrier2008}
Evans N.W., Verrier P.E., Superintegrability of the caged anisotropic
  oscillator, \href{http://dx.doi.org/10.1063/1.2988133}{\textit{J.~Math. Phys.}} \textbf{49} (2008), 092902, 10~pages,
  \href{http://arxiv.org/abs/0808.2146}{arXiv:0808.2146}.

\bibitem{FORDY}
Fordy A.P., Quantum super-integrable systems as exactly solvable models,
  \href{http://dx.doi.org/10.3842/SIGMA.2007.025}{\textit{SIGMA}} \textbf{3} (2007), 025, 10~pages, \href{http://arxiv.org/abs/math-ph/0702048}{math-ph/0702048}.

\bibitem{Goldstein}
Goldstein H., Poole C.P., Safko J.L., Classical mechanics, Addison-Wesley
  Press, Inc., Boston, 2001.

\bibitem{Zhedanov1992b}
Granovskii Y.I., Zhedanov A.S., Lutsenko I.M., Quadratic algebras and dynamics
  in curved space. {II}.~{T}he {K}epler problem, \href{http://dx.doi.org/10.1007/BF01017335}{\textit{Theoret. and Math.
  Phys.}} \textbf{91} (1992), 604--612.

\bibitem{Higgs}
Higgs P.W., Dynamical symmetries in a spherical geometry.~{I},
  \href{http://dx.doi.org/10.1088/0305-4470/12/3/006}{\textit{J.~Phys.~A: Math. Gen.}} \textbf{12} (1979), 309--323.

\bibitem{Wigner}
In\"on\"u E., Wigner E.P., On the contraction of groups and their
  representations, \textit{Proc. Nat. Acad. Sci. USA} \textbf{39} (1953),
  510--524.

\bibitem{4}
Kalnins E.G., Kress J.M., Miller Jr. W., Second order superintegrable systems
  in conformally f\/lat spaces. {I}.~{T}wo-dimensional classical structure
  theory, \href{http://dx.doi.org/10.1063/1.1897183}{\textit{J.~Math. Phys.}} \textbf{46} (2005), 053509, 28~pages.

\bibitem{5}
Kalnins E.G., Kress J.M., Miller Jr. W., Second order superintegrable systems
  in conformally f\/lat spaces. {II}.~{T}he classical two-dimensional {S}t\"ackel
  transform, \href{http://dx.doi.org/10.1063/1.1894985}{\textit{J.~Math. Phys.}} \textbf{46} (2005), 053510, 15~pages.

\bibitem{KKM20051}
Kalnins E.G., Kress J.M., Miller Jr. W., Second order superintegrable systems
  in conformally f\/lat spaces. {III}.~{T}hree-dimensional classical structure
  theory, \href{http://dx.doi.org/10.1063/1.2037567}{\textit{J.~Math. Phys.}} \textbf{46} (2005), 103507, 28~pages.

\bibitem{KKM20061}
Kalnins E.G., Kress J.M., Miller Jr. W., Second-order superintegrable systems
  in conformally f\/lat spaces. {V}.~{T}wo- and three-dimensional quantum
  systems, \href{http://dx.doi.org/10.1063/1.2337849}{\textit{J.~Math. Phys.}} \textbf{47} (2006), 093501, 25~pages.

\bibitem{KKM10}
Kalnins E.G., Kress J.M., Miller Jr. W., Families of classical subgroup
  separable superintegrable systems, \href{http://dx.doi.org/10.1088/1751-8113/43/9/092001}{\textit{J.~Phys.~A: Math. Theor.}}
  \textbf{43} (2010), 092001, 8~pages, \href{http://arxiv.org/abs/0912.3158}{arXiv:0912.3158}.

\bibitem{KKM10b}
Kalnins E.G., Kress J.M., Miller Jr. W., Tools for verifying classical and
  quantum superintegrability, \href{http://dx.doi.org/10.3842/SIGMA.2010.066}{\textit{SIGMA}} \textbf{6} (2010), 066, 23~pages,
  \href{http://arxiv.org/abs/1006.0864}{arXiv:1006.0864}.

\bibitem{KKMW}
Kalnins E.G., Kress J.M., Miller Jr. W., Winternitz P., Superintegrable systems
  in {D}arboux spaces, \href{http://dx.doi.org/10.1063/1.1619580}{\textit{J.~Math. Phys.}} \textbf{44} (2003), 5811--5848,
  \href{http://arxiv.org/abs/math-ph/0307039}{math-ph/0307039}.

\bibitem{KKMP}
Kalnins E.G., Kress J.M., Pogosyan G.S., Miller Jr. W., Completeness of
  superintegrability in two-dimensional constant-curvature spaces,
  \href{http://dx.doi.org/10.1088/0305-4470/34/22/311}{\textit{J.~Phys.~A: Math. Gen.}} \textbf{34} (2001), 4705--4720,
  \href{http://arxiv.org/abs/math-ph/0102006}{math-ph/0102006}.

\bibitem{KMP2013}
Kalnins E.G., Miller Jr. W., Post S., Contractions of 2{D} 2nd order quantum
  superintegrable systems and the {A}skey scheme for hypergeometric orthogonal
  polynomials, \href{http://dx.doi.org/10.3842/SIGMA.2013.057}{\textit{SIGMA}} \textbf{9} (2013), 057, 28~pages,
  \href{http://arxiv.org/abs/1212.4766}{arXiv:1212.4766}.

\bibitem{Koenigs}
Koenigs G., Sur les g\'eod\'esiques \`a int\'egrales quadratiques, in Darboux~G.,
  Lecons sur la th\'eorie g\'en\'erale des surfaces et les applications
  geom\'etriques du calcul inf\/initesimal, Vol.~4, Chelsea, New York, 1972,
  368--404.

\bibitem{Negro}
Kuru S., Negro J., `Spectrum generating algebras' of classical systems: the
  Kepler--Coulomb potential, \href{http://dx.doi.org/10.1088/1742-6596/343/1/012063}{\textit{J.~Phys. Conf. Ser.}} \textbf{343} (2012),
  012063, 5~pages.

\bibitem{Ragnisco}
Latini D., Ragnisco O., The classical {T}aub--{N}ut system: factorization,
  spectrum generating algebra and solution to the equations of motion,
  \href{http://dx.doi.org/10.1088/1751-8113/48/17/175201}{\textit{J.~Phys.~A: Math. Theor.}} \textbf{48} (2015), 175201, 13~pages,
  \href{http://arxiv.org/abs/1411.3571}{arXiv:1411.3571}.

\bibitem{VILE}
L{\'e}tourneau P., Vinet L., Superintegrable systems: polynomial algebras and
  quasi-exactly solvable {H}amiltonians, \href{http://dx.doi.org/10.1006/aphy.1995.1094}{\textit{Ann. Physics}} \textbf{243}
  (1995), 144--168.

\bibitem{MPY}
Maciejewski A.J., Przybylska M., Yoshida H., Necessary conditions for classical
  super-integrability of a~certain family of potentials in constant curvature
  spaces, \href{http://dx.doi.org/10.1088/1751-8113/43/38/382001}{\textit{J.~Phys.~A: Math. Theor.}} \textbf{43} (2010), 382001,
  15~pages, \href{http://arxiv.org/abs/1004.3854}{arXiv:1004.3854}.

\bibitem{MSVW}
Makarov A.A., Smorodinsky J.A., Valiev K., Winternitz P., A~systematic search
  for nonrelativistic systems with dynamical symmetries. Part~I: The integrals
  of motion, \href{http://dx.doi.org/10.1007/BF02755212}{\textit{Nuovo Cimento}} \textbf{52} (1967), 1061--1084.

\bibitem{Marquette2012}
Marquette I., Classical ladder operators, polynomial {P}oisson algebras, and
  classif\/ication of superintegrable systems, \href{http://dx.doi.org/10.1063/1.3676075}{\textit{J.~Math. Phys.}}
  \textbf{53} (2012), 012901, 12~pages, \href{http://arxiv.org/abs/1109.4471}{arXiv:1109.4471}.

\bibitem{MayrandVinet1}
Mayrand M., Vinet L., Hidden symmetries of two-dimensional systems with local
  degeneracies, \href{http://dx.doi.org/10.1063/1.529946}{\textit{J.~Math. Phys.}} \textbf{33} (1992), 203--212.

\bibitem{MPW}
Miller Jr. W., Post S., Winternitz P., Classical and quantum superintegrability
  with applications, \href{http://dx.doi.org/10.1088/1751-8113/46/42/423001}{\textit{J.~Phys.~A: Math. Theor.}} \textbf{46} (2013),
  423001, 97~pages, \href{http://arxiv.org/abs/1309.2694}{arXiv:1309.2694}.

\bibitem{nekhoroshev1972}
Nehoro{\v{s}}ev N.N., Action-angle variables, and their generalizations,
  \textit{Trans. Moscow Math. Soc.} \textbf{26} (1972), 180--198.

\bibitem{DLMF}
Olver F.W.J., Lozier D.W., Boisvert R.F., Clark C.W. (Editors), N{IST} handbook
  of mathematical functions, U.S.~Department of Commerce, National Institute of
  Standards and Technology, Washington, DC, Cambridge University Press,
  Cambridge, 2010.

\bibitem{PW2011}
Post S., Winternitz P., A nonseparable quantum superintegrable system in 2{D}
  real {E}uclidean space, \href{http://dx.doi.org/10.1088/1751-8113/44/16/162001}{\textit{J.~Phys.~A: Math. Theor.}} \textbf{44} (2011),
  162001, 8~pages, \href{http://arxiv.org/abs/1101.5405}{arXiv:1101.5405}.

\bibitem{RTW2009}
Rodr\'{\i}guez M.A., Tempesta P., Winternitz P., Symmetry reduction and
  superintegrable Hamiltonian systems, \href{http://dx.doi.org/10.1088/1742-6596/175/1/012013}{\textit{J.~Phys. Conf. Ser.}}
  \textbf{175} (2009), 012013, 8~pages.

\bibitem{TempTW}
Tempesta P., Turbiner A.V., Winternitz P., Exact solvability of superintegrable
  systems, \href{http://dx.doi.org/10.1063/1.1386927}{\textit{J.~Math. Phys.}} \textbf{42} (2001), 4248--4257,
  \href{http://arxiv.org/abs/hep-th/0011209}{hep-th/0011209}.

\bibitem{SCQS}
Tempesta P., Winternitz P., Harnad J., Miller W., Pogosyan G., Rodriguez M.
  (Editors), Superintegrability in classical and quantum systems, \textit{CRM
  Proceedings and Lecture Notes}, Vol.~37, Amer. Math. Soc., Providence, RI,
  2004.

\bibitem{TTW1}
Tremblay F., Turbiner A.V., Winternitz P., An inf\/inite family of solvable and
  integrable quantum systems on a plane, \href{http://dx.doi.org/10.1088/1751-8113/42/24/242001}{\textit{J.~Phys.~A: Math. Theor.}}
  \textbf{42} (2009), 242001, 10~pages, \href{http://arxiv.org/abs/0904.0738}{arXiv:0904.0738}.

\bibitem{TTW2}
Tremblay F., Turbiner A.V., Winternitz P., Periodic orbits for an inf\/inite
  family of classical superintegrable systems, \href{http://dx.doi.org/10.1088/1751-8113/43/1/015202}{\textit{J.~Phys.~A: Math.
  Theor.}} \textbf{43} (2010), 015202, 14~pages, \href{http://arxiv.org/abs/0910.0299}{arXiv:0910.0299}.

\bibitem{tsiganov2008maximally}
Tsiganov A.V., On maximally superintegrable systems, \href{http://dx.doi.org/10.1134/S1560354708030040}{\textit{Regul. Chaotic
  Dyn.}} \textbf{13} (2008), 178--190, \href{http://arxiv.org/abs/0711.2225}{arXiv:0711.2225}.

\end{thebibliography}
\end{document}